\newtheorem{theorem}{Theorem}
\newtheorem{lemma}{Lemma}
\newtheorem{corollary}{Corollary}
\newtheorem{example}{Example}
\newtheorem{remark}{Remark}
\begin{document}

\title{Active and Passive Beamforming Designs for SER Minimization in RIS-Assisted MIMO Systems}

\author{Trinh Van Chien,~\IEEEmembership{Member,~IEEE}, Bui Trong Duc, %
Ho Viet Duc Luong, Huynh Thi Thanh Binh,~\IEEEmembership{Member,~IEEE}, Hien Quoc Ngo, \textit{Senior Member,~IEEE}, and Symeon Chatzinotas, \textit{Fellow,~IEEE} \vspace{-0.5cm}
\thanks{Trinh Van Chien, Bui Trong Duc, Ho Viet Duc Luong, and Huynh Thi Thanh Binh are with the School of Information and Communication Technology (SoICT), Hanoi University of Science and Technology (HUST), Hanoi 100000, Vietnam (email: chientv@soict.hust.edu.vn, buitrongduc0502tlhpvn@gmail.com, luong.hvd242032M@sis.hust.edu.vn, binhht@soict.hust.edu.vn). Hien Quoc Ngo is with the School of Electronics, Electrical Engineering and Computer Science, Queen's University Belfast, Belfast BT7 1NN, United Kingdom (email: hien.ngo@qub.ac.uk). Symeon Chatzinotas is with the Interdisciplinary Centre for Security, Reliability and Trust (SnT), University of Luxembourg, L-1855 Luxembourg, Luxembourg (email: symeon.chatzinotas@uni.lu). (\textit{Corresponding author: Trinh Van Chien}). Parts of this paper has been accepted to present at the Genetic and Evolutionary Computation Conference (GECCO), 2024 \cite{Chien2024gecco}.}
\thanks{S. Chatzinotas was supported by the Luxembourg
National Fund (FNR)-RISOTTI–the Reconfigurable Intelligent Surfaces for Smart Cities under Project FNR/C20/IS/14773976/RISOTTI. The work of H. Q. Ngo was supported by the U.K. Research
and Innovation Future Leaders Fellowships under Grant MR/X010635/1. Ho Viet Duc Luong was funded by the Master, PhD Scholarship Programme of Vingroup Innovation Foundation (VINIF), code VINIF.2024.ThS.32. This work was sponsored by the Office of Naval Research (ONR) Global (ONRG) and the U.S. Army Combat Capabilities Development Command Indo-Pacific (DEVCOM Indo-Pacific), under grant number N62909-23-1-2018. The views and conclusions contained herein are those of the authors only and should not be interpreted as representing those of the U.S. Government.}}



\maketitle

\begin{abstract}
This research exploits the applications of reconfigurable intelligent surface (RIS)-assisted multiple input multiple output (MIMO) systems, specifically addressing the enhancement of communication reliability with modulated signals. Specifically, we first derive the analytical downlink symbol error rate (SER) of each user as a multivariate function of both the phase-shift and beamforming vectors. The analytical SER enables us to obtain insights into the synergistic dynamics between the RIS  and MIMO communication. We then introduce a novel average SER minimization problem subject to the practical constraints of the transmitted power budget and phase shift coefficients, which is NP-hard. By incorporating the differential evolution (DE) algorithm as a pivotal tool for optimizing the intricate active and passive beamforming variables in RIS-assisted communication systems, the non-convexity of the considered SER optimization problem can be effectively handled. Furthermore, an efficient local search is incorporated into the DE algorithm to overcome the local optimum, and hence offer low SER and high communication reliability. Monte Carlo simulations validate the analytical results and the proposed optimization framework, indicating that the joint active and passive beamforming design is superior to the other benchmarks.
\end{abstract}

\begin{IEEEkeywords}
RIS-assisted MIMO, beamforming design, communication reliability,  differential evolution, {SER minimization}.
\end{IEEEkeywords}
\vspace{-0.25cm}
\section{Introduction}
The field of sixth-generation (6G) communication stands at the forefront of contemporary scientific research, heralding a new era of ubiquitous connectivity. Advanced technologies are navigated in the intricate realms of future networks to grapple with challenges and envision innovative solutions to propel global communication beyond the current capabilities. From unprecedented spectral efficiency to ultra-low latency, 6G aims to revolutionize the way radio networks perceive and engage with the digital landscape. Both industry and academia have explored cutting-edge technologies such as terahertz frequencies \cite{hillger2019terahertz,song2022terahertz,shafie2023terahertz}, advanced artificial intelligence \cite{jagannath2021redefining,ahammed2023vision,ji2021survey}, and seamless integration of heterogeneous networks 
to unlock the full potential for the future. Wireless communications witness transformative shifts with each passing decade, marked by an evolution of hardware and software to boost spectral and energy efficiency \cite{jain2023dynamic,van2020joint,malik2021energy}.  
As the scientific community collaborates on this frontier, the pursuit of efficient, reliable, and ubiquitous connectivity remains paramount, shaping the trajectory of future communication networks.
{In a rapid growth of data rate demands, a strategic collaboration between reconfigurable intelligent surfaces (RISs) and multiple-input multiple-output (MIMO) \cite{van2021reconfigurable} creates a revolutionary transformation in the very fabric of beyond 5G networks. This combination addresses the challenges of propagation environments, ushering in a new era of adaptability, efficiency, and resilience. 
The heart of a RIS is a passive array of scattering elements strategically positioned within the coverage area  \cite{wu2019intelligent,wu2021intelligent}. This intelligent surface dynamically manipulates the phase and amplitude of incoming signals, and, in other words, acts like a ``smart mirror" for constructive combinations at the receivers \cite{elmossallamy2020reconfigurable,pan2021reconfigurable}. 
The RIS-assisted MIMO systems have demonstrated substantial gains in improving throughput, extending coverage, and optimizing energy efficiency toward the 6G networks.}

{For resource allocation in RIS-assisted MIMO systems, the optimization process involves dynamic adjustments of transmit power, beamforming, and time-frequency slots. The introduction of RIS adds a layer of complexity originating from the intelligent manipulation of signal characteristics. Power allocation was obtained across antennas and RIS elements, focusing on maximizing capacity and energy efficiency \cite{luo2021spatial,you2020energy}. Intelligent beamforming, facilitated by controlled phase shifts, was optimized to direct signals with precision, adapting to changing channel conditions with the presence of an RIS \cite{van2023long,van2023phase}. Time-frequency slots were efficiently scheduled under real-time feedback mechanisms for accurate channel state information (CSI) \cite{jiang2023capacity}. Dynamic user association and mobility management were optimized in \cite{bie2023user, shah2022statistical} to enhance system adaptability subject to the quality of service requirements. The considered optimization problems in these previous works only include common mathematical expressions to exploit features such as the gradients to guide the solution.}

{Resource allocation is of paramount importance in radio networks to maximize spectral and energy efficiency \cite{you2020energy}. Communication reliability is a cornerstone in wireless systems represented by, for example, the outage probability \cite{nguyen2023dilemma, yang2020outage}, the communication robustness \cite{zheng2022robust}, and the symbol error rate (SER) \cite{ye2020joint} that plays a pivotal role in assessing the robustness of signal transmission.   
As a fundamental indicator of communication performance, achieving low SER for each user necessitates the implementation of effective error control mechanisms, modulation schemes, and sophisticated channel coding strategies aimed at minimizing the impact of disturbances on symbol accuracy. 
These designs underscored the continuous efforts in research, which were exemplified by \cite{law2018symbol,basar2020reconfigurable,ye2020joint}, to enhance communication reliability as a crucial metric for ensuring resilience in fading channels. Nonetheless, the SER is expressed by advanced functions so that the error probability optimization is nontrivial to solve for RIS-assisted networks even with single-user scenarios \cite{basar2020reconfigurable}, where the decoding error probability is non-neglected.}

{From an algorithmic standpoint, evolutionary computing is a field that explores efficient search algorithms rooted in Darwin's theory of evolution \cite{li2022evolutionary}. These metaheuristic methods are collectively referred to as evolutionary algorithms (EAs) encompassing population-based stochastics renowned for their speed, efficiency, and capacity to tackle a diverse range of complicated optimization problems to seek extrema within multivariate, nonlinear, non-differentiable, or multi-modal objective and constraints. A volume of work has consistently demonstrated that EAs yield high-quality solutions for computationally challenging problems classified as NP-hard or NP-complete as in \cite{binh2023ensemble} and reference therein. For 6G communications, EAs have manifested their effectiveness in featuring a multitude of resource allocation problems, with a pronounced relevance in the context of resilient networks  \cite{van2023phase, lipare2021fuzzy, xue2021routing}.
In \cite{van2023phase},  the efficacy of the differential evolution (DE) algorithm was proposed to maximize the total throughput within a complex objective function characterized by spatial correlation fading channels. Cloud optimization elucidated in \cite{lipare2021fuzzy} was proven as a potent approach for the selection of cluster centers by particle swarm optimization (PSO). The application of genetic algorithms (GA) in \cite{xue2021routing} was instrumental in addressing issues related to uncovered areas in network routing. Despite a plethora of substantial applications for 6G communications, there is room to design effective algorithms since EAs may be trapped in locally optimal solutions.}

In this paper, we study the {SER minimization problem} of modulated RIS-assisted MIMO systems under the active and passive beamforming design to simultaneously serve multiple users. Regarding the performance analysis, we derive the SER of each user of multi-user communication systems instead of single-user scenarios \cite{law2018symbol,basar2020reconfigurable,ye2020joint}. Furthermore, we formulate and solve the total SER minimization that may obtain a better solution than the state-of-the-art benchmarks \cite{peng2021analysis, huang2022placement}. Our main contributions are sketched as follows
\begin{itemize}
\item We derive the closed-form expression of the SER for each user in a downlink data transmission with the quadrature amplitude modulation (QAM) by effectively treating mutual interference as additive noise. The analytical results reveal the influence of both the beamforming vectors and phase shift coefficients to attain low SER for the users. 
\item We formulate a general joint active and passive beamforming design to minimize the average SER for the modulated communication systems, which is a non-convex and NP-hard optimization problem. The phase shift designs for the systems utilizing linear signal processing are particular exhibitions of our consideration. 
\item We encode the optimization variables into the individual that can map to the DE framework. An improved DE algorithm with local search is then proposed to evolve the individuals escaping local optimums along generations. The convergence of our proposed algorithm is also mathematically obtained. 
\item Numerical results manifest the effectiveness of the analytical SERs. Moreover, the passive and active beamforming designs demonstrate the superiorities in {minimizing SER of users} for intelligent networks. 
\end{itemize}
The rest of this paper is organized as follows: Section~\ref{Sec:Sys} introduces the considered multi-user RIS-assisted MIMO system model and derives the downlink SER of every user in the coverage area. In Section~\ref{Sec:Opt}, we formulate a communication reliability optimization problem that minimizes the average SER by considering the beamforming vectors and the scattering elements as optimization variables. We further propose an improved DE algorithm to solve the SER minimization problem in Section~\ref{Sec:DE}. Extensive numerical results are presented in Section~\ref{Sec:NumRe}, while main conclusions are drawn in Section~\ref{Sec:Concl}. 

\textit{Notation}:  Lower bold and upper bold letters denote vectors and matrices, respectively. The notation $\|\cdot\|$ stands for the Euclidean norm and $\lfloor \cdot\rfloor $ denotes the floor function. Regular and Hermitian transposes are denoted by the subscripts $(\cdot)^T$ and $(\cdot)^H$. The identity matrix of size $N \times N$ is denoted as $\mathbf{I}_N$ and $\mathrm{diag}(\mathbf{x})$ a diagonal matrix with the elements of the vector $\mathbf{x}$ on the diagonal. The expectation of a random variable is $\mathbb{E}\{ \cdot \}$. Meanwhile, $\mathcal{N}(\cdot, \cdot)$ and $\mathcal{CN} (\cdot, \cdot)$ represent the Gaussian and circularly symmetric complex Gaussian distribution, respectively. $\mathsf{Pr}(\cdot)$ denotes the probability of an event.  Finally, $\Re(\cdot)$ and $\Im(\cdot)$ denote the real and imaginary parts of a complex number.
\vspace{-0.25cm}
\section{RIS-Assisted System Model and Downlink Symbol Error Rate} \label{Sec:Sys}
\begin{figure}
    \centering
    \includegraphics[width=2.1in]{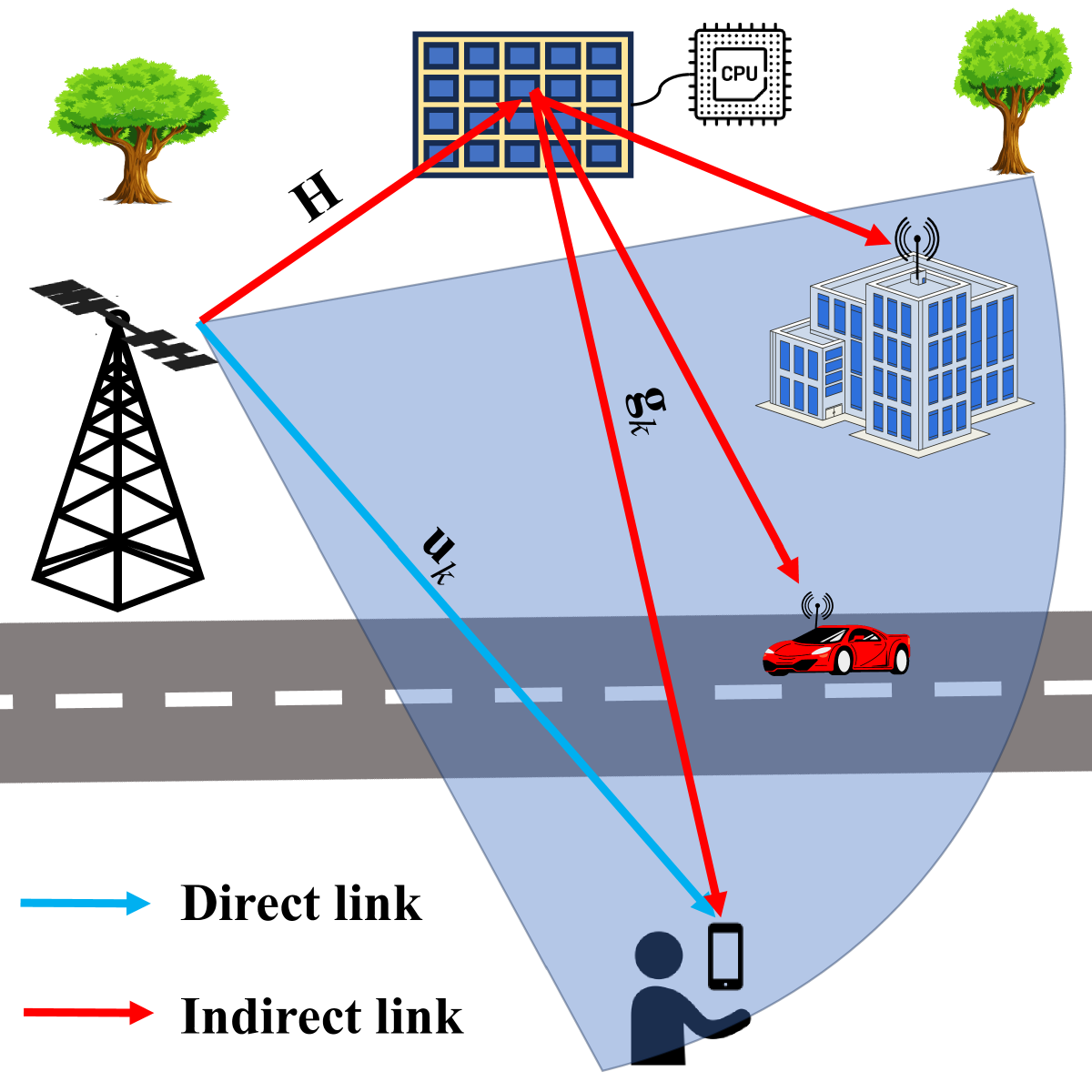}
    \caption{The considered RIS-assisted MIMO system model where a BS serves multiple users.}
    \label{FigSysModel}
    \vspace{-0.5cm}
\end{figure}
{We consider a RIS-assisted MIMO system in which a BS is equipped with $M$ antennas serving $K$ single-antenna users randomly distributed in the coverage area as illustrated in Fig.~\ref{FigSysModel}.\footnote{{A scenario where one base station equipped with multiple antennas served a single-antenna is multiple-input single-output (MISO). Aligned with previous work, we have utilized MIMO systems in this paper since the base station serves multiple users. Even though each active beamforming are a matrix as users having multiple antennas, our proposal evolutionary solution can be applied with efforts in adapting individuals and fine tuning parameters.}} The system performance is supported by a reconfigurable intelligent surface (RIS) consisting of $N$ scattering elements that are capable of modifying the phase of incoming signals. The phase shift matrix of the RIS  is denoted as $\pmb{\Phi} = \mathrm{diag} \left([ e^{j\theta_1}, \ldots, e^{j\theta_N}]^T \right)$, characterizing the scattering properties of the $N$ elements with $j$ being the complex component, i.e., $j^2 = -1$. In particular, $\theta_n$ represents the phase shift introduced by the $n$-th element within the range of $[-\pi, \pi]$.\footnote{{The phase shifts are typically discrete in practice. Different from deterministic optimization, our proposed algorithm can be extended to apply for such practical applications since the fitness function is independent of the first and second derivatives of the objective function and constraints. Due to the discrete feasible region, the adaption is required and it should be of interest for future work.}} Besides, a central processing unit (CPU) located at, for example, the BS, coordinates the activities of the RIS and BS to cater to users via sharing the same time and frequency resources. Let us denote $\mathbf{H} \in \mathbb{C}^{M \times N}$ the channel matrix between the BS and the RIS, while $\mathbf{g}_k \in \mathbb{C}^{N}$ denotes the channel vector between the RIS and user $k$. For the direct link, the channel between the BS and user~$k$ is $\mathbf{u}_k \in \mathbb{C}^{M}$.} 
\vspace{-0.25cm}
\subsection{Downlink Data Transmission}
{In the downlink data transmission,  all the users simultaneously receive signals from the BS with the assistance of the RIS. Let us denote $s_k$ with $\mathbb{E}\{\| s_k\|_2^2 \} = 1$ the data symbol transmitted from the BS to user~$k$. This data symbol is steered to user~$k$ by a beamforming vector $\mathbf{w}_k \in \mathbb{C}^M$ with $\|\mathbf{w}_k\|_2 =1$. Consequently, the received signal at user~$k$, denoted by $y_k \in \mathbb{C}$, is formulated as
\begin{equation} \label{eq:yk}
\begin{split}
    y_k & = \mathbf{u}_k^H \sum_{k' = 1}^{K} \sqrt{\rho} \mathbf{w}_{k'} s_{k'} + \mathbf{g}_k^H \pmb{\Phi}^H \mathbf{H}^H \sum_{k' = 1}^{K} \sqrt{\rho} \mathbf{w}_{k'} s_{k'} + n_k\\
    & = (\mathbf{u}_k + \mathbf{H}\pmb{\Phi}\mathbf{g}_k)^H \sum_{k' = 1}^K \sqrt{\rho} \mathbf{w}_{k'} s_{k'} + n_k,
\end{split}
\end{equation}
where $\rho > 0$ is the transmit power allocated to each data symbol, which is uniformly allocated across users. $n_k \sim \mathcal{CN}(0, \sigma^2 )$ is additive noise with zero mean and variance $\sigma^2$. Let us introduce a new variable $\mathbf{z}_k = \mathbf{u}_k + \mathbf{H}\pmb{\Phi}\mathbf{g}_k $ that is the aggregated channel comprising both the indirect and direct links. After that, the received signal in \eqref{eq:yk} is equivalent to 
\begin{equation} \label{eq:ykv1}
    y_k = \mathbf{z}_k^H \sqrt{\rho} \sum_{k' = 1}^K   \mathbf{w}_{k'} s_{k'} + n_k,
\end{equation}
In order to decode the data symbols sent from user~$k$, \eqref{eq:ykv1} is decomposed to as
\begin{equation} \label{eq:ykv2}
    y_k  = \sqrt{\rho} \mathbf{z}_k^H\mathbf{w}_k s_k + \sqrt{\rho} \sum_{k' =1,k' \neq k }^K \mathbf{z}_k^H \mathbf{w}_{k'}s_{k'} + n_k,
\end{equation}
where the first part in \eqref{eq:ykv2} contains the desired signal from user~$k$, while the second part in \eqref{eq:ykv2} is mutual interference from the other users. By utilizing the cofficient $\sqrt{\rho} \mathbf{z}_k^H \mathbf{w}_k$ for equalization, the  symbol interested by user~$k$ is defined from
\begin{equation} \label{eq:rk}
    r_k = s_k + \sum_{k' = 1, k'\neq k}^K \frac{\mathbf{z}_{k}^H \mathbf{w}_{k'}}{\mathbf{z}_k^H \mathbf{w}_k} s_{k'} + \frac{n_k}{\sqrt{\rho}\mathbf{z}_k^H \mathbf{w}_k}.
\end{equation}
We stress that $r_k$ is different from $s_k$ in general due to nonneglectable interference and noise. However, user~$k$ can still decode the data symbol correctly by using, for example, the maximum likelihood detection if $r_k$ falls in the Voronoi region \cite{aurenhammer1991voronoi} of the original data symbol $s_k$. The SER of user~$k$, denoted by $\mathsf{SER}_k$, is computed as
\begin{equation} \label{eq:SERk}
\mathsf{SER}_k (\{\mathbf{w}_k\}, \pmb{\Phi}) = \mathsf{Pr}(r_k \notin \mathcal{V}(s_k)),
\end{equation}
where 
$\mathcal{V}(s_k)$ is the Voronoi region of the constellation point represented for $s_k$.  We now investigate  $m$-QAM with $m$ being the modulation index, to modulate the binary data sequence before transmitting the signals over the medium. In $m$-QAM, the modulation constellation set  $\mathcal{M}$ consists of $m$ points, which are defined as
\begin{equation} \label{eq:M}
    \mathcal{M} = \left\{\pm \left(2p-1\right) \delta \pm j\left(2q-1\right)\delta \right\},
\end{equation}
where $p,\ q \in \left\{1,\ldots, \frac{\sqrt{m}}{2}\right\}$.  In order to derive the analytical SER expression,  the set of the constellation points illustrated in Fig.~\ref{Constellation} is classified into three subsets based on their Voronoi regions, say $\mathcal{S}_1, \mathcal{S}_2$ and $\mathcal{S}_3$. The three subsets are:
\begin{itemize}
\item[$i)$] \textit{Subset $\mathcal{S}_1$} involves four corner constellation points defined as $\pm \delta(2p-1) \pm j\delta(2q-1)$ with $p= q = \sqrt{m}/2$. For example the constellation point $s_1 = \delta(\sqrt{m}-1) + j\delta(\sqrt{m}-1)$ has the Voronoi region defined as
    \begin{multline}\label{eq:voronois_1}
        \mathcal{V}(s_1)=\big\{s\in \mathbb{C} \ \big|  \Re(s) \geq \delta(\sqrt{m}-2) \\
        \&  \ \Im(s) \ge \delta(\sqrt{m}-2) \big\}.
    \end{multline}
\item[$ii)$] \textit{Subset $\mathcal{S}_2$} involves $4(\sqrt{m}-2)$ boundary points (excluding the corner points in $\mathcal{S}_1$), which are $\pm \delta(2p-1) \pm j\delta(2q-1)$ with either $p = \sqrt{m} /2 $ and $ q\in \{1, \ldots, \sqrt{m}/2-1\}$ or $q = \sqrt{m}/2$ and $p\in \{1, \ldots, \sqrt{m}/2-1\}$. The Voronoi region of a specific point $s_2 =  \delta(\sqrt{m}-1) + j\delta(\sqrt{m}-3)$ is
    \begin{multline}
        \mathcal{V}(s_2)=\big\{s \in \mathbb{C} \ \big| \Re(s) \ge \delta(\sqrt{m}-2)  \\ 
        \& \,\  \delta(\sqrt{m}-4) \le  \Im(s) < \delta(\sqrt{m}-2)  \big\}.
    \end{multline}
\item[$iii)$] \textit{Subset $\mathcal{S}_3$} involves $(\sqrt{m} - 2)^2$ interior points defined as $\pm \delta(2p-1) \pm j\delta(2q-1)$, where $p, q \in \{1,\ldots, \sqrt{m}/2-1 \}$. An example point in this category is $s_3 =  \delta(\sqrt{m}-3) + j\delta(\sqrt{m}-3)$ with its Voronoi region defined as
    \begin{multline}
        \mathcal{V}(s_3)=\big\{s \in \mathbb{C} \ \big|\  \delta(\sqrt{m}-4) \le  \Re(s) < \delta(\sqrt{m}-2)   \\ 
         \&\,\  \delta(\sqrt{m}-4) \le  \Im(s) < \delta(\sqrt{m}-2)  \big\}.
    \end{multline}
\end{itemize}
Fig.~\ref{Constellation} shows the 16-QAM in which the Voronoi region of each constellation point is well-established. The SER in \eqref{eq:SERk} can be evaluated numerically for arbitrary channel models and beamforming vectors. Nonetheless, it requires a very long data sequence to achieve a high accuracy of the SER, which is burdensome for large-scale networks.  
\vspace{-0.25cm}
\subsection{Analysis of Symbol Error Rate}
To go further insight into the analysis of the error probability, we set $\zeta_k$ and $\nu_k$ as follows\footnote{In this paper, we focus on multiple-access scenarios where the system simultaneously serves a number of users with the same time and frequency resource. The SER is, therefore, a function of both mutual interference and noise by treating the mutual interference as effective noise. The multiple BSs and RISs are potential research directions, but the extensions are nontrivial. A concrete protocol to exchange information between the BSs and RISs should be set up. It is beyond the scope of this paper and we leave this potential extension for future work.}
\begin{equation} \label{eq:Zknuk}
\zeta_k = \sum_{k' = 1, k'\neq k}^K \frac{\mathbf{z}_{k'}^H \mathbf{w}_{k'}}{\mathbf{z}_k^H \mathbf{w}_k} s_{k'}, \nu_k = \frac{n_k}{\sqrt{\rho}\mathbf{z}_k^H \mathbf{w}_k},
\end{equation} 
and treat $\zeta_k$ as noise. Then conditioned on channel gains and under the assumption of Gaussian signaling, these random variables are distributed as 
\begin{equation} \label{eq:INkNOk}
\zeta_k \sim \mathcal{CN}\left(0, \mathsf{IN}_k(\{ \mathbf{w}_k \}, \pmb{\Phi})\right), \nu_k \sim \mathcal{CN}\left(0, \mathsf{NO}_k(\{ \mathbf{w}_k \}, \pmb{\Phi})\right),
\end{equation}
where the corresponding variances, $\mathsf{IN}_k(\{ \mathbf{w}_k \}, \pmb{\Phi})$ and $\mathsf{NO}_k(\{ \mathbf{w}_k \}, \pmb{\Phi})$, are defined as
\begin{equation} \label{eq:IkNk}
\begin{split}
        \mathsf{IN}_k(\{ \mathbf{w}_k \}, \pmb{\Phi}) = \sum_{\substack{k' = 1, \\ k'\neq k}}^K \left| \frac{\mathbf{z}_k^H \mathbf{w}_{k'}}{\mathbf{z}_k^H \mathbf{w}_k} \right|^2, 
        \mathsf{NO}_k(\{ \mathbf{w}_k \}, \pmb{\Phi}) =  \frac{\sigma^2}{\rho \left| \mathbf{z}_k^H \mathbf{w}_k \right|^2}.
\end{split}
\end{equation}
}

{Let us consider a scenario where the BS sends equiprobable symbols so that
$\mathsf{Pr}(s_k = s_t) = 1/m, \forall s_t \in \mathcal{M}$ using the $m$-QAM.
After that, the average of energy per symbol, denoted by $E_s$, is computed as
\begin{equation}
\begin{split}
   & E_s \stackrel{(a)}{=} \frac{1}{m} \sum_{s_t \in \mathcal{M}} E_{s_t}  \stackrel{(b)}{=}\frac{1}{m} \sum_{p=1}^{\sqrt{m}/2} \sum_{q=1}^{\sqrt{m}/2} 4\delta^2 ((2p-1)^2+(2q-1)^2)\\
    &\stackrel{(c)}{=} \frac{2\delta^2}{\sqrt{m}} \left( \sum_{p=1}^{\sqrt{m}/2} (2p-1)^2 +  \sum_{q=1}^{\sqrt{m}/2} (2q-1)^2 \right) \\
    &= \frac{4\delta^2}{\sqrt{m}}  \sum_{p=1}^{\sqrt{m}/2} (2p-1)^2 =\frac{4\delta^2} {\sqrt{m}} \left(4\sum_{p=1}^{\sqrt{m}/2} p^2 - 4\sum_{p=1}^{\sqrt{m}/2} p + \frac{\sqrt{m}}{2} \right)   \\ 
    &\stackrel{(d)}{=} \frac{4\delta^2}{\sqrt{m}} \left( \frac{\sqrt{m}(\sqrt{m}+2)(\sqrt{m}+1)}{6} - \frac{\sqrt{m}(\sqrt{m}+2)}{2} + \frac{\sqrt{m}}{2}
    \right) \\ 
    & = \frac{2}{3}\delta^2(m-1),
\end{split}
\end{equation}
where $(a)$ is because all the modulated symbols are equiprobable; $(b)$ is because for each $p, q \in \{1,\ldots, \sqrt{m}/2\}$ expressed in \eqref{eq:M}, there are four signal constellation points, i.e., $ \pm \delta(2p-1) \pm j \delta(2q-1)$, with the energy of $\delta^2((2p-1)^2+(2q-1)^2)$; $(c)$ is obtained by the identity $\sum_{y\in \mathcal{A}} g(x)  = |\mathcal{A}|g(x) $; and  $(d)$ is obtained by  the identities $\sum_{p=1}^n p = n(n+1)/2$ and $\sum_{p=1}^n p^2 = n(n+1)(2n+1)/6$. Hence, the half distance between the two nearest constellation points is defined as
\begin{equation}
   \delta = \sqrt{3 E_s/(2(m-1))}.
\end{equation}
Once the energy per symbol is normalized, i.e., $E_s =1$, we obtain the result in \eqref{eq:dmin}.  $\delta$ is half of the distance between the two nearest points on the constellation computed as
\begin{equation} \label{eq:dmin}
        \delta =\sqrt{3/(2(m-1))}.
\end{equation}}
\begin{figure}[t]
    \centering
    \includegraphics[width=3.0in, trim = 2.5cm 2.5cm 2.9cm 2.1cm, clip = true]{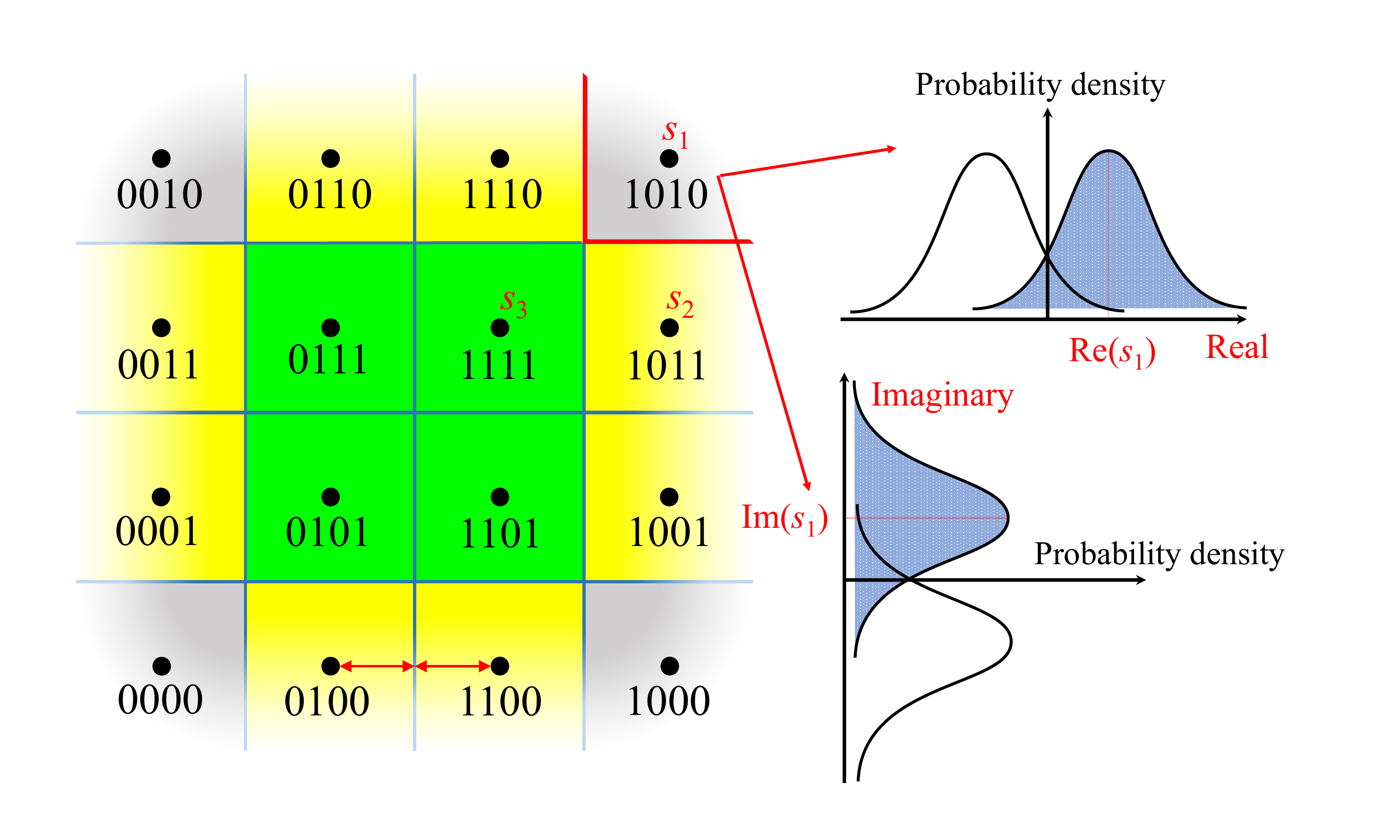}
    \caption{An example of 16-QAM with the Voronoi regions of the constellation points.}
    \label{Constellation}
    \vspace{-0.5cm}
\end{figure} 
{
As observed in \eqref{eq:rk} together with the assumption in \eqref{eq:INkNOk}, the SER of user~$k$ is subject to the presence of mutual interference expressed by an approximation of the Gaussian noise. In more detail, Gaussian noise can introduce random fluctuations in the amplitude and phase of the received signals, potentially contaminating the decoding quality. Note that, in multiple-user networks, a significant factor impacting the received signal is mutual interference originating from other users sharing the same time and frequency resources. It leads to the quality of the desired signals from one user being interfered with those of the remaining. Mutual interference can lead to signal distortion and a decrease in the signal-to-interference-and-noise ratio, affecting the reliability of a communication system as demonstrated in Lemma~\ref{theorem:ser}.\footnote{The SER expression for each user is obtained with perfect channel state information that can be considered as an upper bound in practice under low mobility environments. The communication reliability under imperfect channel state information is of interest for future work by exploiting, for example, the aggregated channel estimation \cite{van2022reconfigurable}.}}
\begin{lemma}\label{theorem:ser}
{The analytical downlink SER of user~$k$ is 
    \begin{multline}\label{eq:SER}
        \overline{\mathsf{SER}}_k (\{ \mathbf{w}_k \}, \pmb{\Phi}) =  2\left(1-\frac{1}{\sqrt{m}}\right)\mathrm{erfc}\left(\sqrt{\frac{3\mathsf{SINR}_k (\{ \mathbf{w}_k \}, \pmb{\Phi} )}{m-1}}\right) \\ 
        - \left(1-\frac{1}{\sqrt{m}}\right)^2\mathrm{erfc}^2\left(\sqrt{\frac{3\mathsf{SINR}_k(\{ \mathbf{w}_k \}, \pmb{\Phi})}{m-1}}\right),
    \end{multline}
    where the signal-to-interference-and-noise ratio of user~$k$, denoted by $\mathsf{SINR}_k(\{ \mathbf{w}_k \}, \pmb{\Phi})$, is driven based on the received signal in \eqref{eq:rk} as 
        \begin{equation}
            \begin{split}
            \label{eq:sinr}
            \mathsf{SINR}_k(\{ \mathbf{w}_k \}, \pmb{\Phi} ) 
            &= \frac{\rho \left| \mathbf{z}_k^H \mathbf{w}_k \right|^2}{\sum_{k' = 1, k'\neq k}^K  \rho \left| \mathbf{z}_k^H \mathbf{w}_{k'} \right|^2 +   \sigma^2},
        \end{split}
        \end{equation}
    and $\mathrm{erfc}(\cdot)$ is the complementary error function defined as}
       ${ \operatorname{erfc}(z) = 1 - \operatorname{erf}(z) =  
        \frac{2}{\sqrt{\pi}} \int_{z}^{\infty} e^{-t^2}\operatorname{d}t.}$
\end{lemma}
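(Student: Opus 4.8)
The plan is to read \eqref{eq:rk} as a single transmitted symbol corrupted by one effective circularly symmetric complex Gaussian perturbation, and then to exploit the Cartesian product structure of the $m$-QAM constellation. First I would merge the interference and noise terms $\zeta_k$ and $\nu_k$ of \eqref{eq:Zknuk}. They are independent and, by the Gaussian-signaling assumption in \eqref{eq:INkNOk}, both circularly symmetric complex Gaussian, so their sum is $\mathcal{CN}(0, \mathsf{IN}_k + \mathsf{NO}_k)$. Using \eqref{eq:IkNk}, a short manipulation gives
\begin{equation}
\mathsf{IN}_k(\{\mathbf{w}_k\}, \pmb{\Phi}) + \mathsf{NO}_k(\{\mathbf{w}_k\}, \pmb{\Phi}) = \frac{\rho\sum_{k'\neq k}|\mathbf{z}_k^H\mathbf{w}_{k'}|^2 + \sigma^2}{\rho|\mathbf{z}_k^H\mathbf{w}_k|^2} = \frac{1}{\mathsf{SINR}_k(\{\mathbf{w}_k\}, \pmb{\Phi})},
\end{equation}
so that $r_k = s_k + e_k$ with $e_k \sim \mathcal{CN}(0, 1/\mathsf{SINR}_k)$ and, crucially, with real and imaginary parts that are independent $\mathcal{N}(0, 1/(2\mathsf{SINR}_k))$ variables.

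Because the $m$-QAM decision regions such as \eqref{eq:voronois_1} are products of one-dimensional $\sqrt{m}$-PAM intervals and the noise decouples across the two axes, the event $r_k \in \mathcal{V}(s_k)$ factors into the intersection of an independent real-axis event and imaginary-axis event. By the symmetry of the constellation the two axes share a common per-dimension error probability $P$, whence the probability of correct detection is $(1-P)^2$ and $\overline{\mathsf{SER}}_k = 1 - (1-P)^2 = 2P - P^2$. This already reproduces the algebraic skeleton of \eqref{eq:SER}, so everything reduces to identifying $P$.

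To compute $P$, the three-way partition $\mathcal{S}_1, \mathcal{S}_2, \mathcal{S}_3$ is the natural bookkeeping device. Projected onto a single axis, the two outermost $\sqrt{m}$-PAM levels (the coordinates carried by the corner points of $\mathcal{S}_1$ and the edge points of $\mathcal{S}_2$) have a one-sided decision interval and err with probability $Q(\delta\sqrt{2\mathsf{SINR}_k})$, while each of the remaining $\sqrt{m}-2$ interior levels has a two-sided interval and errs with probability $2Q(\delta\sqrt{2\mathsf{SINR}_k})$; here $Q(\cdot)$ is the Gaussian tail and the argument is the half-distance $\delta$ over the per-dimension standard deviation $1/\sqrt{2\mathsf{SINR}_k}$. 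Averaging uniformly over the $\sqrt{m}$ levels yields $P = 2(1-1/\sqrt{m})Q(\delta\sqrt{2\mathsf{SINR}_k})$, and inserting $\delta = \sqrt{3/(2(m-1))}$ from \eqref{eq:dmin} collapses the argument to $\sqrt{3\mathsf{SINR}_k/(m-1)}$. Rewriting $Q(\cdot)$ through the complementary error function and substituting into $2P - P^2$ then delivers \eqref{eq:SER}.

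The main obstacle is twofold. The delicate modeling step is justifying that the discrete multi-user interference $\zeta_k$ may be replaced by a circularly symmetric Gaussian of variance $\mathsf{IN}_k$, since only under this replacement do the two quadrature components become exactly independent and the product formula $(1-P)^2$ become exact rather than an approximation. The remaining difficulty is purely combinatorial: one must verify that averaging the one-sided and two-sided error counts over the $\sqrt{m}$ PAM levels, equivalently over the three Voronoi classes, collapses neatly to the single coefficient $2(1-1/\sqrt{m})$, which is precisely where the classification into $\mathcal{S}_1, \mathcal{S}_2, \mathcal{S}_3$ pays off.
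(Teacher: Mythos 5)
Your proposal is correct and lands on \eqref{eq:SER}, but the combinatorial half of your argument takes a genuinely different route from the paper's. The modeling step is identical: both you and the paper lump $\zeta_k+\nu_k$ into a single circularly symmetric Gaussian of variance $\mathsf{IN}_k+\mathsf{NO}_k=1/\mathsf{SINR}_k$ and use the independence of its quadrature components. From there the paper stays in two dimensions: it conditions on the three Voronoi classes $\mathcal{S}_1,\mathcal{S}_2,\mathcal{S}_3$, computes the three conditional error probabilities $\mathsf{X}-\tfrac{1}{4}\mathsf{X}^2$, $\tfrac{3}{2}\mathsf{X}-\tfrac{1}{2}\mathsf{X}^2$, and $2\mathsf{X}-\mathsf{X}^2$ in \eqref{eq:errorsk=s1}--\eqref{eq:errorsk=s3}, and then averages with multiplicities $4$, $4(\sqrt{m}-2)$, $(\sqrt{m}-2)^2$, recovering the coefficient $2(1-1/\sqrt{m})$ only after the algebraic simplification in \eqref{eq:SERkappendix}. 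You instead exploit the Cartesian-product structure of square QAM: the average correct-decision probability factorizes exactly as $(1-P)^2$ with $P$ the mean per-axis $\sqrt{m}$-PAM error probability (the double sum over I and Q levels factorizes because the constellation is a full product set and the two noise quadratures are independent), and counting two one-sided and $\sqrt{m}-2$ two-sided levels gives $P=2(1-1/\sqrt{m})Q(\delta\sqrt{2\mathsf{SINR}_k})$ in one line. Your route buys the skeleton $2P-P^2$ and the coefficient without the three case computations; the paper's route buys a derivation that directly reuses the two-dimensional Voronoi regions it set up in Section~II. One caveat: your last step, ``rewriting $Q(\cdot)$ through the complementary error function,'' should be carried out explicitly, since $Q(y)=\tfrac{1}{2}\operatorname{erfc}(y/\sqrt{2})$ turns your (correct) argument $\sqrt{3\mathsf{SINR}_k/(m-1)}$ inside $Q$ into $\sqrt{3\mathsf{SINR}_k/(2(m-1))}$ inside $\operatorname{erfc}$; matching the stated argument of \eqref{eq:SER} requires the same extra $\sqrt{2}$ that the paper absorbs at step $(b)$ of \eqref{eq:prRe>=delta} with $\delta$ given by \eqref{eq:dmin}, so you are consistent with the paper's proof but should not present the conversion as automatic.
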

\begin{proof}
{The proof is accomplished based on computing the error probability of each data symbol over the fading channels using the Voronoi regions. The detailed proof is available in  Appendix~\ref{appen:ser}.}
\end{proof}
{The analytical SER obtained in Lemma~\ref{theorem:ser} is independent of the symbol stream and therefore reduces the computational complexity in evaluating the communication reliability. Moreover, \eqref{eq:SER} can be applied for an arbitrary beamforming technique and channel model. To seek further insight, we reformulate \eqref{eq:SER} into an equivalent expression by utilizing the series representation of the $\mathrm{erf}(\cdot)$ function \cite[3.321.1]{gradshteyn2014table} as
  \begin{multline}\label{eq:SERv1}
\overline{\mathsf{SER}}_k (\{ \mathbf{w}_k \}, \pmb{\Phi}) = \\
        2\left(1-\frac{1}{\sqrt{m}}\right)\left( 1- \frac{2}{\sqrt{\pi}}\sum_{l=0}^{\infty} \frac{(-1)^l 3^{l+0.5} \mathsf{SINR}_k^{l+0.5} (\{ \mathbf{w}_k \}, \pmb{\Phi} )}{l!(2l+1) (m-1)^{l+0.5}} \right)\\ 
        - \left(1-\frac{1}{\sqrt{m}}\right)^2\left( 1- \frac{2}{\sqrt{\pi}}\sum_{l=0}^{\infty} \frac{(-1)^l 3^{l+0.5} \mathsf{SINR}_k^{l+0.5} (\{ \mathbf{w}_k \}, \pmb{\Phi} )}{l!(2l+1) (m-1)^{l+0.5}} \right)^2,
    \end{multline}
which allows us to evaluate numerically the SER of user~$k$ with tolerable accuracy by truncating the summation. Specifically, at the low and moderate SINR regimes, one can approximate the SER as
\begin{equation}\label{eq:SERlow}
\begin{split}
&\overline{\mathsf{SER}}_k (\{ \mathbf{w}_k \}, \pmb{\Phi}) \approx 2\left(1-\frac{1}{\sqrt{m}}\right)
        \left( 1- \frac{2}{\sqrt{\pi}}\sqrt{\frac{ 3 \mathsf{SINR}_k (\{ \mathbf{w}_k \}, \pmb{\Phi} )}{ m-1 }} \right.\\ 
& \left. +  \frac{2}{\sqrt{\pi}}\sqrt{\frac{ 3\mathsf{SINR}_k^3 (\{ \mathbf{w}_k \}, \pmb{\Phi} )}{ (m-1)^3 }}\right) -  \left(1-\frac{1}{\sqrt{m}}\right)^2 \times \\
& \left( 1- \frac{2}{\sqrt{\pi}}\sqrt{\frac{ 3 \mathsf{SINR}_k (\{ \mathbf{w}_k \}, \pmb{\Phi} )}{ m-1 }} +  \frac{2}{\sqrt{\pi}}\sqrt{\frac{ 3 \mathsf{SINR}_k^3 (\{ \mathbf{w}_k \}, \pmb{\Phi} )}{ (m-1)^3 }}\right)^2.
\end{split}
\end{equation}
We emphasize that the approximation in \eqref{eq:SERlow} degrades the evaluation cost due to the simple algebra. For the high SINR regime, let first us borrow the series representation of the $\mathrm{erfc}(\cdot)$ function \cite[7.1.23]{abramowitz1964handbook}  to tackle the SER of user~$k$ as
\begin{equation}\label{eq:SERv2}
\begin{split}
&\overline{\mathsf{SER}}_k (\{ \mathbf{w}_k \}, \pmb{\Phi}) =  2\left(1-\frac{1}{\sqrt{m}}\right)  \frac{e^{\frac{-3 \mathsf{SINR}_k (\{ \mathbf{w}_k \}, \pmb{\Phi} )}{ m-1 }}\sqrt{m-1}}{\sqrt{ 3 \pi\mathsf{SINR}_k (\{ \mathbf{w}_k \}, \pmb{\Phi} )}} \\
&\times\sum_{l=0}^{\infty} \frac{(-1)^l  (2 l-1)!! (m-1)^l}{  6^l \mathsf{SINR}_k^l (\{ \mathbf{w}_k \}, \pmb{\Phi} )}
    - \left(1-\frac{1}{\sqrt{m}}\right)^2   \\
&\frac{e^{\frac{-6 \mathsf{SINR}_k (\{ \mathbf{w}_k \}, \pmb{\Phi} )}{ m-1 }}(m-1)}{3 \pi\mathsf{SINR}_k (\{ \mathbf{w}_k \}, \pmb{\Phi} )}  \left(\sum_{l=0}^{\infty} \frac{(-1)^l (2l-1)!! (m-1)^l}
    {6^l \mathsf{SINR}_k^l (\{ \mathbf{w}_k \}, \pmb{\Phi} )} \right)^2,
\end{split}
\end{equation}
where $(2i-1)!!$ is the double factorial and $(-1)!! = 1$. After that, by only taking the first-order approximation of \eqref{eq:SERv2}, i.e., $n=1$, the SER of user~$k$ can be simplified to as
\begin{multline}
        \overline{\mathsf{SER}}_k (\{ \mathbf{w}_k \}, \pmb{\Phi}) \approx 
        2\left(1-\frac{1}{\sqrt{m}}\right)   \frac{e^{\frac{-3 \mathsf{SINR}_k (\{ \mathbf{w}_k \}, \pmb{\Phi} )}{ m-1 }}\sqrt{m-1}}{\sqrt{ 3 \pi\mathsf{SINR}_k (\{ \mathbf{w}_k \}, \pmb{\Phi} )}}\\ 
        - \left(1-\frac{1}{\sqrt{m}}\right)^2 \frac{e^{\frac{-6 \mathsf{SINR}_k (\{ \mathbf{w}_k \}, \pmb{\Phi} )}{ m-1 }}(m-1)}{3 \pi\mathsf{SINR}_k (\{ \mathbf{w}_k \}, \pmb{\Phi} )},
\end{multline}
which can be evaluated efficiently. Consequently, the analytical SER in \eqref{eq:SER} offers a baseline for approximations with a lower computational cost by conditioning on the range of the SINR. Note that the SER in Theorem~\ref{theorem:ser} and its approximations can be minimized concerning both the beamforming vectors and the phase shift matrix. In practice, we can exploit linear beamforming vectors to reduce the system cost.}

\subsection{{Symbol Error Rate} Analysis with Linear Beamforming Techniques}
{We now consider scenarios where the system deploys a specific linear beamforming technique. In more detail,  the maximum ratio transmission (MRT), zero-forcing (ZF), and regularized zero-forcing (RZF) techniques are respectively considered in this paper as 
\begin{equation} \label{eq:linearwk}
\mathbf{w}_k = \begin{cases}
\dfrac{\mathbf{z}_k}{\left\| \mathbf{z}_k \right\|}, & \mbox{MRT},\\
\dfrac{\mathbf{Z}(\mathbf{Z}^H\mathbf{Z})^{-1}\mathbf{e}_k}{\left\| \mathbf{Z}(\mathbf{Z}^H\mathbf{Z})^{-1}\mathbf{e}_k \right\|}, & \mbox{ZF},\\
\dfrac{\mathbf{Z}\left(\mathbf{Z}^H\mathbf{Z} + \sigma^2\mathbf{I}_K\right)^{-1}\mathbf{e}_k}{\left\| \mathbf{Z}\left(\mathbf{Z}^H\mathbf{Z} + \sigma^2\mathbf{I}_K\right)^{-1}\mathbf{e}_k \right\|}, & \mbox{RZF}.
\end{cases}
\end{equation}
where $\mathbf{Z} = [\mathbf{z}_1, \mathbf{z}_2,..., \mathbf{z}_K] \in \mathbb{C}^{M \times K}$ is the channel matrix; $\mathbf{e}_k = [0 \ldots 1 \ldots 0]^T \in \mathbb{C}^K$ is the $k$-th collum of $\mathbf{I}_K$; and $\mathbf{I}_K$ is identity matrix with size $K \times K$. The downlink SER is further derived as in Corollary~\ref{Corollary:Asym}.}
\begin{corollary} \label{Corollary:Asym}
{If the BS deploys the linear beamforming techniques defined in \eqref{eq:linearwk}, the analytical downlink SER of user~$k$ is formulated as
\begin{multline}\label{eq:SERLinear}
        \overline{\mathsf{SER}}_k ( \pmb{\Phi}) =  2\left(1-\frac{1}{\sqrt{m}}\right)\mathrm{erfc}\left(\sqrt{\frac{3\mathsf{SINR}_k ( \pmb{\Phi} )}{m-1}}\right) \\ 
        - \left(1-\frac{1}{\sqrt{m}}\right)^2\mathrm{erfc}^2\left(\sqrt{\frac{3\mathsf{SINR}_k( \pmb{\Phi})}{m-1}}\right),
\end{multline}
where the corresponding SINR of user~$k$ is defined based on the linear beamforming techniques as
\begin{table}[H]
\centering
\begin{tabular}{|c|c|c|}
\hline
 &  $\mathsf{S}_k (\pmb{\Phi})$ & $\mathsf{IN}_k  (\pmb{\Phi}) + \mathsf{NO}_k  (\pmb{\Phi})$ \\  \hline
 MRT &  $\rho \| \mathbf{z}_k\|_2^2$ &  $\displaystyle \sum_{k'=1, k' \neq k}^K \rho \| \mathbf{z}_k^H \mathbf{z}_{k'} \|^2 \|\mathbf{z}_{k'} \|^{-2} + \sigma^2$\\ \hline
ZF& $\rho \left\|\mathbf{Z}^H (\mathbf{Z}^H \mathbf{Z})^{-1}\mathbf{e}_k \right\|^{-2}$ & $\sigma^2$ \\ \hline
 RZF &  $\rho \left| \mathbf{z}_k^H \mathbf{w}_k \right|^2$ & $\displaystyle \sum_{k' = 1, k'\neq k}^K  \rho \left| \mathbf{z}_k^H \mathbf{w}_{k'} \right|^2 +   \sigma^2$\\ \hline
\end{tabular}
\end{table}
\noindent with $\mathsf{SINR}_k(\{ \mathbf{w}_k \}, \pmb{\Phi} ) = \dfrac{\mathsf{S}_k (\pmb{\Phi})}{\mathsf{IN}_k  (\pmb{\Phi})+\mathsf{NO}_k  (\pmb{\Phi})}  $. Here $\mathsf{S}_k (\pmb{\Phi})$ and $\mathsf{IN}_k  (\pmb{\Phi})+\mathsf{NO}_k  (\pmb{\Phi})$ are the signal strength and the mutual interference and noise strength, respectively.
}
\end{corollary}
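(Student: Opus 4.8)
My plan is to leverage the fact that Lemma~\ref{theorem:ser} already expresses $\overline{\mathsf{SER}}_k$ \emph{purely} as a function of the scalar $\mathsf{SINR}_k$ through the fixed functional form in \eqref{eq:SER}. Since the linear beamformers in \eqref{eq:linearwk} are deterministic functions of the aggregated channel matrix $\mathbf{Z}=[\mathbf{z}_1,\ldots,\mathbf{z}_K]$, and each $\mathbf{z}_k=\mathbf{u}_k+\mathbf{H}\pmb{\Phi}\mathbf{g}_k$ depends on $\pmb{\Phi}$, fixing the beamforming rule collapses the two-argument map $\overline{\mathsf{SER}}_k(\{\mathbf{w}_k\},\pmb{\Phi})$ into a single-argument map $\overline{\mathsf{SER}}_k(\pmb{\Phi})$. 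Therefore the entire content of the corollary reduces to substituting each choice of $\mathbf{w}_k$ into the general SINR expression \eqref{eq:sinr} and simplifying; the outer form \eqref{eq:SERLinear} then follows verbatim from Lemma~\ref{theorem:ser}. I would state this reduction first so that the rest of the argument is three independent algebraic specializations.

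For \textbf{MRT}, I would insert $\mathbf{w}_k=\mathbf{z}_k/\|\mathbf{z}_k\|$ directly: the desired-signal term becomes $\rho|\mathbf{z}_k^H\mathbf{w}_k|^2=\rho\big(\|\mathbf{z}_k\|^2/\|\mathbf{z}_k\|\big)^2=\rho\|\mathbf{z}_k\|^2$, giving $\mathsf{S}_k(\pmb{\Phi})$, while each cross term reads $\rho|\mathbf{z}_k^H\mathbf{w}_{k'}|^2=\rho|\mathbf{z}_k^H\mathbf{z}_{k'}|^2\|\mathbf{z}_{k'}\|^{-2}$, which assembles into the $\mathsf{IN}_k(\pmb{\Phi})+\mathsf{NO}_k(\pmb{\Phi})$ entry after appending the noise $\sigma^2$. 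For \textbf{ZF}, the key step is the interference-nulling identity. Writing $\tilde{\mathbf{w}}_{k'}=\mathbf{Z}(\mathbf{Z}^H\mathbf{Z})^{-1}\mathbf{e}_{k'}$, I would use $\mathbf{Z}^H\mathbf{Z}(\mathbf{Z}^H\mathbf{Z})^{-1}=\mathbf{I}_K$ to obtain $\mathbf{z}_k^H\tilde{\mathbf{w}}_{k'}=\big[\mathbf{Z}^H\mathbf{Z}(\mathbf{Z}^H\mathbf{Z})^{-1}\mathbf{e}_{k'}\big]_k=\mathbf{e}_k^T\mathbf{e}_{k'}$, which is the Kronecker delta. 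Hence $\mathbf{z}_k^H\mathbf{w}_{k'}=0$ for all $k'\neq k$, the entire interference sum in \eqref{eq:sinr} vanishes, and the denominator collapses to $\sigma^2$; the numerator reduces to $\rho\,|\mathbf{z}_k^H\mathbf{w}_k|^2=\rho\,\|\tilde{\mathbf{w}}_k\|^{-2}$, i.e.\ the reciprocal squared norm of the unnormalized ZF vector, which is the tabulated $\mathsf{S}_k(\pmb{\Phi})$.

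For \textbf{RZF} there is no structural cancellation to exploit, so the table entry is simply \eqref{eq:sinr} evaluated at the RZF vector and left unsimplified; I would note this explicitly to avoid the impression that a closed form is being claimed. The single substantive step is the ZF nulling identity, and that is the step I expect to present most carefully, since it is the only place where a matrix manipulation (rather than scalar algebra) is required. Everything else is routine substitution, and the SER form itself transfers immediately because Lemma~\ref{theorem:ser} holds for arbitrary $\{\mathbf{w}_k\}$ and therefore in particular for the linear choices in \eqref{eq:linearwk}. I would close by remarking that each specialized $\mathsf{SINR}_k(\pmb{\Phi})$ is continuous in $\pmb{\Phi}$ (with the usual full-rank caveat on $\mathbf{Z}^H\mathbf{Z}$ for the ZF case), confirming that $\overline{\mathsf{SER}}_k(\pmb{\Phi})$ is well-defined as a function of the phase-shift matrix alone.
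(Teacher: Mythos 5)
Your proposal is correct and follows essentially the same route as the paper's own proof: both reduce the corollary to substituting each linear beamformer from \eqref{eq:linearwk} into the general SINR expression \eqref{eq:sinr}, with the SER form carried over unchanged from Lemma~\ref{theorem:ser}, and both hinge on the ZF interference-nulling identity (which you spell out more explicitly via $\mathbf{z}_k^H\mathbf{Z}(\mathbf{Z}^H\mathbf{Z})^{-1}\mathbf{e}_{k'}=\mathbf{e}_k^T\mathbf{e}_{k'}$) together with the routine MRT and RZF substitutions. Your MRT computation $\rho\lvert\mathbf{z}_k^H\mathbf{w}_k\rvert^2=\rho\|\mathbf{z}_k\|^2$ is in fact cleaner than the paper's intermediate claim $\mathbf{z}_k^H\mathbf{w}_k=\|\mathbf{z}_k\|^2$ (which omits the unit-norm scaling), but the argument is the same.
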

\begin{proof}
The SINR of user~$k$ is obtained by substituting the linear beamforming vector $\mathbf{w}_k$ in \eqref{eq:linearwk} into \eqref{eq:sinr}. For the MRT technique, we note that $\mathbf{z}_k^H \mathbf{w}_k = \| \mathbf{z}_k \|^2$.  For the ZF technique, we have $|\mathbf{z}_k^H \mathbf{w}_k |^2 = \left|\mathbf{Z}^H (\mathbf{Z}^H \mathbf{Z})^{-1}\mathbf{e}_k \right|^{-2}$ and $|\mathbf{z}_k^H \mathbf{w}_{k'} |^2 =0$,$\forall k \neq k'$ by the mutual interference cancellation.
\end{proof}
{The SINR expression obtained in Corollary~\ref{Corollary:Asym} indicates the benefits of the MRT technique in maximizing the strength of the desired signal. Meanwhile, the ZF technique can effectively cancel out mutual interference, and the RZF balances between those two factors. Apart from these, the analytical downlink SER in \eqref{eq:SERLinear} is only a function of scattering elements, which can be optimized for smart environment control to reduce the  error probability}.
\section{Active and Passive Beamforming Designs for {Symbol Error Rate} Optimization} \label{Sec:Opt}
This section formulates a total SER minimization by optimizing the beamforming vectors and scattering elements. We further propose an improved version of the DE algorithm to solve the optimization problem in polynomial time.
\vspace{-0.25cm}
\subsection{Joint Active and Passive Beamforming Design}
{In each coherence interval, data symbols are simultaneously sent from BS  to the $K$ users, and the SER of each user is computed by \eqref{eq:SER}. Hence, SER is evaluated under a propagation environment including the channels $\mathbf{g}_k$, $\mathbf{u}_k$, and $\mathbf{H}$ together with the controlled scattering elements. In this paper, we formulate the SER optimization problem by treating $\pmb{\Phi}$ and $\{\mathbf{w}_k\}$ as optimization variables, aiming to minimize the average error probability  as follows:
\begin{subequations}\label{ProBlem}
	\begin{alignat}{2}
		& \underset{ \{ \mathbf{w}_k \}, \pmb{\Phi} }{\textrm{minimize}} 
		& & \, \, \mathbb{E}  \left\{\frac{1}{K}\sum_{k=1}^K \mathsf{SER}_k (\{\mathbf{w}_k\}, \pmb{\Phi}) \right\} \label{eq:Obj} \\
		& \textrm{subject to} && \, \,   \rho \|\mathbf{w}_k\|_2^2 \leq P_{\mathrm{max}} \;, \forall k,
        \label{eq:power}\\
        &&& -\pi \leq \theta_n \leq \pi, \forall n,
        \label{eq:element_constraints}
	\end{alignat}
\end{subequations}
where the expectation in \eqref{eq:Obj} is with respect to a given arbitrary signal constellation set.  $P_{\mathrm{max}}$ is the maximum transmit power allocated to data symbols. The objective function of problem \eqref{ProBlem} represents the average SER of each user for multi-user systems. In this context, the constraints \eqref{eq:power}, $\forall k,$ guarantees that the transmit power does not exceed the limit power budget by steering the waveform to user~$k$ with a beamforming technique. Meanwhile, the constraints \eqref{eq:element_constraints}, $\forall n$, determine the feasible set of the phase shift coefficients. In this paper, we focus on the $m$-QAM due to its analytical SER obtained in \eqref{eq:SERLinear}, thus problem~\eqref{ProBlem} becomes
\begin{subequations}\label{ProBlemv1}
	\begin{alignat}{2}
		& \underset{ \{ \mathbf{w}_k \}, \pmb{\Phi} }{\textrm{minimize}} 
		& & \, \, \frac{1}{K}\sum_{k=1}^K \overline{\mathsf{SER}}_k(\{ \mathbf{w}_k \}, \pmb{\Phi})  \\
		& \textrm{subject to} && \, \,   \rho \|\mathbf{w}_k\|_2^2 \leq P_{\mathrm{max}} \;, \forall k,
        \label{eq:power2}\\
        &&& -\pi \leq \theta_n \leq \pi, \forall n.
        \label{eq:element_constraints2}
	\end{alignat}
\end{subequations}
In comparison to \eqref{ProBlem}, problem~\eqref{ProBlemv1} has a lower cost to obtain the solution in each coherence interval, which is more practical since it predetermines the modulation technique.\footnote{We assume that a coherence interval includes a sufficient number of symbols for the SER to apply.} An equivalent formulation of problem~\eqref{ProBlemv1} is in Lemma~\ref{lemma:Equi}.
\begin{lemma} \label{lemma:Equi}
For a given transmit power $0 \leq \rho \leq P_{\max}$, problem~\eqref{ProBlemv1} can be reformulated to 
\begin{subequations}\label{ProBlemv2}
	\begin{alignat}{2}
		& \underset{ \{ \mathbf{w}_k \}, \pmb{\Phi} }{\textrm{minimize}} 
		& & \, \, \frac{1}{K}\sum_{k=1}^K \overline{\mathsf{SER}}_k(\{ \mathbf{w}_k \}, \pmb{\Phi})  \\
		& \textrm{subject to} && \, \, \|\mathbf{w}_k\|_2^2 = \frac{P_{\mathrm{max}}}{\rho} \;, \forall k, \label{eq:power3}\\
        &&& -\pi \leq \theta_n \leq \pi, \forall n,\label{eq:value_phi}
	\end{alignat}
\end{subequations}
which indicates that the beamforming vector of user~$k$ is on a sphere, and therefore narrows the feasible region.
\end{lemma}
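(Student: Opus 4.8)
The plan is to establish that \eqref{ProBlemv1} and \eqref{ProBlemv2} share the same optimal value by bounding it from both sides. One inequality is immediate from feasibility: any $\{\mathbf{w}_k\}$ meeting the sphere constraint \eqref{eq:power3} automatically satisfies the ball constraint \eqref{eq:power2}, so the feasible set of \eqref{ProBlemv2} is contained in that of \eqref{ProBlemv1} and the optimal value of \eqref{ProBlemv1} is no larger than that of \eqref{ProBlemv2}. The entire content of the lemma lies in the reverse direction, i.e.\ in showing that an optimal solution can be taken with the per-user power budget binding, so that discarding the interior of the ball costs nothing.

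First I would record the monotone dependence of the objective on the SINRs. Writing $c = 1 - 1/\sqrt{m} \in (0,1)$ and $x = \sqrt{3\,\mathsf{SINR}_k/(m-1)} \ge 0$, the summand in \eqref{eq:SER} is $f(x) = 2c\,\mathrm{erfc}(x) - c^2\,\mathrm{erfc}^2(x)$, whose derivative is $f'(x) = 2c\,\mathrm{erfc}'(x)\,\bigl(1 - c\,\mathrm{erfc}(x)\bigr)$. Since $\mathrm{erfc}'(x) = -\tfrac{2}{\sqrt{\pi}}e^{-x^2} < 0$ and $c\,\mathrm{erfc}(x) \le c\,\mathrm{erfc}(0) = c < 1$ on $x \ge 0$, we get $f'(x) < 0$; as $x$ is increasing in $\mathsf{SINR}_k$, lowering the objective of \eqref{ProBlemv1} is equivalent to raising every $\mathsf{SINR}_k$.

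Next I would show that power can be pushed toward the budget without harming the objective. Take any feasible $(\{\mathbf{w}_k\}, \pmb{\Phi})$ and replace $\mathbf{w}_j \mapsto \beta\,\mathbf{w}_j$ for all $j$ with a common $\beta \ge 1$. Substituting into \eqref{eq:sinr} and cancelling $\beta^2$ gives $\mathsf{SINR}_k = \rho|\mathbf{z}_k^H\mathbf{w}_k|^2 \big/ \bigl(\sum_{k'\neq k}\rho|\mathbf{z}_k^H\mathbf{w}_{k'}|^2 + \sigma^2/\beta^2\bigr)$, in which only the noise term $\sigma^2/\beta^2$ depends on $\beta$ and is strictly decreasing. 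Hence every $\mathsf{SINR}_k$ is non-decreasing in $\beta$, so by the monotonicity above the objective is non-increasing, and $\beta$ may be raised until the tightest per-user constraint activates. This already shows that a minimizer exists with at least one budget binding and, crucially, that spending more power never hurts.

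The main obstacle is promoting this to the full equality \eqref{eq:power3} for \emph{every} user simultaneously, which is where the multi-user coupling bites. The common scaling only activates the tightest budget, and topping up a still-slack user~$k$ individually re-introduces the coupling: increasing $\|\mathbf{w}_k\|$ inflates the interference term $\rho|\mathbf{z}_{k'}^H\mathbf{w}_k|^2$ in the denominator of $\mathsf{SINR}_{k'}$ for every $k'\neq k$, so its net effect on the objective is \emph{a priori} indefinite and a purely per-user push to the boundary need not be monotone. I would therefore close the gap by exploiting the joint freedom in the directions $\{\mathbf{w}_k/\|\mathbf{w}_k\|\}$ and the phases $\pmb{\Phi}$, arguing that any interference generated by a slack user operating at full power can be re-steered so that full power is never strictly suboptimal; equivalently, that the added equality constraints in \eqref{ProBlemv2} do not cut off the optimal value of \eqref{ProBlemv1}. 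Making this step rigorous, rather than appealing to the per-user ``more power is better'' heuristic, is the delicate heart of the argument; by contrast, the endpoint $\mathsf{SINR}_k \to 0$ of the monotonicity argument (where $\mathrm{erfc} \to 1$) is handled automatically by $c < 1$.
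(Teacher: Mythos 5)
Your argument, up to the point where you stop, is essentially the paper's own proof with one improvement. The paper fixes $\{\mathbf{w}_k\},\pmb{\Phi}$, writes $f_k(\rho)=\rho|\mathbf{z}_k^H\mathbf{w}_k|^2/(\sum_{k'\neq k}\rho|\mathbf{z}_k^H\mathbf{w}_{k'}|^2+\sigma^2)$, and notes $\mathrm{d}f_k/\mathrm{d}\rho>0$; this is exactly your common-scaling step (scaling every $\mathbf{w}_k$ by $\beta$ is the same as replacing $\rho$ by $\beta^2\rho$, and only the noise term shrinks). Where you do better: the paper justifies monotonicity of the objective only by saying ``$\mathrm{erfc}(\cdot)$ is monotonically decreasing,'' ignoring the subtracted $\mathrm{erfc}^2$ term in \eqref{eq:SER}; your computation $f'(x)=2c\,\mathrm{erfc}'(x)\bigl(1-c\,\mathrm{erfc}(x)\bigr)<0$ using $c\,\mathrm{erfc}(x)\le c<1$ is the correct way to see that the composite SER is genuinely decreasing in $\mathsf{SINR}_k$.

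The step you flag and leave open is a real gap in your write-up: you never prove that \emph{every} per-user constraint \eqref{eq:power3} can be made active, and your proposed ``re-steering'' of a slack user's interference is only a sketch. You should be aware, however, that the paper's proof does not close this gap either — it passes directly from ``reliability improves as $\rho$ increases'' to ``at the optimal point, the inequality power constraint holds with equality,'' which, as you correctly observe, a common scaling only delivers for the user(s) with the largest $\|\mathbf{w}_k\|$; boosting a slack user alone inflates the term $\rho|\mathbf{z}_{k'}^H\mathbf{w}_k|^2$ in every other denominator of \eqref{eq:sinr}, so its effect on the average SER is indefinite. If you want to actually finish the argument rather than match the paper, one concrete route is: for $M\ge K$ and generic channels, add to a slack $\mathbf{w}_k$ a component $\alpha\mathbf{v}$ with $\mathbf{v}\perp\mathbf{z}_{k'}$ for all $k'\neq k$ and the phase of $\alpha$ chosen so that $|\mathbf{z}_k^H(\mathbf{w}_k+\alpha\mathbf{v})|\ge|\mathbf{z}_k^H\mathbf{w}_k|$; this raises $\|\mathbf{w}_k\|$ to the budget while leaving every other user's SINR unchanged and not decreasing user $k$'s, so the equality constraints cost nothing. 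That argument appears nowhere in the paper, so by the paper's standard your proof is complete; by a rigorous standard, both proofs are missing the same step.
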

\begin{proof}
For given $\{ \mathbf{w}_k \}$ an $\pmb{\Phi}$, let us introduce the SINR of user~$k$ as function of the transmit power based on \eqref{eq:sinr} as
$f_k(\rho) = \rho \left| \mathbf{z}_k^H \mathbf{w}_k \right|^2/(\sum_{k' = 1, k'\neq k}^K  \rho | \mathbf{z}_k^H \mathbf{w}_{k'} |^2 +   \sigma^2)$.
By taking the first-order derivative of $f_k(\rho)$ with respect to $\rho$, one can obtain
$d f_k(\rho) /d\rho = \sigma^2/\left(\sum_{k' = 1, k'\neq k}^K  \rho | \mathbf{z}_k^H \mathbf{w}_{k'} |^2 +   \sigma^2 \right)^2 > 0, \forall k$,
which indicates that $f_k(\rho)$ is monotonically increasing as $\rho$ grows up. In addition, the $\mathrm{erfc}(\cdot)$ function is monotonically descreasing  with $0 \leq \rho \leq P_{\max}$. Consequently, the SER obtained in \eqref{eq:SER} reduces as $\rho$ increases. Alternatively, the communication reliability of all the users will improve by increasing the transmit power. Consequently, at the optimal point, the inequality power constraint holds with equality.
\end{proof}
Even though the beamforming constraints in Lemma~\ref{lemma:Equi} effectively narrow the feasible region, solving problem~\eqref{ProBlemv2} is nontrivial. {Following a similar methodology as in \cite{luo2008dynamic}, problem~\eqref{ProBlemv2} is NP-hard since it can be interpreted as a maximal independent set problem. Nevertheless, the objective function and constraints of problem~\eqref{ProBlemv2} are continuous and bounded. Apart from this, the feasible region is a compact set. According to the Weierstrass' theorem \cite{van2018joint}, the global optimum to problem~\eqref{ProBlemv2} always exists.}}
\vspace{-0.25cm}
\subsection{Passive Beamforming Design with Linear Precoding}
{If the system deploys the linear beamforming techniques in \eqref{eq:linearwk}, we reformulate the average SER minimization to as 
\begin{equation}\label{ProBlemv3}
	\begin{aligned}
		& \underset{ \pmb{\Phi} }{\textrm{minimize}} 
		& & \, \, \frac{1}{K}\sum_{k=1}^K \overline{\mathsf{SER}}_k( \pmb{\Phi})  \\
		& \textrm{subject to} && -\pi \leq \theta_n \leq \pi, \forall n,
	\end{aligned}
\end{equation}
where the analytical downlink SER of user~$k$ is given in \eqref{eq:SERLinear}. We stress that jointly optimizing both the active and passive beamforming as in \eqref{ProBlemv2} needs $MK$ complex variables and  $N$ real phase shift variables with the $K+N$  constraints.  Problem~\eqref{ProBlemv3} is a suboptimal design compared to that of \eqref{ProBlemv2}, but it only contains $N$ variables and $N$ constraints. The inherent non-convexity still remains in \eqref{ProBlemv3} even though the optimal solution exists.}
\begin{remark}
Since SER is a function of the SINR as obtained in \eqref{eq:SER}, a straightforward solution to improve the communication reliability of a particular user can be obtained by increasing the SINR. Nonetheless, it may {increase the SER}
of the other users due to mutual interference. Hence, minimizing the total SER of the entire network as in \eqref{ProBlemv2} is a challenging task.
\end{remark}


\begin{figure}[t]
    \centering
    \includegraphics[width = 0.5\textwidth]{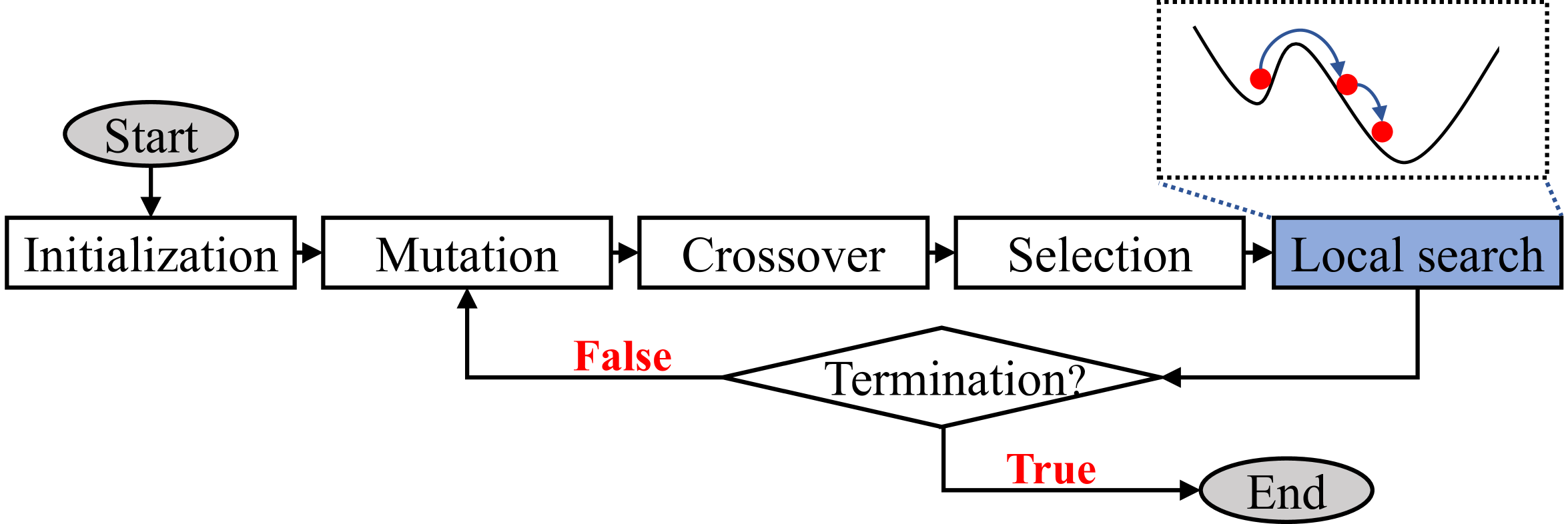}
    \caption{Followchart of the joint active and passive beamforming designs by the improved DE algorithm with a local search.}
    \label{fig:flowchart}
    \vspace{-0.5cm}
\end{figure}
\vspace{-0.3cm}
\section{Improved Differential Evolution-based Active and Passive Beamforming Designs} \label{Sec:DE}
{This section provides an efficient algorithm to solve problem~\eqref{ProBlemv2} in polynomial time based on an improved version of the DE. As a consequence, one can also exploit the proposed algorithm to obtain the phase shift design in problem~\eqref{ProBlemv3}.}

\subsection{Improved Different Evolution with Local Search}

{We propose an improved version of the DE method to design the phase shifts and the beamforming vectors by solving problem~\eqref{ProBlemv2} with the flowchart presented in Fig.~\ref{fig:flowchart}. The proposed algorithm begins with establishing an initial population, subsequently overseeing and enhancing this population across generations through evolution. Within each generation, the mutation and crossover mechanisms are applied to each individual to generate new individuals. Each new individual undergoes evaluation in comparison to its parent: If the offspring demonstrates better fitness, i.e., the objective function of problem~\eqref{ProBlemv2}, it replaces the parent in the next generation. At the end of each generation, the individual with the minimum fitness value is designated as the current solution to  problem~\eqref{ProBlemv2}. To mitigate the risk that the individuals are entrenched in a local minimum, we exploit a local search, guided by the improvements in fitness, to facilitate an escape from the local minimum. Moreover, the efficacy of the improved DE algorithm hinges on two pivotal parameters consisting of the scale parameter $\mathsf{F}$ and the crossover rate parameter $\mathsf{CR}$. In contrast to the conventional approach of predefining these parameters used in the standard DE, i.e., see \cite{shade} and references therein, we dynamically adapt them based on evolving search behaviors. Our proposal is described in Algorithm~\ref{alg:Improved_shade}.}

\begin{algorithm}[t]
    \caption{Improved DE Beamforming Design}
    \label{alg:Improved_shade}
        \textbf{Input:} {The channels $\mathbf{H}$, $\left\{\mathbf{g}_k\right\}_{k = 1}^K$, and $\left\{\mathbf{u}_k\right\}_{k = 1}^K$; the initial scale factor $\mathsf{F}_{\mathrm{init}}$; the initial crossover rate $\mathsf{CR}_{\mathrm{init}}$; the maximum number of generations $G_{\max}$; the maximum number of fitness evaluations $\mathsf{NFE}_{\max}$ and local search parameter $\tilde{\sigma}$}\\
        \textbf{Output:} The phase shift matrix $\pmb{\Phi}$ and the beamforming vectors $\{\mathbf{w}_k\}$
    \begin{algorithmic}[1] 
        \STATE The generation index $G \leftarrow 1$;
        \STATE Randomly initialize a population $\mathcal{P}^{(1)}$ of $I$ individuals.
        \STATE $\mathsf{MCR}_i \leftarrow \mathsf{CR}_{\operatorname{init}}$;
         $\mathsf{MF}_i \leftarrow \mathsf{F}_{\operatorname{init}}$ $\forall i \in \{1,\ldots,H\}$
        \WHILE{\textit{Termination condition not met}}
%
            \STATE  $\mathcal{B}^{(G)} \leftarrow \varnothing$
            \FOR{$i\leftarrow 0$ to $I$}
                \STATE Calculate $\mathsf{CR}^{(iG)}$ and $\mathsf{F}^{(iG)}$ as in \eqref{eq:CRMCR};
                \STATE Generate mutant vector $\mathbf{u}^{(iG)}$ using \eqref{eq:mutation};
                \STATE Generate trial solution $\pmb{\omega}^{(iG)}$  
                using \eqref{eq:crossover};
                \IF{$f\left(\pmb{\omega}^{(iG)}\right) \leq f\left(\mathbf{x}^{(iG)}\right)$}
                    \STATE $\mathcal{B}^{(G)} \leftarrow \mathcal{B}^{(G)} \cup \pmb{\omega}^{(iG)} $;
                \ELSE
                    \STATE $\mathcal{B}^{(G)} \leftarrow \mathcal{B}^{(G)} \cup \mathbf{x}^{(iG)} $
                \ENDIF
            \ENDFOR
            \STATE Update $\mathsf{MCR}$ and $\mathsf{MF}$ using method \cite{shade};
        \STATE Create population $\mathcal{Q}^{(G)}$ based on $\mathcal{B}^{(G)}$ by using local search step;
            \STATE $G \leftarrow G + 1$
            \STATE $\mathcal{P}^{(G)} \leftarrow \mathcal{Q}^{(G)}$
        \ENDWHILE
    \end{algorithmic}
\end{algorithm}
\subsubsection{Solution representation and population initialization}
{Each individual represents a potential solution, and the fitness value of an individual corresponds to the objective function of problem~\eqref{ProBlemv2}. We represent each individual by an $(N + 2MK)$-dimensional vector, where each dimension has real values within the range $[-1, 1]$. This vector can be adapted into a phase shift matrix $\pmb{\Phi}$ and the beamforming vectors $\{\mathbf{w}_k\}$.\footnote{The encoded individual indicates that a complex beamforming vector $\mathbf{w}_k \in \mathbb{C}^M$ is rearranged into a real one of length $2M$.}
Specifically, the first $N$ elements correspond to the phase shift matrix $\pmb{\Phi}$. To convert them into the phase shift matrix, we multiply each element by scale parameter $\pi$ to ensure the constraints \eqref{eq:value_phi}. The remaining $2MK$ elements are divided into the $K$ segments, corresponding to the $K$ users. Note that two consecutive elements should represent a complex number corresponding to one element of the beamforming vector $\mathbf{w}_k$. After the conversion from segments to the beamforming vector $\mathbf{w}_k$, we normalize $\mathbf{w}_k, \forall k,$ to satisfy the constraints \eqref{eq:power3}.}
\begin{example}
{Let us consider a toy example with the configuration $N=4$, $M=2$, and $K=2$ having an individual illustrated in Fig.~\ref{fig:individual}. After decoding the individual, we obtain the phase shift matrix $\pmb{\Phi}$ as follows 
\begin{equation}
\pmb{\Phi} = \mathrm{diag}\left(\big[ e^{0.3\pi j}, e^{-0.5\pi j}, e^{-0.2\pi j}, e^{0.7\pi j} \big]^T\right),
\end{equation}
and the intermediate version of the beamforming vectors are 
\begin{align}
&\tilde{\mathbf{w}}_1 = [-0.3 + 0.4j, 0.5 - 0.5j]^T,\\
&\tilde{\mathbf{w}}_2 = [0.6 + 0.8j, -0.4 + 0.3j ]^T.
\end{align}
By performing the normalization and then scaling up by the factor $\sqrt{P_{\max}/\rho}$, the beamforming vectors $\mathbf{w}_1$ and $\mathbf{w}_2$ are respectively obtained as follows
\begin{align}
    &\mathbf{w}_1 = \frac{\tilde{\mathbf{w}_1}\sqrt{P_{\max}}}{\|\tilde{\mathbf{w}}_1\|\sqrt{\rho}} = \sqrt{\frac{P_{\max}}{\rho}}\left[-\frac{\sqrt{3}}{5} + \frac{4\sqrt{3}}{15}j, \frac{\sqrt{3}}{3} - \frac{\sqrt{3}}{3}j\right]^T,\\
    &\mathbf{w}_2 = \frac{\tilde{\mathbf{w}}_2\sqrt{P_{\max}}}{\|\tilde{\mathbf{w}}_2\|\sqrt{\rho}} = \sqrt{\frac{P_{\max}}{\rho}} \left[\frac{6\sqrt{5}}{25} + \frac{8\sqrt{5}}{25}j, -\frac{4\sqrt{5}}{25} + \frac{3\sqrt{5}}{25}j\right]^T,
\end{align}
which satisfy the constraints \eqref{eq:power3} of problem~\eqref{ProBlemv2}.}
\end{example}
\begin{figure}[t]
    \centering
    \includegraphics[width =2.8in, trim = 1cm 0.2cm 0cm 1.78cm, clip = true]{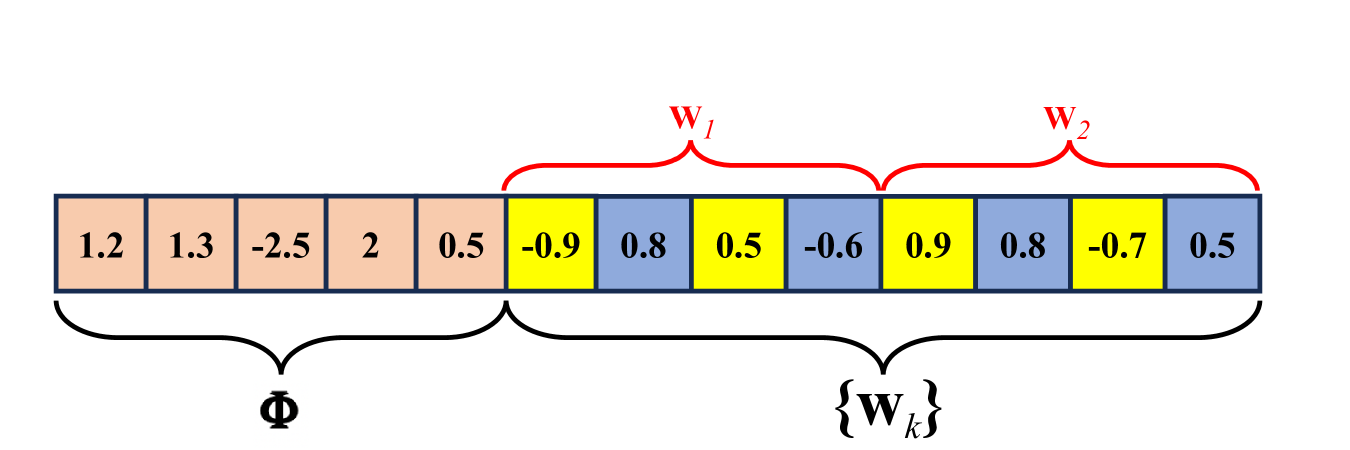}
    \caption{An individual with $N = 4$, $M = 2$, and $K = 2$.}
    \label{fig:individual}
    \vspace{-0.5cm}
\end{figure}
{At the initial stage, a population $\mathcal{P}^{(1)}$ is initialized that consists of $I$ individuals, i.e., $|\mathcal{P}^{(1)} | = I$, each is randomly initialized. To identify individuals within the population across generations, we designate the $i$-th individual of the population in the $G$-th generation, denoted by $\mathbf{x}^{(iG)} \in \mathbb{R}^{N+ 2MK}$ as:
\begin{equation}
    \mathbf{x}^{(iG)} =  \left[x_1^{(iG)},x_2^{(iG)}, \ldots, x_N^{(i G)},x_{N+1}^{(iG)},\ldots, x_{N + 2MK}^{(iG)}  \right]^T,
\end{equation}
where $x_n^{(iG)} \in  [-1, 1]$ if $n \in \{1,\ldots, N\}$, which represents the $n$-th phase shift coefficient. Otherwise, $x_n^{(iG)} \in  [-1, 1]$ with $n \in \{N+1,\ldots 2MK \}$ model either the real or imaginary of the beamforming coefficients.
Additionally, we denote $\mathbf{x}^{(\operatorname{best} G)}$ as the individual with the minimum fitness in the $G$-th generation that is obtained by evaluating the objective function of problem~\eqref{ProBlemv2} for the current population.
\subsubsection{Mutation}
In the $G$-th generation, each individual $\mathbf{x}^{(iG)}$ will create a mutant vector, denoted by $\mathbf{u}^{(iG)} \in \mathbb{R}^{N + 2MK}$ as
\begin{equation}
\mathbf{u}^{(iG)}=\left[ u_1^{(iG)},u_2^{(iG)}, \ldots,  u_N^{(iG)}, u_{N+1}^{(iG)}, \ldots, u_{N + 2MK}^{(iG)} \right]^T,
\end{equation}
through the mutation operators. In this paper, we first use the $DE/best/1$ operator to create a mutant vector, which is mathematically described as 
\begin{equation}\label{eq:mutation}
    \mathbf{u}^{(iG)} = \mathbf{x}^{(\operatorname{best} G)} + \mathsf{F}^{(iG)} \left(\mathbf{x}^{(r_1G)} - \mathbf{x}^{(r_2G)}\right),
\end{equation}
where $\mathsf{F}^{(iG)}$ denotes the scale factor parameter assigned to the $i$-th individual to augment the differential vectors. Specifically,  $\mathbf{x}^{(r_1G)} - \mathbf{x}^{(r_2 G)}$ represents the disparity between the two randomly selected individuals, i.e., $\mathbf{x}^{(r_1G)}$ and $\mathbf{x}^{(r_2 G)}$, within the population. It is noteworthy that the $i$-th individual should possess its specific scale factor $\mathsf{F}^{(iG)}$ contingent on its behavior, as detailed in \cite{shade}. 
Nonetheless, \eqref{eq:mutation} may generate some elements of $\mathbf{u}^{(iG)}$ that violate the value domain constraint. To handle this issue, we employ the normalization procedures to ensure that the elements of each individual are in the range $[-1,1]$. Concretely, the first $N$ elements are normalized as 
\begin{equation}
    {u}^{(iG)}_n \leftarrow {u}^{(iG)}_n + 2 \left\lfloor (1-{u}^{(iG)}_n)/2 \right\rfloor  , \ \forall n \in \{1, \ldots, N\}.
    \label{eq:normal_angle}
\end{equation}
The normalization in \eqref{eq:normal_angle} ensures that the phase shift coefficients of the matrix $\pmb{\Phi}$ lie on the unit circle and the trigonometric properties remain unchanged after normalization.
The remaining part of $\mathbf{u}^{(iG)}$ regarding the remaining $2MK$ elements of the beamforming vectors $\{\mathbf{w}_k\}$ are normalized by 
\begin{equation}
    {u}^{(iG)}_n  = \begin{cases}
        (- 1 + {x}^{(iG)}_n)/2, & \mbox{if }  {u}^{(iG)}_n < -1,\\
        (1 + {x}^{(iG)}_n)/2, & \mbox{if }  {u}^{(iG)}_n > 1,
    \end{cases}  
    \label{eq:normal_w}
\end{equation}
with $n \in \{N+1, \ldots, N+ 2MK\}$.  Although the individuals comprise both the passive and active beamforming coefficients with different properties, the procedures in \eqref{eq:normal_angle} and \eqref{eq:normal_w} keep a homogeneous population without dominant elements.}

\subsubsection{Crossover}
{After the mutation, a trial vector $\pmb{\omega}^{(iG)} \in \mathbb{R}^{N + 2MK}$ is generated from a combination of $\mathbf{x}^{(iG)}$ and $\mathbf{u}^{(iG)}$ via the crossover operator where the $n$-th element, denoted by $\omega^{(iG)}_n$, is defined as
\begin{equation}
    \omega^{(iG)}_n = \begin{cases}
        {u}^{(iG)}_n, & \mbox{if }  \mathcal{U}[0,1] \leq  ~\mathsf{CR}^{(iG)}~ \mbox{ or } n = n_{\operatorname{rand}}, \\
        {x}^{(iG)}_n ,& \mbox{otherwise},
    \end{cases}
    \label{eq:crossover}
\end{equation}
where $\mathcal{U}[0,1]$ generates a random variable uniformly distributed in the range $[0,1]$. The parameter $\mathsf{CR}^{(iG)}$ is flexibly selected to adapt each individual $\mathrm{\omega}^{(iG)}$. We note that the condition $\mathcal{U}[0,1] \leq  ~\mathsf{CR}^{(iG)}$ may lead to a situation in which the descendants are totally the same as their parents. Consequently, $n_{\mathrm{rand}} \in \{1, \ldots, N+2MK\}$ is randomly preselected to ensure that $\pmb{\omega}^{(iG)}$ has at least one dimension different from $\mathbf{x}^{(iG)}$, indicating that after the evolutionary process, the offspring is not identical to the parent.}

\subsubsection{Selection}
{After each trial vector $ \pmb{\omega}^{(iG)}$ is generated, the corresponding phase shift coefficients $\pmb{\Phi}^{(iG)}$ and beamforming vectors $\{ \mathbf{w}_k^{(iG)} \}$ are obtained. These potential solutions will be evaluated by using the fitness function that is defined from the objective function of problem \eqref{ProBlemv2} as
\begin{equation}
\label{eq:fitness_function}
f (\{ \mathbf{w}_k^{(iG)} \}, \pmb{\Phi}^{(iG)} ) =  \frac{1}{K}\sum_{k=1}^K \overline{\mathsf{SER}}_k(\{ \mathbf{w}_k^{(iG)}  \}, \pmb{\Phi}^{(iG)} ),
\end{equation}
where the SER of user~$k$ is computed as in \eqref{eq:SER}. If the fitness value obtained by the pair $\{\{ \mathbf{w}_k^{(iG)} \}, \pmb{\Phi}^{(iG)}\}$  is better than that of individual $\mathbf{x}^{(iG)}$, $\pmb{\omega}^{(iG)}$ will replace $\mathbf{x}^{(iG)}$ in the population. Alternatively, $\mathbf{x}^{(iG)}$ is removed from the population. The selection process always ensures the non-increasing objective function of problem~\eqref{ProBlemv2} across the generations.}

\subsubsection{Local search}
{Even though the DE algorithm can find the solution in a short period of time, its drawback is that the search process may get trapped in a local optima. When some individuals of the population are located in local optima, the evolution process can lead all the remaining to move toward this locality, causing premature convergence. In this paper, to prevent solutions from prematurely converging, we propose to exploit a local search technique to refine some of the solutions from DE, enabling them to escape local optima as illustrated in Fig.~\ref{fig:flowchart}. 
After the selection, we randomly choose $\lambda_G \geq 0$ individuals for consideration to move them to nearby positions. If the new position is better than the previous one, the corresponding new individual replaces the old one in the population. To ensure that the local search process does not increase computational time drastically, as it becomes less likely to find better potential positions once approaching the stable solution, the number of selected individuals is 
\begin{equation}
    \lambda_G = \left\lfloor \frac{f^{(cG)} (G_{\max} - G) I }{f^{(\operatorname{best}(G-1))}G_{\max}} \right\rfloor,
    \label{eq:parameter_local_search}
\end{equation}
where $f^{(cG)}$ represents the best fitness value after the selection in the $G$-th generation, which is
\begin{equation}
f^{(cG)} = \min_{i}\{f (\{ \mathbf{w}_k^{(iG)} \}, \pmb{\Phi}^{(iG)} ) \},
\end{equation}
while $f^{(\mathrm{best}(G-1))}$ is the best fitness value in the $(G-1)$-th generation. In \eqref{eq:parameter_local_search}, the fraction $(G_{\max} -G)I/G_{\max}$ implies that the maximum number of local search operations will decrease gradually over generations. Along the generations, the number of conducted local search operations is inversely proportional to the improvement achieved by the improved DE indicating that Algorithm~\ref{alg:Improved_shade} is effective. To move the solution to a neighbor location, we create a Gaussian vector $\pmb{\xi}^{(iG)} \sim \mathcal{N}(\mathbf{0}, \tilde{\sigma}^2 \mathbf{I}_{N + 2MK})$ with the small variance $\tilde{\sigma}^2$ effective in exploring the neighborhoods. For an individual $\pmb{\omega}^{(iG)}$ in $\mathcal{B}^{(G)}$, the neighboring individual is generated as
\begin{equation}
    \tilde{\pmb{\omega}}^{(iG)} = \pmb{\omega}^{(iG)} + \pmb{\xi}^{(iG)}.
    \label{eq:neighbor_solution}
\end{equation}
In this case, $\tilde{\pmb{\omega}}^{(iG)}$ is a potential solution and its fitness value is computed accordingly. If the obtained fitness value is better than that of $\pmb{\omega}^{(iG)}$, then $\tilde{\pmb{\omega}}^{(iG)}$ is chosen for $\mathcal{Q}^{(G)}$, conversely $\pmb{\omega}^{(iG)}$ is chosen.}
\subsubsection{Parameter adaptation}
{The performance of Algorithm~\ref{alg:Improved_shade} is impacted by several hyperparameters such as $\mathsf{F}^{(iG)}$ in \eqref{eq:mutation} and $\mathsf{CR}^{(iG)}$ in \eqref{eq:crossover} because of their crucial roles in generating new solutions. Instead of keeping these parameters fixed as is done in the canonical DE, we have incorporated a technique known as success-history-based parameter adaptation (SHADE) \cite{shade} into our proposed algorithm to enable an automatic adjustment of these hyperparameters. For the mutation of the $i$-th individual, we introduce two memory arrays of length $H \geq 1$, denoted as $\mathsf{MCR}$ and $\mathsf{MF}$, to retain information regarding the successful crossover rate and scale factors. These values, which have contributed in producing improved the solutions compared to the previous generations, are stored and later utilized to guide the algorithm in determining the crossover rates and scale factors. In the $G$-th generation, the parameters $\mathsf{F}^{(iG)}$ and $\mathsf{CR}^{(iG)}$  will be updated by the behaviors of the corresponding $i$-th individual as
\begin{equation} \label{eq:CRMCR}
    \mathsf{CR}^{(iG)} \sim \mathcal{N}\left(\mathsf{MCR}_{r_i}, 0.1\right) \mbox{ and } \mathsf{F}^{(iG)} \sim \mathcal{N}\left(\mathsf{MF}_{r_i}, 0.1\right),
\end{equation}
where $r_i$ is an index randomly selected in the range of $[1,H]$. If $\mathsf{CR}^{(iG)} , \mathsf{F}^{(iG)} \notin [0, 1]$,  these parameters are assigned the initial values, which are $\mathsf{CR}_{\mathrm{init}}$ and $\mathsf{F}_{\mathrm{init}}$.}
\subsubsection{Termination condition}
{Algorithm~\ref{alg:Improved_shade} will be terminated if one of the following three criteria is satisfied: \textit{i)} the number of generations reaches a maximum specified, \textit{ii)} the number of evaluations for the fitness reaches a maximum specified, and \textit{iii)} the objective function remains unchanged over predetermined generations.}
\vspace{-0.25cm}
\subsection{Convergence and Computational Complexity}
{We now probabilistically analyze the convergence of Algorithm~\ref{alg:Improved_shade} to an optimal solution. Over the generations, solutions to the problem gradually converge towards an optimal value or a local optimum. The convergence of Algorithm~\ref{alg:Improved_shade} is determined based on the following theorem.}
\begin{theorem}\label{theorem:local_convergence}
{Let us introduce  $\mathcal{S}_{\varepsilon}^{*}$ to be the space of the $\varepsilon$-optimal solution to ~\eqref{ProBlemv2}, which is
\begin{multline} \label{eq:Sep}
\mathcal{S}_{\varepsilon}^{\ast} = \Big\{ \{\mathbf{w}_k \}, \pmb{\Phi} \, \big| \,  \lvert f(\{\mathbf{w}_k \}, \pmb{\Phi} ) - f(\{\mathbf{w}_k^{\ast} \}, \pmb{\Phi}^\ast) \rvert < \varepsilon, \\ \{\mathbf{w}_k \}, \pmb{\Phi} \in \mathcal{S} \mbox{ and } \{\mathbf{w}_k^\ast \}, \pmb{\Phi}^\ast \in \mathcal{S}\Big\},
\end{multline}
where $f(\cdot)$ and  $\mathcal{S}$ denote the objective function and the feasible region of in problem \eqref{ProBlemv2}, respectively. Besides,   $\{\mathbf{w}_k^\ast \}, \pmb{\Phi}^\ast$ is the optimal solution and $\varepsilon$ is a small positive value. For a population $\mathcal{P}$ with $I$ individuals, the probability of the population converging to an individual belonging to $\mathcal{S}_{\varepsilon}^{*}$ by exploiting Algorithm~\ref{alg:Improved_shade} is defined as follows
\begin{equation} \label{eq:LowerBound}
    \begin{split}
         &\mathsf{Pr}\left(\mathcal{P} \cap \mathcal{S}_{\varepsilon}^{\ast} \neq \varnothing \right) \geq  1 -\left(1 - \mu_1\left(\mathcal{S}_{\varepsilon}^{\ast}\right)P_{\mathrm{ep}} \right )^{I}  \times \\
          &\left(1 - \mu_2\left(\mathcal{S}_{\varepsilon}^{\ast}\right)\left(\frac{1}{\tilde{\sigma}\sqrt{2\pi}}\right)^{N+2MK}e^{\frac{-2(N + 2MK)}{\tilde{\sigma}^2}}\right)^{\tilde{\lambda}},      
    \end{split}
\end{equation}
where $P_{\mathrm{ep}} \in [0, 1]$ is the mutation probability of each individual; $\mu_1 \left(\mathcal{S}_{\varepsilon}^{*}\right)$ and $\mu_2 \left(\mathcal{S}_{\varepsilon}^{*}\right)$ are the measures to the space $\mathcal{S}_{\varepsilon}^{\ast}$ regarding the mutation and the local search, respectively; and $\tilde{\lambda}$ as the number of individuals utilized for the local search step in each generation.}  
\end{theorem}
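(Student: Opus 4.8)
The plan is to read the event $\{\mathcal{P}\cap\mathcal{S}_{\varepsilon}^{\ast}\neq\varnothing\}$ as the union of the independently generated ``success'' events in which at least one of the two stochastic mechanisms of Algorithm~\ref{alg:Improved_shade} --- the $I$ mutation operations in \eqref{eq:mutation} or the $\tilde{\lambda}$ local-search perturbations in \eqref{eq:neighbor_solution} --- deposits an individual inside the $\varepsilon$-optimal set $\mathcal{S}_{\varepsilon}^{\ast}$ defined in \eqref{eq:Sep}. Rather than bound this union directly, I would bound its complement, the ``total-failure'' event in which \emph{every} mutation and \emph{every} local search misses $\mathcal{S}_{\varepsilon}^{\ast}$. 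Because the random draws driving distinct individuals and distinct operations are mutually (conditionally) independent, the total-failure probability factorizes into a product of per-operation failure probabilities, so it suffices to bound the single-operation success probability from below for each mechanism in isolation and then take the complement.

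First I would handle the mutation mechanism. A given individual is mutated with probability $P_{\mathrm{ep}}$, and conditioned on a mutation occurring the mutant lands in $\mathcal{S}_{\varepsilon}^{\ast}$ with probability at least $\mu_{1}(\mathcal{S}_{\varepsilon}^{\ast})$, the measure of the target set under the mutation kernel; hence each individual succeeds with probability at least $\mu_{1}(\mathcal{S}_{\varepsilon}^{\ast})P_{\mathrm{ep}}$ and fails with probability at most $1-\mu_{1}(\mathcal{S}_{\varepsilon}^{\ast})P_{\mathrm{ep}}$. Independence across the $I$ individuals raises this to the power $I$, yielding the first factor in \eqref{eq:LowerBound}.

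Next I would treat the local search, where the key computation is a uniform lower bound on the Gaussian density. Setting $d=N+2MK$, the perturbation is $\pmb{\xi}\sim\mathcal{N}(\mathbf{0},\tilde{\sigma}^{2}\mathbf{I}_{d})$, and a trial succeeds when $\pmb{\omega}^{(iG)}+\pmb{\xi}\in\mathcal{S}_{\varepsilon}^{\ast}$. Since the normalizations in \eqref{eq:normal_angle}--\eqref{eq:normal_w} confine every coordinate of both the incumbent $\pmb{\omega}^{(iG)}$ and of any admissible target to $[-1,1]$, the displacement needed to reach $\mathcal{S}_{\varepsilon}^{\ast}$ obeys $\|\pmb{\xi}\|^{2}\le 4d$; consequently the density $(\tilde{\sigma}\sqrt{2\pi})^{-d}\exp(-\|\pmb{\xi}\|^{2}/(2\tilde{\sigma}^{2}))$ is bounded below by $(\tilde{\sigma}\sqrt{2\pi})^{-d}e^{-2d/\tilde{\sigma}^{2}}$ throughout the relevant region. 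Integrating this bound against the measure $\mu_{2}(\mathcal{S}_{\varepsilon}^{\ast})$ gives a per-search success probability of at least $\mu_{2}(\mathcal{S}_{\varepsilon}^{\ast})(\tilde{\sigma}\sqrt{2\pi})^{-d}e^{-2d/\tilde{\sigma}^{2}}$, and raising the complementary failure probability to the power $\tilde{\lambda}$ produces the second factor. Multiplying the two failure factors and subtracting from unity delivers \eqref{eq:LowerBound}; the elitist selection rule, which guarantees a non-increasing best fitness across generations, ensures that once an $\varepsilon$-optimal individual is produced it is retained, so the bound is a legitimate lower bound on $\mathsf{Pr}(\mathcal{P}\cap\mathcal{S}_{\varepsilon}^{\ast}\neq\varnothing)$ and in fact tightens toward $1$ as generations accumulate.

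I expect the main obstacle to be the rigorous justification of the two structural assumptions underpinning the factorization. On one hand, one must argue that $\mathcal{S}_{\varepsilon}^{\ast}$ carries strictly positive measure so that the bound is non-vacuous; this follows from the continuity of the objective of \eqref{ProBlemv2} (already used in the Weierstrass existence argument), which forces $\mathcal{S}_{\varepsilon}^{\ast}$ to contain an open neighborhood of the global optimum. On the other hand, the uniform density bound hinges on the box geometry capping $\|\pmb{\xi}\|^{2}$ by $4d$, and the factorized failure probability hinges on the conditional independence of the mutation and local-search randomness across individuals and operations; making both precise --- particularly confirming that the worst-case displacement is genuinely bounded by the box diameter rather than by the unbounded support of the Gaussian --- is the technically delicate part of the argument.
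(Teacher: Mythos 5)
Your proposal follows essentially the same route as the paper's own proof: you bound the complementary total-failure event, obtain the first factor from the per-individual mutation success probability $\mu_1(\mathcal{S}_{\varepsilon}^{\ast})P_{\mathrm{ep}}$ raised to the power $I$, and obtain the second factor by lower-bounding the Gaussian perturbation density on the translated target set using the box geometry that confines each coordinate of $\hat{\mathbf{x}}-\pmb{\omega}^{(iG)}$ to $[-2,2]$, exactly as in \eqref{eq:Omegav1}--\eqref{eq:PrLocal1}. The only cosmetic difference is that the paper combines the two failure factors via Bayes' theorem (conditioning the local-search failure on the DE failure) rather than invoking independence outright, but this yields the same product and the same bound.
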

\begin{proof}
{The proof is based on computing the probability that the optimal solution appears in the population as the generations grow. The detailed proof is available in  Appendix~\ref{appen:local_convergence}.}
\end{proof}
Theorem \ref{theorem:local_convergence} unveils that Algorithm~\ref{alg:Improved_shade} with the local search has better convergence probability than the standard DE. In more detail, the following inequality holds
\begin{equation}
\begin{split}
 & \left(1 - \mu_1\left(\mathcal{S}_{\varepsilon}^{\ast}\right)P_{\mathrm{ep}} \right )^{I} \left(1 - \mu_2\left(\mathcal{S}_{\varepsilon}^{\ast}\right)\left(\frac{1}{\tilde{\sigma}\sqrt{2\pi}}\right)^{N+2MK}e^{\frac{-2(N + 2MK)}{\tilde{\sigma}^2}}\right)^{\tilde{\lambda}} \\
 & \leq \left(1 - \mu_1\left(\mathcal{S}_{\varepsilon}^{\ast}\right)P_{\mathrm{ep}} \right )^{I},
\end{split}
\end{equation}
whose right-hand side is obtained by a canonical form of the DE. It demonstrates that the local search improves the lower bound of the convergence probability as shown in \eqref{eq:LowerBound}. {Furthermore, Theorem~\ref{theorem:local_convergence} also indicates that Algorithm~\ref{alg:Improved_shade} does converge to the global optimum from an initial population within the feasible domain after a sufficiently large number of the generations as shown in Corollary~\ref{corollary:converge}.
\begin{corollary}
\label{corollary:converge}
Let $\mathcal{P}^{(G)}$ be the population at the $G$-th generation, then the convergence to the global optimum of problem~\eqref{ProBlemv2} in the probability obtained by Algorithm~\ref{alg:Improved_shade} is
    \begin{equation}
    \lim_{G\rightarrow \infty}\  \mathsf{Pr}\left(\mathcal{P}^{(G)}\cap \mathcal{S}^*_{\varepsilon} \neq \varnothing \right) = 1.
\end{equation}
\end{corollary}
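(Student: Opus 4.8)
The plan is to bootstrap the single-generation lower bound of Theorem~\ref{theorem:local_convergence} into the asymptotic statement by exploiting the elitism of the selection step. First I would abbreviate the failure factor on the right-hand side of~\eqref{eq:LowerBound} as
\[
q = \left(1 - \mu_1\left(\mathcal{S}_{\varepsilon}^{\ast}\right)P_{\mathrm{ep}} \right )^{I}\left(1 - \mu_2\left(\mathcal{S}_{\varepsilon}^{\ast}\right)\left(\tfrac{1}{\tilde{\sigma}\sqrt{2\pi}}\right)^{N+2MK}e^{\frac{-2(N + 2MK)}{\tilde{\sigma}^2}}\right)^{\tilde{\lambda}},
\]
so that Theorem~\ref{theorem:local_convergence} reads $\mathsf{Pr}(\mathcal{P}\cap\mathcal{S}_{\varepsilon}^{\ast}\neq\varnothing)\geq 1-q$. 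The key point is that $q<1$ holds \emph{uniformly} across generations: since $\mathcal{S}_{\varepsilon}^{\ast}$ is a neighborhood of the optimum with positive volume, its measures $\mu_1(\mathcal{S}_{\varepsilon}^{\ast})$ and $\mu_2(\mathcal{S}_{\varepsilon}^{\ast})$ are strictly positive, the mutation probability satisfies $P_{\mathrm{ep}}>0$, and the Gaussian density factor is strictly positive for finite $\tilde{\sigma}$. Even in generations where the local-search count $\lambda_G$ vanishes, the first factor alone gives $q\leq(1-\mu_1(\mathcal{S}_{\varepsilon}^{\ast})P_{\mathrm{ep}})^{I}<1$, so there is a constant $\bar q\in[0,1)$ with $q\leq\bar q$ at every generation.

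The second ingredient is the monotonicity supplied by the selection rule, which always preserves the non-increasing objective value and retains the best individual $\mathbf{x}^{(\operatorname{best} G)}$. Hence, if some individual of $\mathcal{P}^{(G-1)}$ already lies in $\mathcal{S}_{\varepsilon}^{\ast}$, that individual (or an even fitter one) survives into $\mathcal{P}^{(G)}$, so $\mathcal{P}^{(G)}\cap\mathcal{S}_{\varepsilon}^{\ast}\neq\varnothing$ as well. Writing the failure event as $A_G=\{\mathcal{P}^{(G)}\cap\mathcal{S}_{\varepsilon}^{\ast}=\varnothing\}$, this means the events are nested, i.e. $A_G\subseteq A_{G-1}$. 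I would then condition on the population at generation $G-1$ and apply the single-generation bound of Theorem~\ref{theorem:local_convergence}---which only requires the current population to lie in the feasible region $\mathcal{S}$---to the fresh mutation and local-search attempts of generation $G$, obtaining $\mathsf{Pr}(A_G\mid A_{G-1})\leq q\leq\bar q$.

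Combining the two ingredients, $\mathsf{Pr}(A_G)=\mathsf{Pr}(A_G\mid A_{G-1})\,\mathsf{Pr}(A_{G-1})\leq\bar q\,\mathsf{Pr}(A_{G-1})$, and an immediate induction from $\mathsf{Pr}(A_0)\leq 1$ yields $\mathsf{Pr}(A_G)\leq\bar q^{\,G}$. Since $\bar q<1$, letting $G\to\infty$ forces $\mathsf{Pr}(A_G)\to 0$, so that $\lim_{G\to\infty}\mathsf{Pr}(\mathcal{P}^{(G)}\cap\mathcal{S}_{\varepsilon}^{\ast}\neq\varnothing)=\lim_{G\to\infty}\bigl(1-\mathsf{Pr}(A_G)\bigr)=1$, which is exactly the claim.

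The main obstacle I anticipate is justifying the uniform bound $\bar q<1$ rather than a merely generation-dependent $q<1$: the measures $\mu_1,\mu_2$, the effective variance $\tilde{\sigma}$, and $\tilde{\lambda}$ are tied to the evolving population and the adaptive SHADE parameters, so one must verify they stay bounded away from the degenerate values (zero measure, infinite variance) uniformly in $G$. A closely related care point is that the per-generation bound of Theorem~\ref{theorem:local_convergence} must remain valid \emph{conditionally} on an arbitrary feasible population state; this is precisely what secures the inequality $\mathsf{Pr}(A_G\mid A_{G-1})\leq\bar q$ and lets the nested-event chaining go through.
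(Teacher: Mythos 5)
Your proposal is correct and follows essentially the same route as the paper: elitism makes the failure events nested so the per-generation bounds from Theorem~\ref{theorem:local_convergence} multiply across generations, and the resulting product of failure probabilities is driven to zero. The only cosmetic difference is the final limiting step --- you invoke a uniform bound $\bar q<1$ (valid because the mutation factor $\left(1-\mu_1\left(\mathcal{S}_{\varepsilon}^{\ast}\right)P_{\mathrm{ep}}\right)^{I}$ is generation-independent) to obtain geometric decay $\bar q^{\,G}$, whereas the paper keeps the generation-dependent success probability $\psi^{(g)}$ and bounds $\prod_{g}\left(1-\psi^{(g)}\right)\le e^{-\sum_{g}\psi^{(g)}}\rightarrow 0$ via divergence of the series.
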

\begin{proof}
The proof is accomplished as a consequence of Theorem~\ref{theorem:local_convergence} and the sandwich theorem. The detailed proof is available in Appendix~\ref{Appdixcorollary:converge}.
\end{proof}
}
We now analyze the computational complexity of Algorithm~\ref{alg:Improved_shade} based on the basic operations with the complexity of $\mathcal{O}(1)$ including addition, subtraction, multiplication, inversion, composition, and transpose. For user~$k$, evaluating the aggregated channel $\mathbf{z}_k = \mathbf{u}_k + \mathbf{H}\pmb{\Phi}\mathbf{g}_k$ requires a computational complexity in the order of  $\mathcal{O}\left(MN^2\right)$.  Conditioned on $\mathbf{z}_k, \forall k$, the computational complexity of evaluating 
$\mathsf{SINR}_k(\{\mathbf{w}_k\}, \Phi)$  is in the order of $\mathcal{O}(KM)$. The complexity to compute the SER for user~$k$ is thus in the order of $\mathcal{O}\left(MN^2 + MK\right)$, it leads to that of the objective function is 
$\mathcal{O}\left(KMN^2 + MK^2\right)$. Regarding the computational complexity of Algorithm \ref{alg:Improved_shade}, the initialization step requires $\mathcal{O} (IN + 2IKM)$. In each generation, sorting the population to extract the best individual for \eqref{eq:mutation} requires $\mathcal{O}(I\log(I))$. The mutation and crossover both require $\mathcal{O}\left(IN + 2IKM\right)$. The selection  requires $\mathcal{O}(I)$. The local search  requires $\mathcal{O}\left(\tilde{\lambda}N + 2\tilde{\lambda}KM\right)$, while the parameter adaption needs $\mathcal{O}(I)$. Therefore, the  computatinal complexity of Algorithm~\ref{alg:Improved_shade} is $\mathcal{O}\left(GI\log(I) + 2GI(N + 2KM))\right)$, where $G$ is the number of generations.
\begin{remark}
Algorithm~\ref{alg:Improved_shade} provides an efficient approach to obtain the smart phase-shift control that minimizes the total SER of multiple-access networks. This meta-heuristic algorithm with a local search converges to the global optimum from an initial feasible point with better probability than the pure DE. Different from \cite{ye2020joint, luo2021spatial}, our proposed framework is not based on the first and second derivatives of the objective, which gives a huge advantage in the reduction of the computational complexity and is easily implemented in practice since the SER includes the indefinite integrals.
\end{remark}

\vspace{-0.25cm}
\section{Numerical Results} \label{Sec:NumRe}
{This section provides experimental results to verify {the SER minimization quality}
based on \eqref{eq:SER} and the effectiveness of the active and passive beamforming design utilizing Algorithm~\ref{alg:Improved_shade}.  
In particular,  we consider a system where the BS is equipped with $100$ antennas, and the RIS consists of $256$ phase shift elements. The number of users is $K \in \{2, 10, 50 \}$, up to the network density. The signal modulation and demodulation are the $16$-QAM. The propagation channels $\mathbf{H}$, $\mathbf{u}_k$, and $\mathbf{g}_k$ are constructed as
$\mathbf{H} = \sqrt{\beta_{\mathrm{bsr}}} \tilde{\mathbf{H}}, \mathbf{g}_k = \sqrt{\beta_{rk}} \tilde{\mathbf{g}}_k, \mathbf{u}_k = \sqrt{\beta_{k}} \tilde{\mathbf{u}}_k,$
where $\beta_{sr}$, $\beta_{rk}$, and $\beta_{k}$ are the large-scale fading coefficients that express the propagation distance between the BS and the RIS, the RIS and user~$k$, and the BS and user~$k$, respectively. Meanwhile, $\tilde{\mathbf{H}}$, $\tilde{\mathbf{g}}_k$, and $\tilde{\mathbf{u}}_k$ follow the Rayleigh fading, which is a quasistatic block fading model, where the channel remains fixed for the considered binary sequence.} 

\begin{figure*}[t]
    \centering
    \subfloat[$K=2$ users]{\includegraphics[width =0.3\linewidth]{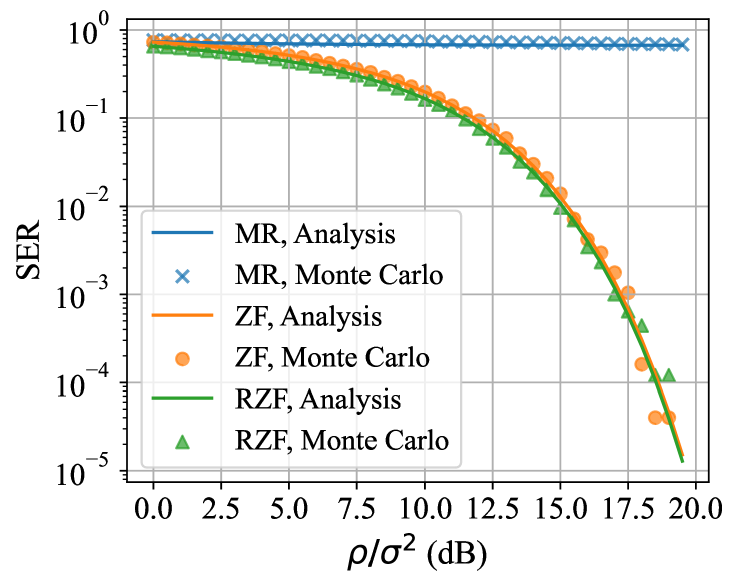}}
    \subfloat[$K=10$ users]{\includegraphics[width =0.3\linewidth]{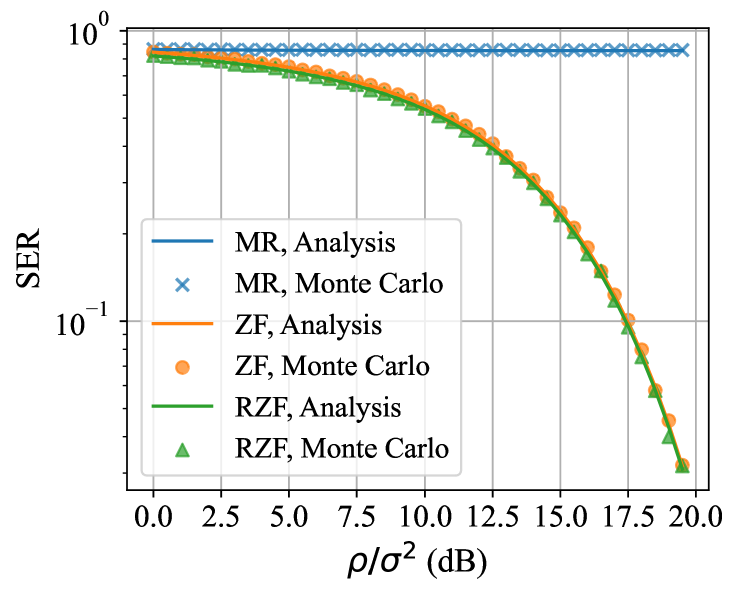}}
    \subfloat[$K=50$ users]{\includegraphics[width = 0.3\linewidth]{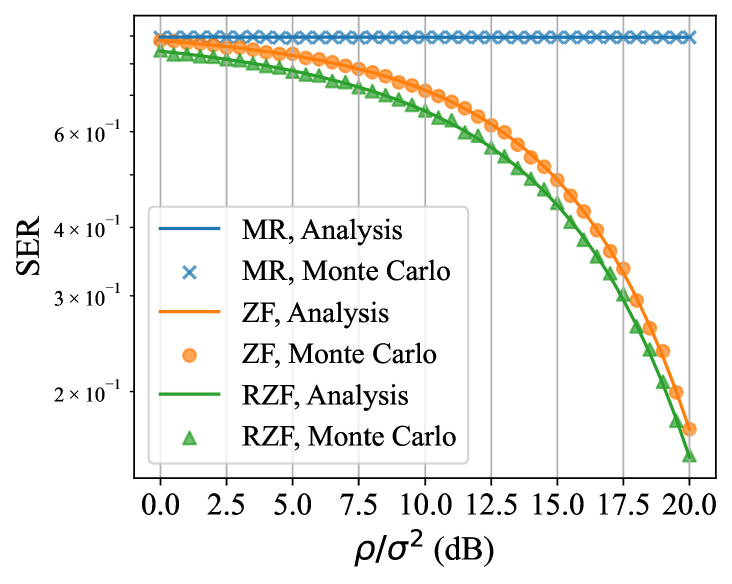}}
    \caption{SER of a RIS-assisted multiuser MIMO system with the different number of users in the network.}
    \label{fig:result_simulation}
    \vspace{-0.5cm}
\end{figure*}

\begin{figure*}[t]
    \centering
    \subfloat{\includegraphics[width=0.3\textwidth]{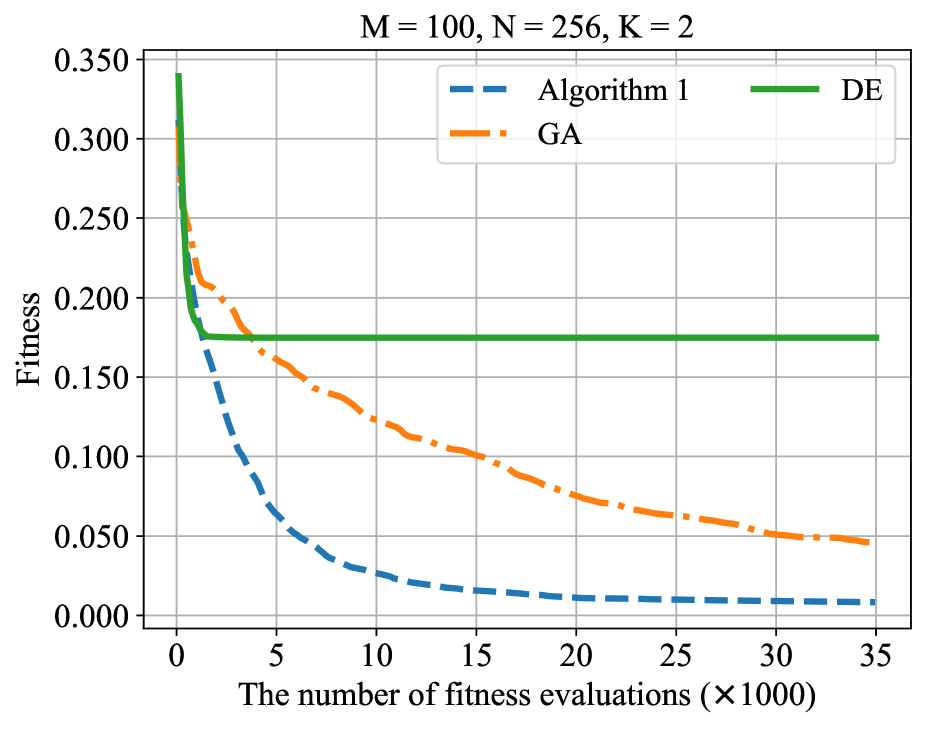}}
    \subfloat{\includegraphics[width=0.3\textwidth]{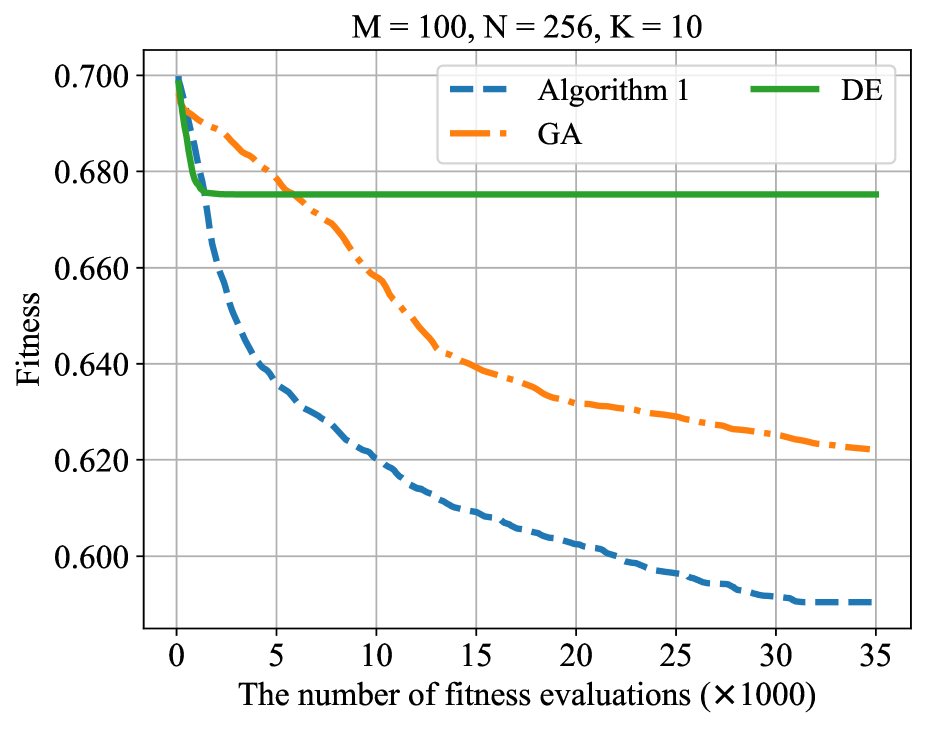}}
    \subfloat{\includegraphics[width=0.3\textwidth]{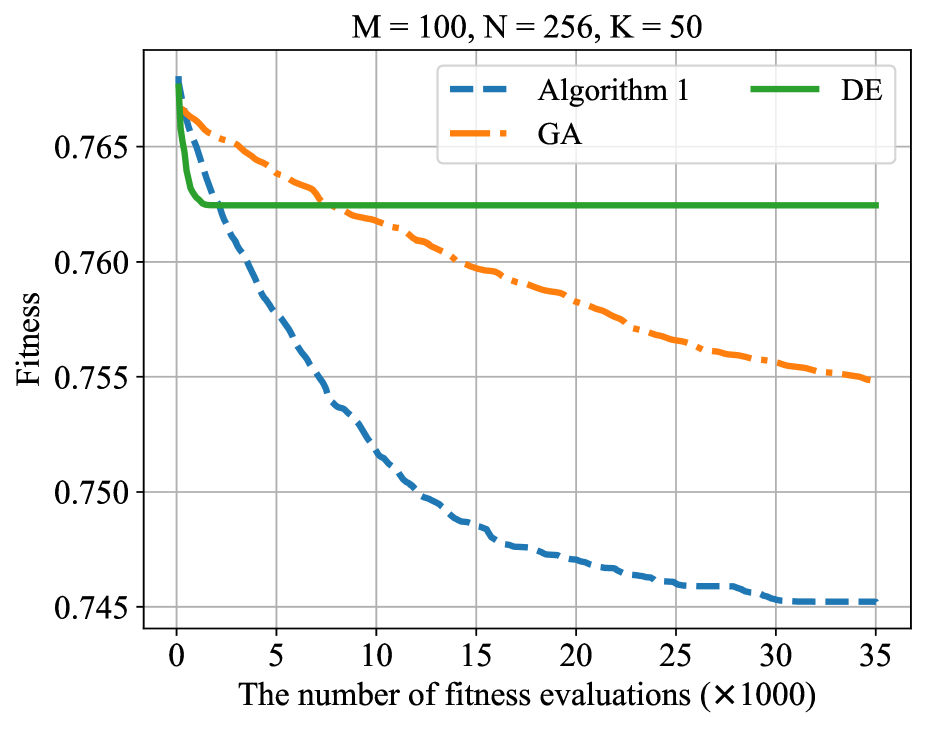}}
    \caption{Convergence of different benchmarks comprising Algorithm~\ref{alg:Improved_shade}, the GA in \cite{peng2021analysis}, and the DE in \cite{huang2022placement} with $\rho/\sigma^2 =5$ [dB].}
    \label{fig:gen_convergence}
    \vspace{-0.5cm}
\end{figure*}
{In Fig.~\ref{fig:result_simulation}, we validate the SER obtained in Theorem~\ref{theorem:ser} and Monte Carlo simulations by exploiting \eqref{eq:SERk} with the linear precoding techniques in \eqref{eq:linearwk}. Each user sends a binary data sequence having the length of $100000$ bits. The results unveil the close alignment between the obtained analytical SER and Monte-Carlo simulations that demonstrates the accuracy of the treatment for mutual interference in \eqref{eq:INkNOk}. The phase shifts are randomly initialized in the range $[-\pi, \pi]$. Furthermore, it is evident that the SER of each user varies depending on the precoding technique. With the same phase shift controlled coefficients $\pmb{\Phi}$, the MR technique performs less effectively compared to the other methods due to mutual interference-dominated scenarios. As the system serves a few users, the ZF and RZF techniques exhibit similar effectiveness. If the number of users increases making mutual interference severe, the RZF technique demonstrates noticeable effectiveness. Fig.~\ref{fig:result_simulation}(c) that the system can simultaneously serve $50$ users at the same time and frequency resource. It means that a huge amount of users can be served at the end of the day. The SER is high compared to the other settings because of strong mutual interference. A better SER should be obtained by the channel coding methods \cite{phan2023deep}, which should be a potential research direction.}

{In order to demonstrate the benefits of Algorithm~\ref{alg:Improved_shade} with an additional local search, Fig.~\ref{fig:gen_convergence} visualizes the convergence versus the number of fitness evaluations with the different benchmarks including the genetic algorithm (GA) \cite{peng2021analysis}, and the DE \cite{huang2022placement}. 
The intial hyperparameters are  $\mathsf{F}_{\mathrm{init}} = 0.1$ and $\mathsf{CR}_{\mathrm{init}} = 0.9$. 
The maximum number of generations is $350$, which is equivalent to $35000$ fitness evaluations.}
\begin{figure*}[t]
    \centering
 \subfloat{\includegraphics[width=0.29\textwidth]{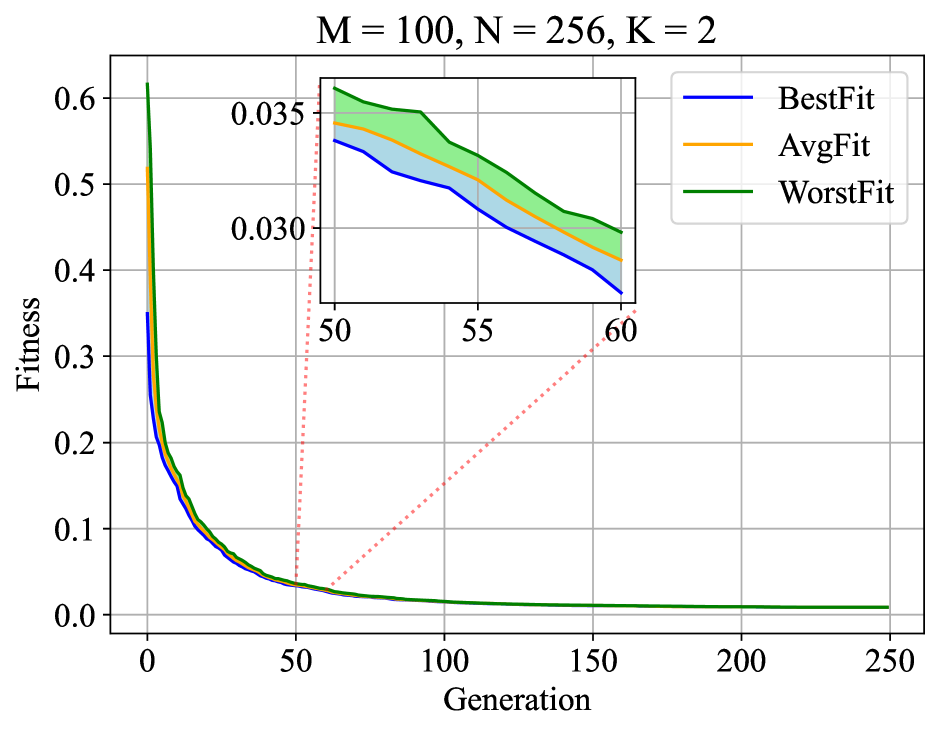}}
    \subfloat{\includegraphics[width=0.3\textwidth]{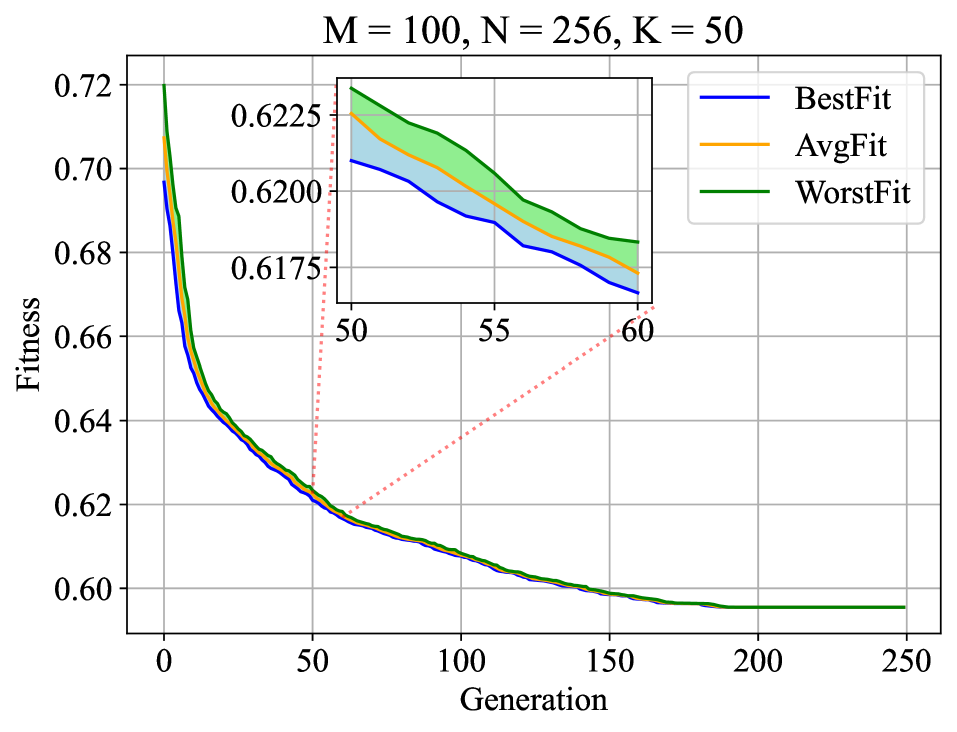}}
    \subfloat{\includegraphics[width=0.31\textwidth]{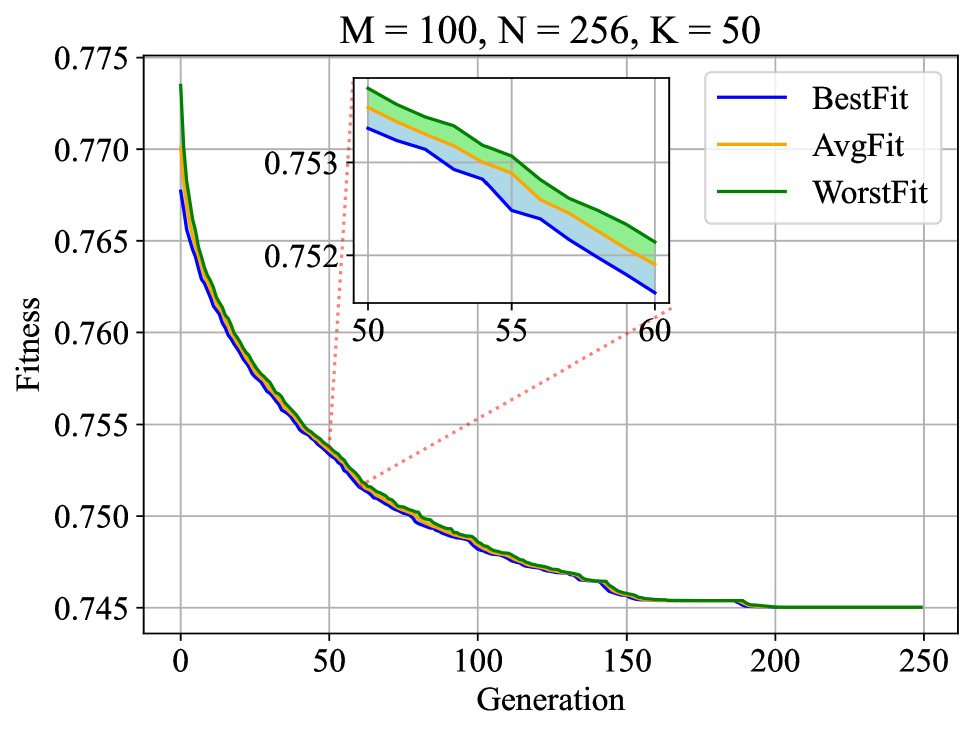}}
    \caption{Convergence of  Algorithm~\ref{alg:Improved_shade} as the behaviors of individuals (the best, worst, and average) with at $\rho/\sigma^2 = 5$ [dB].}
 \label{fig:gen_convergence_trend}
 \vspace{-0.5cm}
\end{figure*}
{The result shows that Algorithm~\ref{alg:Improved_shade} and the DE converge faster than the GA. In the first evaluations when the DE still performs well in the evolution of the solution, the local search step is not particularly effective. During these first evaluations, the improvement of Algorithm~\ref{alg:Improved_shade} over the DE is not significant. However, after about two thousand fitness evaluations, the individuals produced by the DE may be trapped in a local optimum, which is the so-called premature convergence. Consequently, the local search technique demonstrates significant effectiveness in helping the objective function value not converge prematurely. The GA does not get stuck in local minima early like the DE, the improvement of the best solution over the generations is slow. The numerical results point out a notable effectiveness of Algorithm~\ref{alg:Improved_shade} observed across all the considered network settings.} 

{Additionally, we examine the convergence of Algorithm~\ref{alg:Improved_shade} within the population and the stability level of the population, as illustrated in Fig~\ref{fig:gen_convergence_trend}. At each generation, we observe the fitness of the best individual, the worst individual, and all the individuals on average, which are denoted as `BestFit', `WorstFit', and  `AvgFit' respectively. The population is gradually stable over the generations for the three observed metrics. The difference in fitness among individuals decreasing along the generations indicates adaptability. It means that the less competent individuals are being gradually eliminated.}

{In Fig.~\ref{fig:optimization_result}, we compare the effectiveness of Algorithm~\ref{alg:Improved_shade} with the GA, DE, {gradient descent algorithm}, and the baseline that involves the random phase shift selection and the RZF technique. Algorithm~\ref{alg:Improved_shade} outperforms all the remaining benchmarks over the different network settings. Furthermore, the SER of each user is significantly improved as the signal-to-noise ratio increases. With a small number of users, Algorithm~\ref{alg:Improved_shade} shows substantially better performance than the others. For instance, with  $K=2$ users and $\rho/\sigma^2 = 0$~[dB], Algorithm~\ref{alg:Improved_shade} shows improvements of $39.4\%$, $59.4\%$, $63.4\%$, and $76.5\%$ compared to the GA, the DE, the gradient descent algorithm, and the random phase shift design together with the RZF technique, respectively.}

\begin{figure*}[t]
    \centering
    \subfloat{\includegraphics[width=0.3\textwidth]{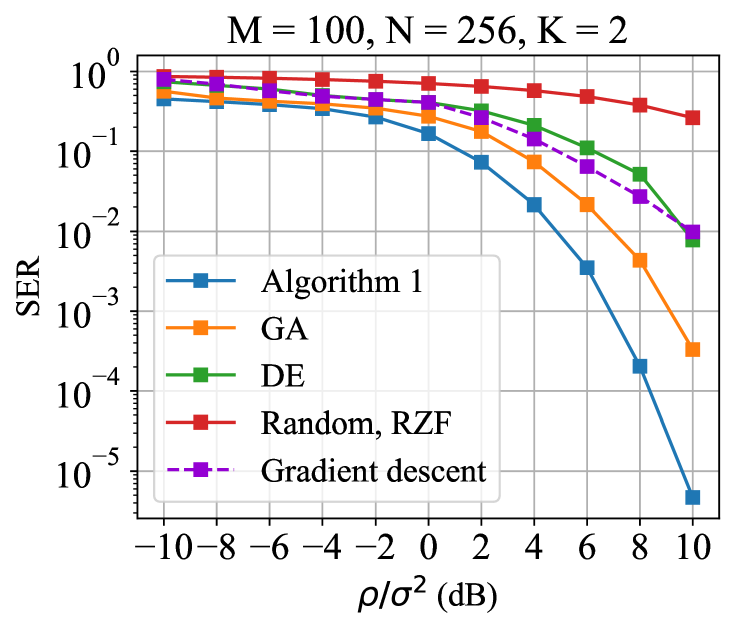}}
    \subfloat{\includegraphics[width=0.3\textwidth]{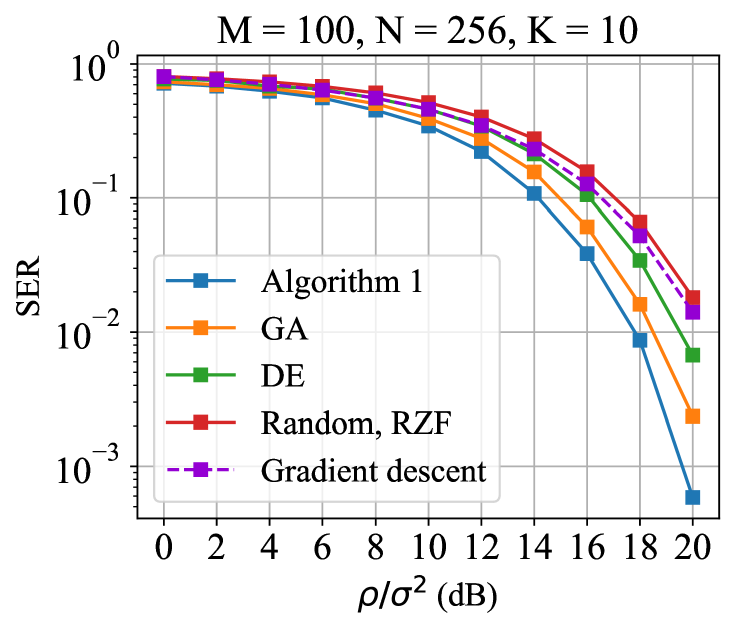}}
    \subfloat{\includegraphics[width=0.3\textwidth]{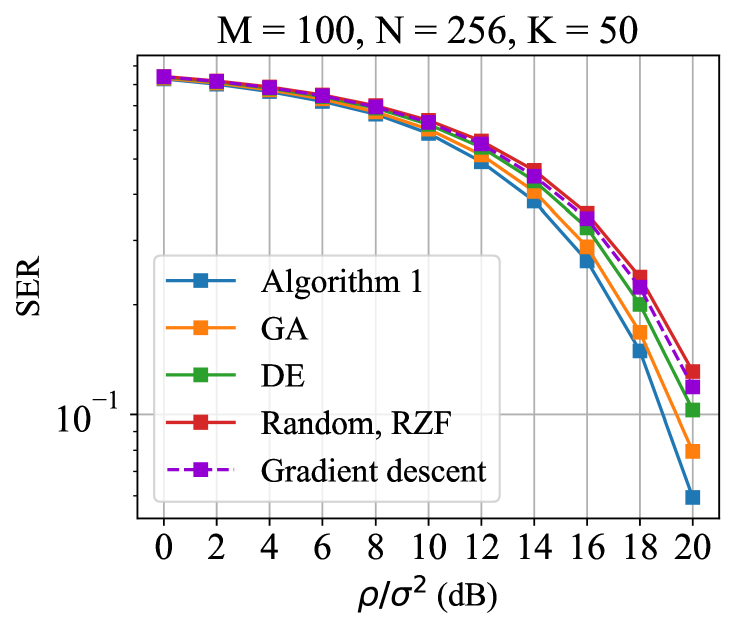}}
    \caption{{SER of different benchmarks comprising Algorithm~\ref{alg:Improved_shade}, the GA \cite{peng2021analysis}, the DE \cite{huang2022placement}, the gradient descent algorithm \cite{ye2020joint}, and the random phase shift together with the RZF technique.}}
    \label{fig:optimization_result}
    \vspace{-0.25cm}
\end{figure*}

\begin{figure}[http]
    \centering
    \includegraphics[width=0.85\linewidth]{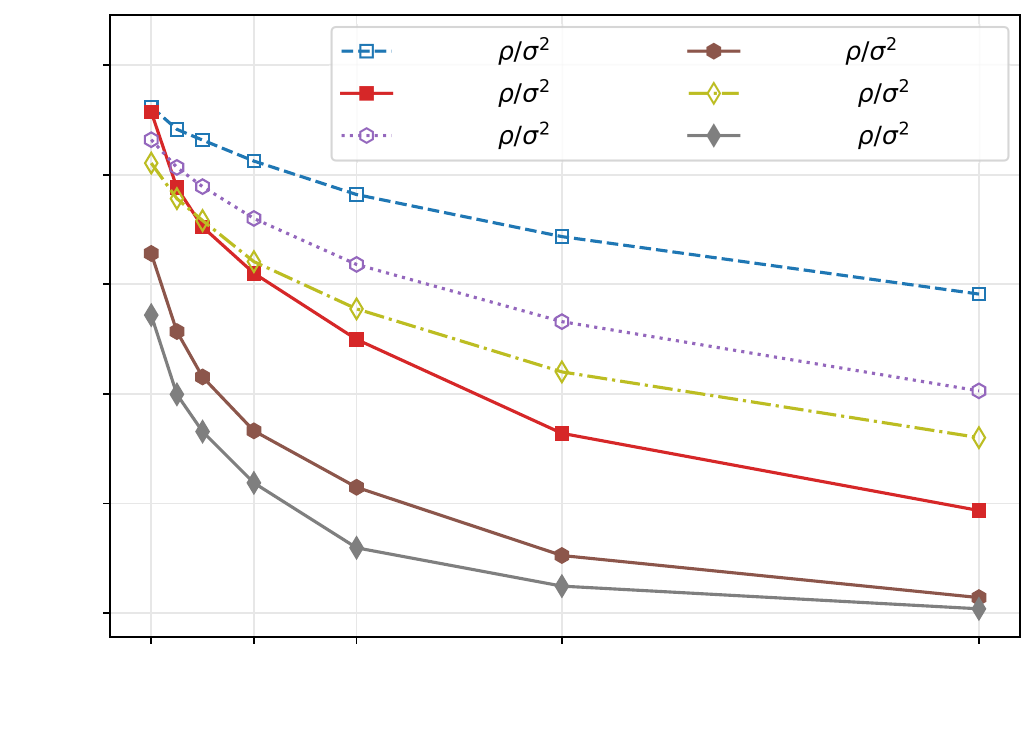}
    \caption{{SER versus the number of phase shift elements with the different number of  BS antennas serving $10$ users.}}
    \label{fig:impact_by_number_RIS_BS}
\end{figure}

{In Fig.~\ref{fig:impact_by_number_RIS_BS}, we illustrate the SER of the system serving $K = 10$ users under the different numbers of RIS reflecting elements and antennas at BS. The number of phase shift elements equal to zero corresponds to a system without the support of the RIS. Increasing the number of RIS elements with the fixed number of BS antennas decreases the SER of each user on average. With the same number of antennas and SINR, a system supported by $256$ phase shift elements performs up to $ 90\%$ better SER than without the presence of an RIS. It highlights the significant role of the RIS. After reaching a sufficiently large number of the phase shift elements, the improvement of the SER diminishes because it approaches the lower bound and the solution space extension becomes more challenging to optimize. Apart from this, lower SER can be achieved by equipping more antennas at the BS. We also investigate the system performance with $M \in \{10, 50, 100\}$ antennas, the same number of RIS elements, and  $\rho/\sigma^2 \in \{5, 15\}$~[dB]. Under similar conditions, the average SER per user significantly improves as the number of BS antennas increases.} 

\begin{figure}[t]
    \centering
\subfloat{\includegraphics[width=0.245\textwidth]{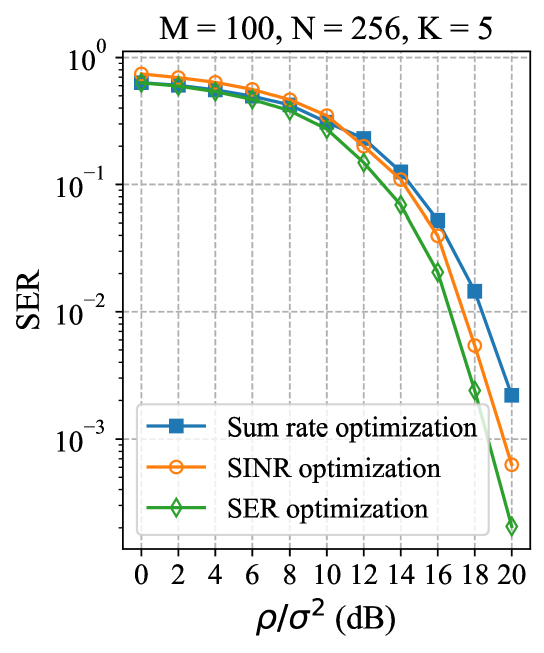}}
    \subfloat{\includegraphics[width=0.245\textwidth]{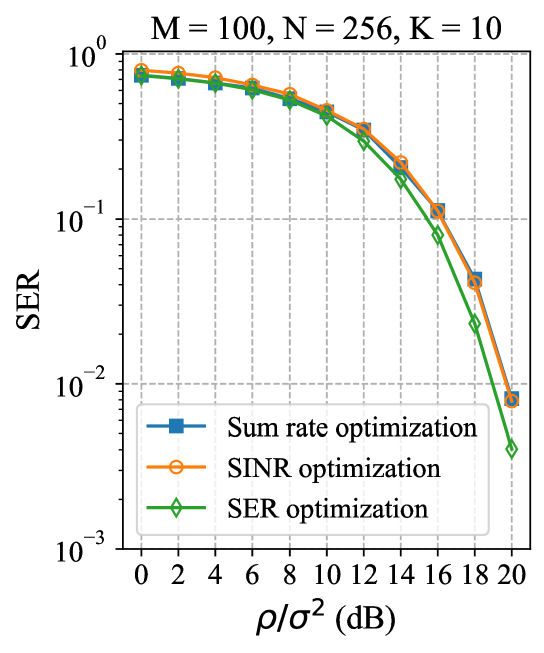}}
    \caption{SER of a RIS-assisted multiuser MIMO system with different objective functions consisting of the total SER,  {the minimum SINR}, and the sum-rate.}
    \label{fig:optimization_pro}
    \vspace{-0.35cm}
\end{figure}

{To leverage the benefits of the SER optimization in problem~\eqref{ProBlemv2}, we further investigate the two other objective functions comprising the sum rate, i.e., $\sum_{k=1}^K R_k$, and the minimum SINR among users, i.e., $\min\left\{ \mathrm{SINR}_k\right\}_{k=1}^K$, by modifying the proposed algorithm for the new fitness functions. Utilizing the SER, Fig.~\ref{fig:optimization_pro} illustrates the superior efficiency of the total SER obtained by solving problem~\eqref{ProBlemv2} over the remaining measurement metrics. Despite optimizing the sum rate and the minimum SINR, the SER can also improve for each user. However, there is still a gap between these benchmarks and the total SER minimization. Hence, we verify that problem~\eqref{ProBlemv2} could not be replaced by either the minimum SINR or the sum-rate maximization in finding the minimal SER.}

\begin{table*}[t]
\caption{SER of the RIS-assisted multiuser MIMO system based on the joint active and passive optimization and the passive beamforming design with the RZF.}
\label{tab:ser_compare_linear_precoding}
{ \centering \begin{tabular}{cc|ccccccccccc}
\hline
\multirow{2}{*}{\textbf{Scenario}} & \multirow{2}{*}{\textbf{Optimization}} & \multicolumn{11}{c}{$\pmb{\rho}/\pmb{\sigma}^2$}                                            \\ \cline{3-13} 
                          &                            & \multicolumn{1}{c|}{\textbf{0}} & \multicolumn{1}{c|}{\textbf{2}} & \multicolumn{1}{c|}{\textbf{4}} & \multicolumn{1}{c|}{\textbf{6}} & \multicolumn{1}{c|}{\textbf{8}} & \multicolumn{1}{c|}{\textbf{10}} & \multicolumn{1}{c|}{\textbf{12}} & \multicolumn{1}{c|}{\textbf{14}} & \multicolumn{1}{c|}{\textbf{16}} & \multicolumn{1}{c|}{\textbf{18}} & \textbf{20} \\ \hline
\multirow{2}{*}{2 users}  & passive/active             & \multicolumn{1}{c|}{0.6333}     & \multicolumn{1}{c|}{0.5916}     & \multicolumn{1}{c|}{0.542}      & \multicolumn{1}{c|}{0.4714}     & \multicolumn{1}{c|}{0.3837}     & \multicolumn{1}{c|}{0.2636}      & \multicolumn{1}{c|}{0.1551}      & \multicolumn{1}{c|}{0.0671}      & \multicolumn{1}{c|}{0.0205}      & \multicolumn{1}{c|}{0.0034}      & 0.0002      \\
                          & passive/linear precoding   & \multicolumn{1}{c|}{0.6626}     & \multicolumn{1}{c|}{0.6175}     & \multicolumn{1}{c|}{0.5637}     & \multicolumn{1}{c|}{0.4912}     & \multicolumn{1}{c|}{0.3981}     & \multicolumn{1}{c|}{0.2772}      & \multicolumn{1}{c|}{0.1639}      & \multicolumn{1}{c|}{0.071}       & \multicolumn{1}{c|}{0.0218}      & \multicolumn{1}{c|}{0.0036}      & 0.0002      \\ \hline
\multirow{2}{*}{10 users} & passive/active             & \multicolumn{1}{c|}{0.7425}     & \multicolumn{1}{c|}{0.7089}     & \multicolumn{1}{c|}{0.6626}     & \multicolumn{1}{c|}{0.6041}     & \multicolumn{1}{c|}{0.5204}     & \multicolumn{1}{c|}{0.4186}      & \multicolumn{1}{c|}{0.2967}      & \multicolumn{1}{c|}{0.178}       & \multicolumn{1}{c|}{0.0815}      & \multicolumn{1}{c|}{0.0242}      & 0.0038      \\  
                          & passive/linear precoding   & \multicolumn{1}{c|}{0.7648}     & \multicolumn{1}{c|}{0.7261}     & \multicolumn{1}{c|}{0.675}      & \multicolumn{1}{c|}{0.6136}     & \multicolumn{1}{c|}{0.529}      & \multicolumn{1}{c|}{0.4299}      & \multicolumn{1}{c|}{0.3043}      & \multicolumn{1}{c|}{0.182}       & \multicolumn{1}{c|}{0.0834}      & \multicolumn{1}{c|}{0.0247}      & 0.0038      \\ \hline
\multirow{2}{*}{50 users} & passive/active             & \multicolumn{1}{c|}{0.8354}     & \multicolumn{1}{c|}{0.8096}     & \multicolumn{1}{c|}{0.7773}     & \multicolumn{1}{c|}{0.7351}     & \multicolumn{1}{c|}{0.6809}     & \multicolumn{1}{c|}{0.6139}      & \multicolumn{1}{c|}{0.5277}      & \multicolumn{1}{c|}{0.4237}      & \multicolumn{1}{c|}{0.3026}      & \multicolumn{1}{c|}{0.1868}      & 0.0915      \\  
                          & passive/linear precoding   & \multicolumn{1}{c|}{0.8381}     & \multicolumn{1}{c|}{0.8129}     & \multicolumn{1}{c|}{0.7815}     & \multicolumn{1}{c|}{0.7407}     & \multicolumn{1}{c|}{0.688}      & \multicolumn{1}{c|}{0.6224}      & \multicolumn{1}{c|}{0.5369}      & \multicolumn{1}{c|}{0.4323}      & \multicolumn{1}{c|}{0.3094}      & \multicolumn{1}{c|}{0.1912}      & 0.0937      \\ \hline
\end{tabular}}
\vspace{-0.5cm}
\end{table*}
\begin{figure}[t]
    \centering   \includegraphics[width =0.6 \linewidth]{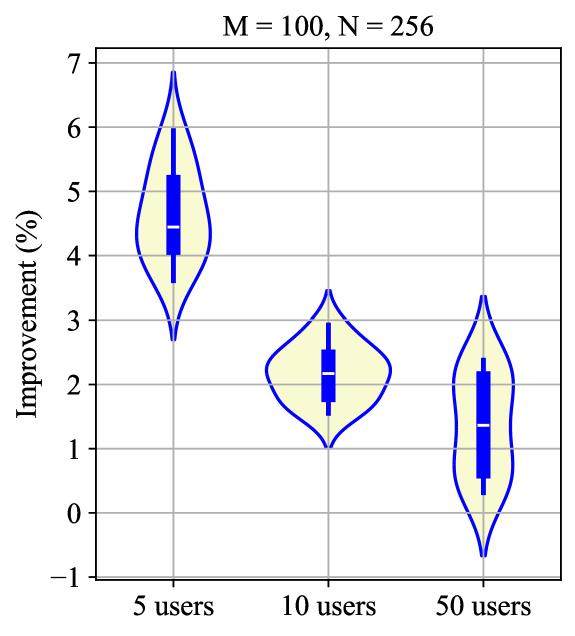}
    \caption{SER improvement based on joint active and passive optimization compared to passive beamforming design with the RZF precoding.}
    \label{fig:improvement_linear}
    \vspace{-0.5cm}
\end{figure}

We now compare the SER obtained by solving problems~\eqref{ProBlemv2} and \eqref{ProBlemv3} with $K = 5, 10, 50$ users. In particular, TABLE~\ref{tab:ser_compare_linear_precoding} presents the SER as a function of $\rho/\sigma^2$ by jointly optimizing the active and passive beamforming, along with passive beamforming optimization and the linear precoding, i.e., the RZF precoding. Meanwhile, Fig.~\ref{fig:improvement_linear} illustrates the enhancement of the average SER value by solving these corresponding optimization problems. The results obtained from solving problem~\eqref{ProBlemv2} are better than those of problem~\eqref{ProBlemv3} with the improvement from $1-6\%$, and decreasing as the number of users increases.

\begin{figure}[t]
    \centering   \includegraphics[width =0.8 \linewidth]{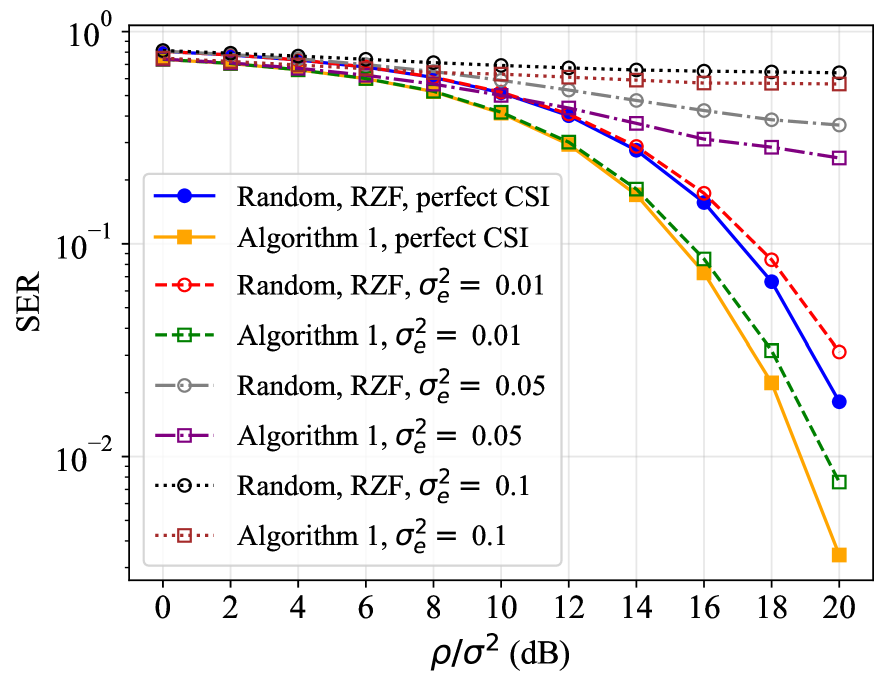}
    \caption{SER comparisons in presence of CSI estimation errors with $k = 10$ users.}
    \label{fig:CSI}
    \vspace{-0.5cm}
\end{figure}

{In reality, perfect instantaneous channels may be challenging due to the limited coherence intervals that result in estimation errors. Therefore, we evaluate the proposed algorithm in case of imperfect channel state information. The channel estimate of user~$k$ is given by $\tilde{\mathbf{z}}_k = \mathbf{z}_k + \mathbf{e}_k$, where $\mathbf{e}_k$ is the estimation error whose elements are distributed as $\mathcal{CN}(0, \sigma_e^2)$.
}
{Fig.~\ref{fig:CSI} demonstrates the SER per user with the different quality of channel estimation error, i.e., the variance $\sigma^2_e \in \{0.01, 0.05, 0.1\}$. As predicted, the SER increases with imperfect channel state information. However, there is still a performance gap between the proposed algorithm and the joint randomly phase-shift design and RZF in all the scenarios. With the small error variance $\sigma^2_e = 0.01$, the system shows its robustness. Nonetheless, greater differences occur as $\rho/\sigma^2$ increases from $16$~[dB] to $20$~[dB]. With large error variance $\sigma^2_e \in \{0.05, 0.1\}$, the degradation of the SER is more pronounced. The channel estimate differing significantly from the true channel demonstrates the superior performance of the proposed algorithm over the remaining benchmarks. Hence, it emphasizes the suitability of our proposed solution for practical applications.}


\begin{figure}
    \centering
    \includegraphics[width = 0.9 \linewidth]{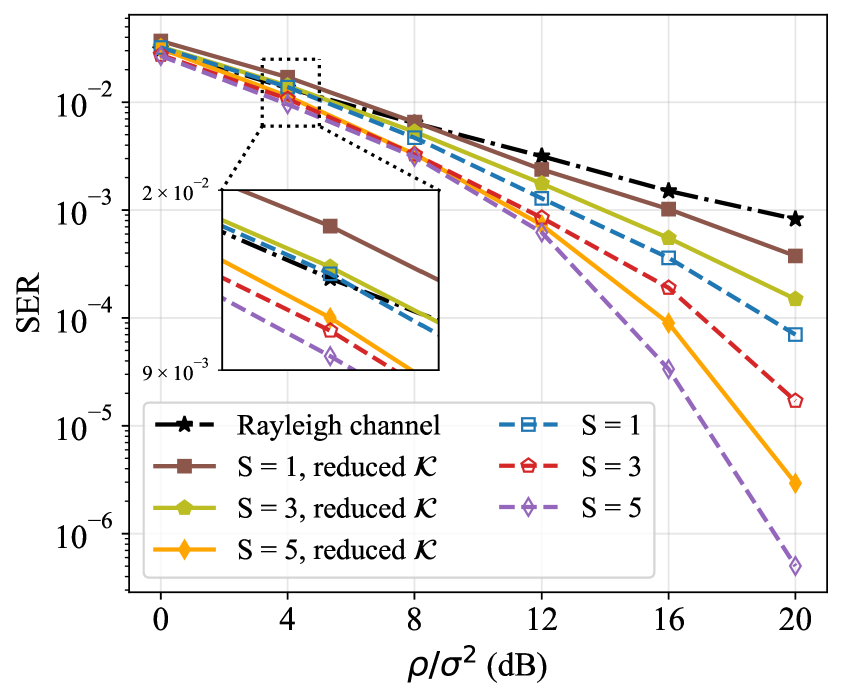}
    \caption{SER of the system with different numbers of specular components of the channel $S \in \{0, 1, 3, 5\}$, with and without a 9 dB reduction in Rician $\mathcal{K}$-factors.}
    \label{fig:impact_Rician}
\end{figure}

{We now analyze the influences of line of sight components (LoS) on the SER with presence of the RIS. The propagation channel  between the BS and user~$k$, the RIS and user~$k$, and the BS and the RIS, are respectively given  based on \cite{demir2022channel} as
\begin{align}
& \mathbf{u}_k = \sqrt{\frac{\mathcal{K}}{\mathcal{K} + 1}}\sum\nolimits_{s = 1}^{S_k^{\mathbf{u}}}e^{j\theta_{k, s}^{\mathbf{u}}} \overline{\mathbf{u}}_{k, s} + \sqrt{\frac{1}{\mathcal{K} + 1}}\tilde{\mathbf{u}}_k, \label{eq:uk} \\
& \mathbf{g}_k = \sqrt{\frac{\mathcal{K}}{\mathcal{K} + 1}}\sum\nolimits_{s = 1}^{S_k^{\mathbf{g}}}e^{j\theta_{k, s}^{\mathbf{g}}} \overline{\mathbf{g}}_{k, s} + \sqrt{\frac{1}{\mathcal{K} + 1}}\tilde{\mathbf{g}}_k, \\
& \mathbf{H} = \sqrt{\frac{\mathcal{K}}{\mathcal{K} + 1}}\sum\nolimits_{s = 1}^{S^{\mathbf{H}}}e^{j\theta^{\mathbf{H}}_{s}} \overline{\mathbf{H}}_s + \sqrt{\frac{1}{\mathcal{K} + 1}}\tilde{\mathbf{H}}, \label{eq:H}
\end{align}
where $S^{\mathbf{u}}_k$, $S^{\mathbf{g}}_k$, and $S^{\mathbf{H}}$ denote the number of specular components (the dominant paths) of these communication links. Note that the considered channel model admits Rayleigh fading with $S = S^{\mathbf{u}}_k = S^{\mathbf{g}}_k = S^{\mathbf{H}} = 0$. Let $\theta^{\mathbf{u}}_{k,s}$, $\theta^{\mathbf{g}}_{k,s}$ and $\theta^{\mathbf{H}}_s$ be the $s$-th specular components of the channels, which are independent uniformly distributed variables in $[0, 2\pi)$. If there exists one LOS path, then one of the specular components $e^{j\theta_{k, s}^{\mathbf{u}}} \overline{\mathbf{u}}_{k, s}$ ($e^{j\theta_{k, s}^{\mathbf{g}}} \overline{\mathbf{g}}_{k, s}$, or $e^{j\theta^{\mathbf{H}}_{s}} \overline{\mathbf{H}}_s$) corresponds to the path. The nonspecular parts $\tilde{\mathbf{u}}_k$, $\tilde{\mathbf{g}}_k$ and $\tilde{\mathbf{H}}$ represent the summation of other diffusely propagating multipath components. In \eqref{eq:uk}--\eqref{eq:H},  $\mathcal{K}$ is the Rician factor, with $\mathcal{K} = 13 - 0.03\Delta$ (dB), where $\Delta$ is the distance between the transmiter and the receiver. Fig.~\ref{fig:impact_Rician}  analyzes the SER per user with a different number of specular components, denoted as ${S}$, for all the channels. With ${S} = 1$, we assume the LoS path is the only specular component. Once $\mathcal{S} = 3, 5$, the original LoS channel gain is distributed randomly over the two non-LoS dominant components by keeping the power ratio of the LoS components at $0.5$.  The non-LoS specular components' angles of arrival and departure are randomly generated around the LOS angles with being less than $60^\circ$  and $15^\circ$ for the azimuth and elevation angle, respectively. The results compare the SER per user with either the Rician or Rayleigh channels. The SER reduces significantly with more specular components in the environment since the power is distributed among dominant paths. When there is only one LoS path, the SER slightly improves $41\%$ on average compared to the Rayleigh fading channels.  However, the improvement is more significant with $S \in \{ 3, 5\}$, i.e., $56\%$ and $61\%$, respectively. To observe the impact of the Rician factor, we consider a $9$ dB reduction that is introduced on the distance-dependent Rician $\mathcal{K}$-factors of the propagation channel between the RIS and each user, with $\mathcal{K} = 13 - 0.03\Delta - 9$ (dB). This scenario maintains the channel gain in the LOS case and redistributing it so that the specular components are less dominant compared to the nonspecular Rayleigh fading. Fig.~\ref{fig:impact_Rician} shows that the SER performance degrades when there is a reduction compared to the original Rician $\mathcal{K}$-factor.}

\vspace{-0.25cm}
\section{Conclusion}\label{Sec:Concl}
{This paper analyzed and optimized communication reliability in RIS-assisted MIMO systems {by considering the optimization problem of minimizing the average SER of users} through active and passive beamforming designs. By seamlessly combining active and passive beamforming strategies, the paper not only addressed the challenges posed by the propagation environments but also pioneered a method that intricately navigates the nuances of RIS integration. We derived the analytical SER for each user that can be applied for multi-user scenarios and under arbitrary fading channel models. Leveraging an improved DE algorithm, we further focused on minimizing the average downlink SER for modulated signals. The proposed algorithm emerged as a pivotal factor, showcasing its adaptability and efficiency in optimizing the critical metric in the assessment of communication reliability. The results obtained underscore the efficacy of the proposed evolutionary approach, shedding light on the potential it holds for RIS deployments in practice.}

\vspace{-0.25cm}
\appendix
\vspace{-0.25cm}
\subsection{Proof of Lemma~\ref{theorem:ser}}\label{appen:ser}
{The SER of user $k$ in \eqref{eq:SERk} represents the probability that the received signal $r_k$ in \eqref{eq:rk} does not fall within the Voronoi region associated with the transmitted signal $s_k$. According to the law of total probability, $\mathsf{SER}_k$ is reformulated as
\begin{equation}\label{eq:ser48}
\begin{split}
\mathsf{SER}_k = \frac{1}{m} \sum_{s_t \in \mathcal{M}} \mathsf{Pr}\left(r_k\notin\mathcal{V}(s_k) \big| s_k = s_t\right), 
\end{split}
\end{equation}
due to the equiprobable symbols. 
We stress that all the constellation points belonging to one subset are equiprobable. Plugging \eqref{eq:Zknuk} into \eqref{eq:rk}, the received signal $r_k$ at user~$k$ is redefined as
\begin{equation}\label{eq:rk_s_z_n}
r_k = s_k+ \zeta_k +\nu_k.
\end{equation}
By treating mutual interference as noise, which is shown in \eqref{eq:INkNOk}, we observe that $\zeta_k + \nu_k \sim \mathcal{CN}(0, \mathsf{IN}_k(\{ \mathbf{w}_k \}, \pmb{\Phi}) + \mathsf{NO}_k(\{ \mathbf{w}_k \}, \pmb{\Phi}))$. Consequently, both $\Re(\zeta_k + \nu_k)$ and $\Im(\zeta_k + \nu_k)$ are independent and identically distributed as
\begin{align}
& \Re(\zeta_k+\nu_k) \sim \mathcal{N} \left(0, 0.5(\mathsf{IN}_k(\{ \mathbf{w}_k \}, \pmb{\Phi}) + \mathsf{NO}_k(\{ \mathbf{w}_k \}, \pmb{\Phi})) \right), \label{eq:Re} \\
& \Im(\zeta_k+\nu_k) \sim \mathcal{N}  \left(0, 0.5(\mathsf{IN}_k(\{ \mathbf{w}_k \}, \pmb{\Phi}) + \mathsf{NO}_k(\{ \mathbf{w}_k \}, \pmb{\Phi}) \right). \label{eq:Im}
\end{align}
For ease of writing, we introduce a new variable $\mathsf{X}$ defined as
\begin{equation}\label{eq:defX}
\mathsf{X}=\operatorname{erfc}\left(\sqrt{3\mathsf{SINR}_k(\{ \mathbf{w}_k \}, \pmb{\Phi})/(m-1)}\right).
\end{equation}
Thanks to the Gaussian distributions, we obtain the  result
\begin{equation} \label{eq:prRe>=delta}
\begin{split}
    &\mathsf{Pr}(\Re(\zeta_k + \nu_k) \ge \delta)= \mathsf{Pr}(\Im(\zeta_k + \nu_k)\ge\delta)\\ &\stackrel{(a)}{=} \int_z^{+\infty} \frac{e^{-\frac{x^2}{\mathsf{IN}_k(\{ \mathbf{w}_k \}, \pmb{\Phi})+\mathsf{NO}_k(\{ \mathbf{w}_k \}, \pmb{\Phi})}}}{\sqrt{\pi(\mathsf{IN}_k(\{ \mathbf{w}_k \}, \pmb{\Phi})+\mathsf{NO}_k(\{ \mathbf{w}_k \}, \pmb{\Phi}))}}  \operatorname{d}x \\
    &\stackrel{(b)}{=} \dfrac{1}{2} \operatorname{erfc}\left(\dfrac{\delta\sqrt{2}}{\sqrt{\mathsf{IN}_k(\{ \mathbf{w}_k \}, \pmb{\Phi})+\mathsf{NO}_k(\{ \mathbf{w}_k \}, \pmb{\Phi})}}\right)
    \stackrel{(c)}{=} \dfrac{\mathsf{X}}{2},
\end{split}
\end{equation}
where $(a)$ is obtained by \eqref{eq:Re} and \eqref{eq:Im}; $(b)$ is obtained by exploiting the $\operatorname{erfc}(\cdot)$ function; and $(c)$ is obtained by using the SINR in \eqref{eq:sinr}. Furthermore, we can get another result
\begin{equation}
\begin{split}
    \mathsf{Pr}&(-\delta \le \Re(\zeta_k + \nu_k)<\delta)= \mathsf{Pr}(-\delta \le \Im(\zeta_k + \nu_k)<\delta)\\ 
    &=1- \mathsf{Pr}\big( -\delta>\Re(\zeta_k + \nu_k)\  \cup\ \delta \le \Re(\zeta_k + \nu_k)\big)\\
    &\stackrel{(a)}{=} 1- 2\mathsf{Pr}\big(\Re(\zeta_k + \nu_k) \ge\delta\big) = 1- \mathsf{X}.
\end{split}
\end{equation}
where $(a)$ is because of the identity $\mathsf{Pr}(X \cup Y) = \mathsf{Pr}(X) + \mathsf{Pr}(Y)$ for two mutually exclusive events $X, Y$ and the symmetric property of the Gaussian distribution.
We now consider the error probability for the three specifically modulated symbols $s_1, s_2,$ and $ s_3$. The error probability for the case $s_k = s_1$ is
\begin{equation}\label{eq:errorsk=s1}
\begin{split}
    \mathsf{Pr}&\big(r_k\notin \mathcal{V}(s_k) \big| s_k = s_1\big) = 1 - \mathsf{Pr}\big(r_k \in \mathcal{V}(s_1)\big| s_k = s_1\big)\\
        &\stackrel{(a)}{=} 1 - \mathsf{Pr}\Big(\Re(r_k) \ge \delta(\sqrt{m}-2) \\
        & \qquad \&\ \Im(r_k) \ge \delta(\sqrt{m}-2)\ \Big|\ s_k = s_1 \Big)\\
        &\stackrel{(b)}{=} 1 - \mathsf{Pr}\left(\Re(r_k) \ge \delta(\sqrt{m} - 2)\ \Big|\ s_k = s_1 \right)  \\ 
        & \qquad \quad \times \mathsf{Pr}\left(\Im(r_k) \ge \delta(\sqrt{m}-2)\ \Big|\ s_k = s_1 \right) \\
        &\stackrel{(c)}{=} 1 - \mathsf{Pr}\left( \delta(\sqrt{m} - 1) + \Re(\zeta_k+\nu_k) \ge \delta(\sqrt{m} - 2)\right)  \\
        & \qquad  \times \mathsf{Pr}\left(\delta(\sqrt{m} - 1) + \Im(\zeta_k+\nu_k) \ge \delta(\sqrt{m}-2)\right)\\
        &\ =\ 1 - \mathsf{Pr}\left(\Re(\zeta_k+\nu_k) \ge -\delta\right) \mathsf{Pr}\left(\Im(\zeta_k+\nu_k) \ge -\delta\right)\\
        &= 1 - \left( 1 - \mathsf{Pr}(\Re(\zeta_k+\nu_k) \ge \delta)\right) \left( 1 - \mathsf{Pr}(\Im(\zeta_k+\nu_k) \ge \delta)\right)\\
        &\stackrel{(d)}{=} 1 - \left( 1 - \frac{1}{2}\mathsf{X} \right)^2 = \mathsf{X} - \frac{1}{4}\mathsf{X}^2,
\end{split}\end{equation}
where $(a)$ is obtained from \eqref{eq:voronois_1}; $(b)$ is obtained by the identity $\mathsf{Pr}(X\& Y) = \mathsf{Pr}(X)\mathsf{Pr}(Y)$ for the two independent variables $X, Y$; $(c)$ is obtained by using \eqref{eq:rk_s_z_n} with $s_k= s_1$; and $(d)$ is obtained by using \eqref{eq:prRe>=delta}. We compute the error probability of $s_k= s_2$ as 
\begin{equation}\label{eq:errorsk=s2}
\begin{split}
\mathsf{Pr}\big(r_k&\notin \mathcal{V}(s_k) \big| s_k = s_2\big)= 1 - \mathsf{Pr}\big(r_k \in \mathcal{V}(s_2)\big| s_k = s_2\big)\\
    &= 1 - \left(1 - \frac{1}{2}\mathsf{X}\right)\left(1 -\mathsf{X}\right) = \frac{3}{2}\mathsf{X} - \frac{1}{2} \mathsf{X}^2.
\end{split}
\end{equation}
We compute  the error probability for the case $s_k= s_3$ as 
\begin{equation}\label{eq:errorsk=s3}
\begin{split}
    \mathsf{Pr}\big(r_k&\notin \mathcal{V}(s_k) \big| s_k = s_3\big) = 1 - \mathsf{Pr}\big(r_k \in \mathcal{V}(s_3)\big| s_k = s_3\big)\\
    &= 1 - \left(1 -  \mathsf{X}\right)\left(1 -\mathsf{X}\right) = 2\mathsf{X} -  \mathsf{X}^2.
\end{split}\end{equation}
Using the obtained results in \eqref{eq:errorsk=s1}--\eqref{eq:errorsk=s3}, we now can derive the analytical SER of user $k$ as
\begin{equation} \label{eq:SERkappendix}
\begin{split}
   & \mathsf{SER}_k  \stackrel{(a)}{=} \frac{1}{m}\sum_{s_t\in \mathcal{M}} \mathsf{Pr}\left(r_k\notin\mathcal{V}(s_k) \big| s_k = s_t\right) \stackrel{(b)}{=} \\ 
    &  \frac{1}{m} \Big(4\mathsf{X}-\mathsf{X}^2 +6\left(\sqrt{m}-2\right)\mathsf{X}-2\left(\sqrt{m}-2\right) \mathsf{X}^2 + 2\left(\sqrt{m}-2\right)^2  \\
    &  \times \mathsf{X}-\left(\sqrt{m}-2\right)^2\mathsf{X}^2\Big) =2\left(1-\frac{1}{\sqrt{m}}\right)\mathsf{X}-\left(1-\frac{1}{\sqrt{m}}\right)^2\mathsf{X}^2,
\end{split}
\end{equation}
where $(a)$ is obtained by using \eqref{eq:ser48} and $(b)$ is the substitution from the results in  \eqref{eq:errorsk=s1}--\eqref{eq:errorsk=s3}. Finally, by exploiting \eqref{eq:defX} into \eqref{eq:SERkappendix}, we obtain the SER in the theorem.}
\vspace{-0.25cm}
\subsection{Proof of Theorem~\ref{theorem:local_convergence}}\label{appen:local_convergence}
{We define $\mathcal{P}^{(G)} = \left\{\mathbf{x}^{(1G)}, \ldots, \mathbf{x}^{(iG)},\ldots, \mathbf{x}^{(IG)}\right\}$ the population generated by Algorithm~\ref{alg:Improved_shade} at the $G$-th generation. After that, the convergence in the probability of the proposed algorithm is
\begin{equation}
    \lim_{G \rightarrow \infty}\mathsf{Pr}\left(\mathcal{P}^{(G)} \cap \mathcal{S}^{\ast}_{\varepsilon} \neq \varnothing \right) = 1,
\end{equation}
where $\mathcal{S}^{\ast}_{\varepsilon}$ is given in \eqref{eq:Sep}. For individual $\mathbf{x}^{(iG)}$, Algorithm \ref{alg:Improved_shade} generates $\pmb{\omega}^{(iG)}$ by utilizing mutation and crossover operators \eqref{eq:mutation} and \eqref{eq:crossover} with the density function expressed as
\begin{equation}
    \Omega\left(\pmb{\omega}^{(iG)}\right) = \begin{cases}
        1, & \mbox{if } \pmb{\omega}^{(iG)} \in \mathcal{S},\\
        0, & \mbox{otherwise}.
    \end{cases}
\end{equation}
Therefore, the probability of $\pmb{\omega}^{(iG)}$ belonging to $\mathcal{S}_{\varepsilon}^{*}$ is 
\begin{equation} \label{eq:PrOmega}
    \mathsf{Pr}\left(\pmb{\omega}^{(iG)} \in \mathcal{S}^{*}_{\varepsilon}\right) = \int_{\mathcal{S}_{\varepsilon}^{*}} \Omega\left(\pmb{\omega}^{(iG)}\right)\mathrm{d}x = \mu_1\left(\mathcal{S}_{\varepsilon}^{\ast}\right),
\end{equation}
where $\mathrm{d}x$ represents the differential elements comprising the phase shifts and beamforming coefficients associated with $\mathcal{S}_{\varepsilon}^{\ast}$. If  $\mathcal{B}^{(G)}$ is the set of individuals after performing the selection on the population $\mathcal{P}^{(G)}$, the probability, which $\mathcal{B}^{(G)}$ does not contain any individuals belonging to the set of the optimal solutions, is
\begin{equation}  \label{eq:PrB}
\begin{split}
    \mathsf{Pr}\left(\mathcal{B}^{(G)} \cap \mathcal{S}^{\ast}_{\varepsilon} = \varnothing\right) & \leq \left(1 - \mathsf{Pr}\left(\pmb{\omega}^{(iG)} \in \mathcal{S}^{\ast}_{\varepsilon}\right)P_{\mathrm{ep}}\right)^{I} \\
    & \stackrel{(a)}{=} \left(1 - \mu_1 (\mathcal{S}^{\ast}) P_{\mathrm{ep}}\right)^{I},
\end{split}
\end{equation}
where $(a)$ is obtained by the measure $\mu_1(\mathcal{S}_\varepsilon^\ast)$ in \eqref{eq:PrOmega}. In \eqref{eq:PrB}, we see that the diversity of the population will gradually enhance as $P_{\mathrm{ep}}$ increases.}

{After the selection,  we denote $\mathcal{Q}^{(G)}$ the set of individuals obtained by performing the local search on $\mathcal{B}^{(G)}$. We recall that the execution of local search is performed in \eqref{eq:neighbor_solution} with $\tilde{\pmb{\omega}}^{(iG)} \in \mathcal{Q}^{(G)}$, and $\pmb{\omega}^{(iG)} \in \mathcal{B}^{(G)}$. The probability density function of $\pmb{\xi}^{(iG)}$ is
\begin{equation} \label{eq:Omega}
   \begin{split}
           \Omega\left(\pmb{\xi}^{(iG)}\right) &= \left(\frac{1}{\tilde{\sigma}\sqrt{2\pi}}\right)^{N + 2MK} \prod_{n = 1}^{N+2MK}e^{\frac{-\left(\xi_{n}^{(iG)}\right)^2}{2\tilde{\sigma}^2}} \\
           & =\left(\frac{1}{\tilde{\sigma}\sqrt{2\pi}}\right)^{N + 2MK}e^{\sum_{n = 1}^{N + 2MK}\frac{-\left(\xi_{n}^{(iG)}\right)^2}{2\tilde{\sigma}^2}},
   \end{split} 
\end{equation}
which is a multivariate function of $N + 2MK$ Gaussian random variables. In this case, the probability of $\tilde{\pmb{\omega}}^{(iG)}$ belonging to $\mathcal{S}^{\ast}_{\varepsilon}$ is derived as follows
\begin{equation}    
\mathsf{Pr}\left(\tilde{\pmb{\omega}}^{(iG)} \in \mathcal{S}^{\ast}_{\varepsilon}\right) = \int_{\tilde{\mathcal{S}}^{\ast}_{\varepsilon}}\Omega(\pmb{\xi}^{(iG)})\text{d}\xi = \mu_2 (\tilde{\mathcal{S}}^{\ast}),
    \label{eq:probability_belong}
\end{equation}
where $\tilde{\mathcal{S}}^{\ast}_{\varepsilon}  = \{\hat{\mathbf{x}} - \pmb{\omega}^{(iG)} | \, \hat{\mathbf{x}} \in \mathcal{S}^{\ast}_{\varepsilon}\}$ and $\mathrm{d}\xi$ represents the differential elements consisting of the phase shifts and beamforming coefficients associated with the measure $\xi$ of $\tilde{\mathcal{S}}^{*}_{\varepsilon}$. Because of each element of individual $\pmb{\omega}^{(iG)}$ in the range of $[-1, 1]$, one observes that each element $   \tilde{\pmb{\omega}}^{(iG)} \in  \tilde{\mathcal{S}}^{\ast}_{\varepsilon}$ satisfies 
$[\tilde{\pmb{\omega}}^{(iG)}]_n \in [-2,2], n = \{1, \ldots, N + 2MK\}$. Consequently, the probability density function $ \Omega\left(\pmb{\xi}^{(iG)}\right)$ is lower bounded by
\begin{equation} \label{eq:Omegav1}
    \Omega\left(\pmb{\xi}^{(iG)}\right) \geq \left(\frac{1}{\tilde{\sigma}\sqrt{2\pi}}\right)^{N + 2 MK}e^{\frac{-2(N+2MK)}{\tilde{\sigma}^2}}, \forall \pmb{\xi}^{(iG)} \in \tilde{\mathcal{S}}^{\ast}_{\varepsilon}.
\end{equation}
Exploiting \eqref{eq:Omega} and \eqref{eq:Omegav1} into  \eqref{eq:probability_belong}, we attain the improvement in the probability of the local search is
\begin{equation} \label{eq:PrGl}
    \begin{split}
    &\mathsf{Pr}\left(\tilde{\pmb{\omega}}^{(iG)} \in \mathcal{S}_{\varepsilon}^{\ast} | \mathcal{B}^{(G)} \cap \mathcal{S}_{\varepsilon}^{\ast} = \varnothing\right)\\ 
    & \geq \mu_2\left(\tilde{\mathcal{S}}^{\ast}_{\varepsilon}\right)\left(\frac{1}{\tilde{\sigma}\sqrt{2\pi}}\right)^{N+2MK}e^{\frac{-2(N+2MK)}{\tilde{\sigma}^2}}, \forall \pmb{\xi}^{(iG)} \in \tilde{\mathcal{S}}^{\ast}_{\varepsilon}.
    \end{split}
\end{equation}
Due to the fact that $\mu_2(\mathcal{S}^{\ast}_{\varepsilon}) = \mu_2 (\tilde{\mathcal{S}}^{\ast}_{\varepsilon})$,  \eqref{eq:PrGl} is reformulated as 
\begin{equation}
    \begin{split}
    &\mathsf{Pr}\left(\tilde{\pmb{\omega}}^{(iG)} \in \mathcal{S}_{\varepsilon}^{\ast} | \mathcal{B}^{(G)} \cap \mathcal{S}_{\varepsilon}^{*} = \varnothing\right) \\
    & \geq \mu_2\left(\mathcal{S}^{\ast}_{\varepsilon}\right)\left(\frac{1}{\tilde{\sigma}\sqrt{2\pi}}\right)^{N + 2MK}e^{\frac{-2(N + 2MK)}{\tilde{\sigma}^2}}, 
    \end{split}
\end{equation}
which holds for $\forall \pmb{\xi}^{(iG)} \in \tilde{\mathcal{S}}^{\ast}_{\varepsilon}$. Conditioned on the failure of the canonical DE, the probability that the local search also failed to find the optimal solution in Algorithm~\ref{alg:Improved_shade} is
\begin{multline} \label{eq:PrLocal1}
\mathsf{Pr}\left(\mathcal{Q}^{(G)} \cap \mathcal{S}_{\varepsilon}^{\ast} = \varnothing | \mathcal{B}^{(G)} \cap \mathcal{S}_{\varepsilon}^{\ast} = \varnothing\right) \le \\ \left(1 - \mu_2\left(\mathcal{S}_{\varepsilon}^{*}\right)\left(\frac{1}{\tilde{\sigma}\sqrt{2\pi}}\right)^{N+2MK}e^{\frac{-2(N+2MK)}{\tilde{\sigma}^2}}\right)^{\tilde{\lambda}}.
\end{multline}
Therefore, the probability that the population contains the optimal solution at the $G$-th generation is computed as
\begin{equation} \label{eq:Probv1}
\begin{split}
         &\mathsf{Pr}\left(\mathcal{Q}^{(G)} \cap \mathcal{S}_{\varepsilon}^{\ast} \neq \varnothing \right) = 1 - \mathsf{Pr}\left(\mathcal{Q}^{(G)} \cap \mathcal{S}_{\varepsilon}^{\ast} = \varnothing \right) \stackrel{(a)}{=} 1  -  \\
         &  \mathsf{Pr}\left(\mathcal{Q}^{(G)} \cap \mathcal{S}_{\varepsilon}^{\ast} = \varnothing | \mathcal{B}^{(G)} \cap \mathcal{S}_{\varepsilon}^{\ast} = \varnothing\right) \mathsf{Pr}\left(\mathcal{B}^{(G)} \cap \mathcal{S}^{\ast}_{\varepsilon} = \varnothing\right),
\end{split}
\end{equation}
where $(a)$ is obtained by the Bayes' theorem. By exploiting \eqref{eq:PrB} and \eqref{eq:PrLocal1} into \eqref{eq:Probv1}, we obtain the result as in the theorem.}
\subsection{Proof of Corollary~\ref{corollary:converge}} \label{Appdixcorollary:converge}
{By virtue of Theorem~\ref{theorem:local_convergence}, one can get the probability that all the individuals after performing the reproduction and local search on the $G$-th population do not belong to $\mathcal{S}^\ast_{\varepsilon}$ as
    \begin{equation} \label{eq:Prv1}
        \mathsf{Pr}\left(\mathcal{Q}^{(G)} \cap \mathcal{S}^*_{\varepsilon} = \varnothing\right) \le 1 - \psi^{(G)},
    \end{equation}
    where $\psi^{(G)}$ is defined 
    for the $G$-th population as
    \begin{multline}
        \psi^{(G)} = 1- \left(1 - \mu_1\left(\mathcal{S}_{\varepsilon}^{\ast}\right)P_{\mathrm{ep}} \right )^{I} \times \\ \left(1 - \mu_2\left(\mathcal{S}_{\varepsilon}^{\ast}\right)\left(\frac{1}{\tilde{\sigma}\sqrt{2\pi}}\right)^{N+2MK}e^{\frac{-2(N + 2MK)}{\tilde{\sigma}^2}}\right)^{{\lambda}_G},
    \end{multline}
   and $\lambda_G$ is the number of individuals in the $G$-th generation. We indeed observe that $\sum_{G=1}^{\infty} \psi^{(G)}$ diverges, i.e., $\sum_{G=1}^{\infty} \psi^{(G)} \rightarrow \infty$ as $G \rightarrow \infty$. Since the selection of Algorithm~\ref{alg:Improved_shade} is the elitist mechanism, the best individual will remain in the population through generations. Hence, the probability that $\mathcal{P}^{(G)}$ does not contain the $\varepsilon$-optimal solution to problem~\eqref{ProBlemv2} is 
    \begin{equation}\label{eq:evaluate_psiG}
     \begin{split}
        \mathsf{Pr}\left(\mathcal{P}^{(G)}\cap \mathcal{S}^*_{\varepsilon} = \varnothing \right) &= \prod_{g=1}^{G-1} \mathsf{Pr}\left(\mathcal{Q}^{(g)}\cap \mathcal{S}^*_{\varepsilon} = \varnothing \right) \\
        &\le \prod_{g=1}^{G-1} \left(1- \psi^{(g)}\right), 
    \end{split}
    \end{equation}
where the last inequality in \eqref{eq:evaluate_psiG} is obtained based on \eqref{eq:Prv1}.  We further observe the following property
\begin{equation} \label{eq:limprodpsiG}
        \begin{split}
            & 0<\prod_{g=1}^{G-1} \left(1- \psi^{(g)}\right) = e^{\log \left( \prod_{g=1}^{G-1} \left(1- \psi^{(g)} \right) \right)} = e^{\sum_{g=1}^{G-1} \log \left(1- \psi^{(g)}\right)}\\ &  \stackrel{(a)}{\le} e^ {-\sum_{g=1}^{G-1} \psi^{(g)}} \rightarrow e^{-\infty}=0 \text{ as } G\rightarrow \infty,
        \end{split}
    \end{equation} where $(a)$ is obtained by the identity $\log (1-x) +x \le 0 \ \forall x \in [0,1)$. Consequently, we   obtain the following result
    \begin{equation}
        \begin{split}
            \lim_{G\rightarrow \infty}&\ \mathsf{Pr} \left(\mathcal{P}^{(G)} \cap \mathcal{S}^*_{\varepsilon} \neq \varnothing \right)= 1- \lim_{G\rightarrow \infty}\ \mathsf{Pr}\left(\mathcal{P}^{(G)} \cap \mathcal{S}^*_{\varepsilon} = \varnothing \right)\\ & \stackrel{(a)}{\geq} 1- \lim_{G\rightarrow \infty} \prod_{g=1}^{G-1} \left(1- \psi^{(g)}\right) \stackrel{(b)}{=} 
            1,
        \end{split}
    \end{equation}
    where $(a)$ is based on \eqref{eq:evaluate_psiG} and $(b)$ relies on \eqref{eq:limprodpsiG} together with the sandwich theorem. We therefore complete the proof.}

\bibliographystyle{IEEEtran}
\bibliography{references}

\begin{thebibliography}{10}
\providecommand{\url}[1]{#1}
\csname url@samestyle\endcsname
\providecommand{\newblock}{\relax}
\providecommand{\bibinfo}[2]{#2}
\providecommand{\BIBentrySTDinterwordspacing}{\spaceskip=0pt\relax}
\providecommand{\BIBentryALTinterwordstretchfactor}{4}
\providecommand{\BIBentryALTinterwordspacing}{\spaceskip=\fontdimen2\font plus
\BIBentryALTinterwordstretchfactor\fontdimen3\font minus
  \fontdimen4\font\relax}
\providecommand{\BIBforeignlanguage}[2]{{%
\expandafter\ifx\csname l@#1\endcsname\relax
\typeout{** WARNING: IEEEtran.bst: No hyphenation pattern has been}%
\typeout{** loaded for the language `#1'. Using the pattern for}%
\typeout{** the default language instead.}%
\else
\language=\csname l@#1\endcsname
\fi
#2}}
\providecommand{\BIBdecl}{\relax}
\BIBdecl

\bibitem{Chien2024gecco}
T.~V. Chien, B.~T. Duc, H.~V.~D. Luong, H.~T.~T. Binh, H.~Q. Ngo, and
  S.~Chatzinotas, ``Solving indefinite communication reliability optimization
  for {RIS}-aided mobile systems by an improved differential evolution,'' in
  \emph{The Genetic and Evolutionary Computation Conference 2024 (GECCO '24)},
  2024.

\bibitem{hillger2019terahertz}
P.~Hillger, J.~Grzyb, R.~Jain, and U.~R. Pfeiffer, ``Terahertz imaging and
  sensing applications with silicon-based technologies,'' \emph{IEEE
  Transactions on Terahertz Science and Technology}, vol.~9, no.~1, pp. 1--19,
  2019.

\bibitem{song2022terahertz}
H.-J. Song and N.~Lee, ``Terahertz communications: Challenges in the next
  decade,'' \emph{IEEE Transactions on Terahertz Science and Technology},
  vol.~12, no.~2, pp. 105--117, 2022.

\bibitem{shafie2023terahertz}
A.~Shafie, N.~Yang, C.~Han, J.~M. Jornet, M.~Juntti, and T.~Kürner,
  ``Terahertz communications for 6{G} and beyond wireless networks: Challenges,
  key advancements, and opportunities,'' \emph{IEEE Network}, vol.~37, no.~3,
  pp. 162--169, 2023.

\bibitem{jagannath2021redefining}
A.~Jagannath, J.~Jagannath, and T.~Melodia, ``Redefining wireless communication
  for 6{G}: Signal processing meets deep learning with deep unfolding,''
  \emph{IEEE Transactions on Artificial Intelligence}, vol.~2, no.~6, pp.
  528--536, 2021.

\bibitem{ahammed2023vision}
T.~B. Ahammed, R.~Patgiri, and S.~Nayak, ``A vision on the artificial
  intelligence for 6{G} communication,'' \emph{ICT Express}, vol.~9, no.~2, pp.
  197--210, 2023.

\bibitem{ji2021survey}
B.~Ji, Y.~Wang, K.~Song, C.~Li, H.~Wen, V.~G. Menon, and S.~Mumtaz, ``A survey
  of computational intelligence for 6{G}: Key technologies, applications and
  trends,'' \emph{IEEE Transactions on Industrial Informatics}, vol.~17,
  no.~10, pp. 7145--7154, 2021.

\bibitem{jain2023dynamic}
P.~Jain, A.~Gupta, N.~Kumar, and M.~Guizani, ``Dynamic and efficient spectrum
  utilization for 6{G} with {THz}, {mmWave}, and {RF Band},'' \emph{IEEE
  Transactions on Vehicular Technology}, vol.~72, no.~3, pp. 3264--3273, 2023.

\bibitem{van2020joint}
T.~Van~Chien, E.~Bj{\"o}rnson, and E.~G. Larsson, ``Joint power allocation and
  load balancing optimization for energy-efficient cell-free massive {MIMO}
  networks,'' \emph{IEEE Transactions on Wireless Communications}, vol.~19,
  no.~10, pp. 6798--6812, 2020.

\bibitem{malik2021energy}
U.~M. Malik, M.~A. Javed, S.~Zeadally, and S.~ul~Islam, ``Energy-efficient fog
  computing for 6{G}-enabled massive {IoT}: Recent trends and future
  opportunities,'' \emph{IEEE Internet of Things Journal}, vol.~9, no.~16, pp.
  14\,572--14\,594, 2021.

\bibitem{van2021reconfigurable}
T.~Van~Chien, H.~Q. Ngo, S.~Chatzinotas, M.~Di~Renzo, and B.~Ottersten,
  ``Reconfigurable intelligent surface-assisted cell-free massive {MIMO}
  systems over spatially-correlated channels,'' \emph{IEEE Transactions on
  Wireless Communications}, vol.~21, no.~7, pp. 5106--5128, 2021.

\bibitem{wu2019intelligent}
Q.~Wu and R.~Zhang, ``Intelligent reflecting surface enhanced wireless network
  via joint active and passive beamforming,'' \emph{IEEE Transactions on
  Wireless Communications}, vol.~18, no.~11, pp. 5394--5409, 2019.

\bibitem{wu2021intelligent}
Q.~Wu, S.~Zhang, B.~Zheng, C.~You, and R.~Zhang, ``Intelligent reflecting
  surface-aided wireless communications: A tutorial,'' \emph{IEEE Transactions
  on Communications}, vol.~69, no.~5, pp. 3313--3351, 2021.

\bibitem{elmossallamy2020reconfigurable}
M.~A. ElMossallamy, H.~Zhang, L.~Song, K.~G. Seddik, Z.~Han, and G.~Y. Li,
  ``Reconfigurable intelligent surfaces for wireless communications:
  Principles, challenges, and opportunities,'' \emph{IEEE Transactions on
  Cognitive Communications and Networking}, vol.~6, no.~3, pp. 990--1002, 2020.

\bibitem{pan2021reconfigurable}
C.~Pan, H.~Ren, K.~Wang, J.~F. Kolb, M.~Elkashlan, M.~Chen, M.~Di~Renzo,
  Y.~Hao, J.~Wang, A.~L. Swindlehurst \emph{et~al.}, ``Reconfigurable
  intelligent surfaces for 6{G} systems: Principles, applications, and research
  directions,'' \emph{IEEE Communications Magazine}, vol.~59, no.~6, pp.
  14--20, 2021.

\bibitem{luo2021spatial}
S.~Luo, P.~Yang, Y.~Che, K.~Yang, K.~Wu, K.~C. Teh, and S.~Li, ``Spatial
  modulation for {RIS}-assisted uplink communication: Joint power allocation
  and passive beamforming design,'' \emph{IEEE Transactions on Communications},
  vol.~69, no.~10, pp. 7017--7031, 2021.

\bibitem{you2020energy}
L.~You, J.~Xiong, D.~W.~K. Ng, C.~Yuen, W.~Wang, and X.~Gao, ``Energy
  efficiency and spectral efficiency tradeoff in {RIS}-aided multiuser {MIMO}
  uplink transmission,'' \emph{IEEE Transactions on Signal Processing},
  vol.~69, pp. 1407--1421, 2020.

\bibitem{van2023long}
T.~Van~Chien, L.~T. Tu, W.~Khalid, H.~Yu, S.~Chatzinotas, and M.~Di~Renzo,
  ``Ris-assisted wireless communications: Long-term versus short-term phase
  shift designs,'' \emph{IEEE Transactions on Communications}, pp. 1--1, 2023.

\bibitem{van2023phase}
T.~Van~Chien, C.~V. Le, H.~T.~T. Binh, H.~Q. Ngo, and S.~Chatzinotas, ``Phase
  shift design for {RIS}-aided cell-free {M}assive {MIMO} with improved
  differential evolution,'' \emph{IEEE Wireless Communications Letters}, 2023.

\bibitem{jiang2023capacity}
W.~Jiang and H.~D. Schotten, ``Capacity analysis and rate maximization design
  in {RIS}-aided uplink multi-user {MIMO},'' in \emph{2023 IEEE Wireless
  Communications and Networking Conference (WCNC)}.\hskip 1em plus 0.5em minus
  0.4em\relax IEEE, 2023, pp. 1--6.

\bibitem{bie2023user}
Q.~Bie, Z.~Liang, Y.~Liu, Q.~Wu, X.~Zhao, and X.~Y. Zhang, ``User association
  for reconfigurable intelligent surfaces aided cell-free networks,''
  \emph{IEEE Transactions on Vehicular Technology}, pp. 1--13, 2023.

\bibitem{shah2022statistical}
S.~W.~H. Shah, A.~N. Mian, S.~Mumtaz, A.~Al-Dulaimi, C.-L. I, and J.~Crowcroft,
  ``Statistical {QoS} analysis of reconfigurable intelligent surface-assisted
  {D2D} communication,'' \emph{IEEE Transactions on Vehicular Technology},
  vol.~71, no.~7, pp. 7343--7358, 2022.

\bibitem{nguyen2023dilemma}
T.~N. Nguyen, L.-T. Tu, P.~Fazio, T.~Van~Chien, C.~V. Le, H.~T.~T. Binh, and
  M.~Voznak, ``On the dilemma of reliability or security in unmanned aerial
  vehicle communications assisted by energy harvesting relaying,'' \emph{IEEE
  Journal on Selected Areas in Communications}, 2023.

\bibitem{yang2020outage}
L.~Yang, Y.~Yang, D.~B. da~Costa, and I.~Trigui, ``Outage probability and
  capacity scaling law of multiple {RIS}-aided networks,'' \emph{IEEE Wireless
  Communications Letters}, vol.~10, no.~2, pp. 256--260, 2020.

\bibitem{zheng2022robust}
H.~Zheng, C.~Pan, C.~Zhang, X.~Li, C.~He, Y.~Yang, and M.~Dai, ``Robust
  transmission design for {RIS}-aided wireless communication with both
  imperfect {CSI} and transceiver hardware impairments,'' \emph{IEEE Internet
  of Things Journal}, vol.~10, no.~5, pp. 4621--4635, 2022.

\bibitem{ye2020joint}
J.~Ye, S.~Guo, and M.-S. Alouini, ``Joint reflecting and precoding designs for
  {SER} minimization in reconfigurable intelligent surfaces assisted {MIMO}
  systems,'' \emph{IEEE Transactions on Wireless Communications}, vol.~19,
  no.~8, pp. 5561--5574, 2020.

\bibitem{law2018symbol}
K.~L. Law and C.~Masouros, ``Symbol error rate minimization precoding for
  interference exploitation,'' \emph{IEEE Transactions on Communications},
  vol.~66, no.~11, pp. 5718--5731, 2018.

\bibitem{basar2020reconfigurable}
E.~Basar, ``Reconfigurable intelligent surface-based index modulation: A new
  beyond {MIMO} paradigm for 6{G},'' \emph{IEEE Transactions on
  Communications}, vol.~68, no.~5, pp. 3187--3196, 2020.

\bibitem{li2022evolutionary}
J.-Y. Li, Z.-H. Zhan, and J.~Zhang, ``Evolutionary computation for expensive
  optimization: A survey,'' \emph{Machine Intelligence Research}, vol.~19,
  no.~1, pp. 3--23, 2022.

\bibitem{binh2023ensemble}
H.~T.~T. Binh, L.~V. Cuong, T.~B. Thang, and N.~H. Long, ``Ensemble
  multifactorial evolution with biased skill-factor inheritance for many-task
  optimization,'' \emph{IEEE Transactions on Evolutionary Computation}, pp.
  1--1, 2022.

\bibitem{lipare2021fuzzy}
A.~Lipare, D.~R. Edla, and R.~Dharavath, ``Fuzzy rule generation using modified
  {PSO} for clustering in wireless sensor networks,'' \emph{IEEE Transactions
  on Green Communications and Networking}, vol.~5, no.~2, pp. 846--857, 2021.

\bibitem{xue2021routing}
Z.~Xue, ``Routing optimization of sensor nodes in the internet of things based
  on genetic algorithm,'' \emph{IEEE Sensors Journal}, vol.~21, no.~22, pp.
  25\,142--25\,150, 2021.

\bibitem{peng2021analysis}
Z.~Peng, T.~Li, C.~Pan, H.~Ren, W.~Xu, and M.~Di~Renzo, ``Analysis and
  optimization for {RIS}-aided multi-pair communications relying on statistical
  {CSI},'' \emph{IEEE Transactions on Vehicular Technology}, vol.~70, no.~4,
  pp. 3897--3901, 2021.

\bibitem{huang2022placement}
P.-Q. Huang, Y.~Zhou, K.~Wang, and B.-C. Wang, ``Placement optimization for
  multi-{IRS}-aided wireless communications: {A}n adaptive differential
  evolution algorithm,'' \emph{IEEE Wireless Communications Letters}, vol.~11,
  no.~5, pp. 942--946, 2022.

\bibitem{aurenhammer1991voronoi}
F.~Aurenhammer, ``Voronoi diagrams—a survey of a fundamental geometric data
  structure,'' \emph{ACM Computing Surveys (CSUR)}, vol.~23, no.~3, pp.
  345--405, 1991.

\bibitem{van2022reconfigurable}
T.~Van~Chien, H.~Q. Ngo, S.~Chatzinotas, and B.~Ottersten, ``Reconfigurable
  intelligent surface-assisted massive {MIMO}: Favorable propagation, channel
  hardening, and rank deficiency [lecture notes],'' \emph{IEEE Signal
  Processing Magazine}, vol.~39, no.~3, pp. 97--104, 2022.

\bibitem{gradshteyn2014table}
I.~S. Gradshteyn and I.~M. Ryzhik, \emph{Table of integrals, series, and
  products}.\hskip 1em plus 0.5em minus 0.4em\relax Academic press, 2014.

\bibitem{abramowitz1964handbook}
M.~Abramowitz and I.~A. Stegun, \emph{Handbook of Mathematical Functions with
  Formulas, Graphs, and Mathematical Tables}, 9th~ed.\hskip 1em plus 0.5em
  minus 0.4em\relax New York: Dover Publications, 1964.

\bibitem{luo2008dynamic}
Z.-Q. Luo and S.~Zhang, ``Dynamic spectrum management: Complexity and
  duality,'' \emph{IEEE journal of selected topics in signal processing},
  vol.~2, no.~1, pp. 57--73, 2008.

\bibitem{van2018joint}
T.~Van~Chien, E.~Bj{\"o}rnson, and E.~G. Larsson, ``Joint pilot design and
  uplink power allocation in multi-cell massive {MIMO} systems,'' \emph{IEEE
  Transactions on Wireless Communications}, vol.~17, no.~3, pp. 2000--2015,
  2018.

\bibitem{shade}
R.~Tanabe and A.~Fukunaga, ``Success-history based parameter adaptation for
  differential evolution,'' in \emph{2013 IEEE congress on evolutionary
  computation}.\hskip 1em plus 0.5em minus 0.4em\relax IEEE, 2013, pp. 71--78.

\bibitem{phan2023deep}
V.~D. Nguyen, D.~K. Phan, and V.~C. Trinh, ``Deep learning-assisted channel
  estimation in frequency selective {UWA} communication systems,'' \emph{IEEE
  Access}, 2023.

\bibitem{demir2022channel}
{\"O}.~T. Demir and E.~Bj{\"o}rnson, ``Is channel estimation necessary to
  select phase-shifts for ris-assisted massive mimo?'' \emph{IEEE Transactions
  on Wireless Communications}, vol.~21, no.~11, pp. 9537--9552, 2022.

\end{thebibliography}

\end{document}